\newcommand{\SB}{\mathbf{SB}} % Standard Borel
\newcommand{\Meas}{\mathbf{Meas}}
\newcommand{\Krn}{\mathbf{Krn}}
\newcommand{\cat}[1][C]{\mathbf{C}}
\newcommand{\Pol}{\mathbf{Pol}}
\newcommand{\Klm}{{\SB_\Giry}} %% Kleisli category
\newcommand{\BL}{\mathbf{BL}_\sigma}
\newcommand{\Lpfop}[1][p]{\mathsf{S}_{#1}} %% Contravariant Lp functor
\newcommand{\Lpf}[1][p]{\mathsf{T}_{#1}} %% Covariant Lp functor
\newcommand{\Giry}{\mathsf G}
\newcommand{\abscont}[1][\cdot]{\mathcal{M}^{\ll #1}}
\newcommand{\Borel}{\mathsf{B}}
\newcommand{\ca}{\mathsf{ca}} %% Measures of bounded total variation
\newcommand{\mr}{\mathsf{rr}}%Riesz Representation transformation
\newcommand{\dr}{\mathsf{mr}}%Measure Representation transformation
\newcommand{\fr}{\mathsf{fr}}
\newcommand{\rn}{\mathsf{rn}}%Radon-Nikodym transformation
\newcommand{\Lps}[1][p]{\mathrm{L}_{#1}} %% Lp spaces as objects
\newcommand{\one}{1}
\newcommand{\R}{\mathbb{R}}	%% Real numbers
\newcommand{\N}{\mathbb{N}} %% Naturals
\newcommand{\Ct}{2^\N} %% Cantor space
\newcommand{\IdMor}{\mathrm{id}} %% Idenity arrow
\newcommand{\klar}{\rightarrowtriangle} %% Kleisli arrows
\newcommand{\op}{^{\mathrm{op}}}
\newcommand{\limP}{\varprojlim}
\newcommand{\klcirc}{\bullet} %% Kleisli composition
\newcommand{\Hom}[1]{\mathrm{Hom}(#1)}
\newcommand{\inv}{^{-1}}
\newcommand{\IFF}{\Leftrightarrow}
\newcommand{\Absval}[1]{\left| #1 \right|}
\newcommand{\dg}{^{\dagger}}
\newcommand{\ev}{ev}
\newcommand{\sigAlg}{\mathcal{S}}
\newcommand{\topol}{\mathcal{T}}
\newcommand{\Norm}[1]{\left\lVert #1 \right\rVert}
\newcommand{\EXP}[1]{\mathbb{E}\left[ #1 \right]}
\newcommand{\unit}{\left[0,1\right]}
\newcommand{\sod}{^\sigma} %% \sigma-order dual
\newcommand{\sseq}{\hspace{-1pt}=\hspace{-1pt}} %% Space Saving equality
\newcommand{\ssint}[1]{\hspace{-1pt}\int_{#1}\hspace{-1pt}} %% Space Saving integral
\newcommand{\norm}{\mathcal{N}}
\newcommand{\SOT}{{\longrightarrow}_\mathrm{s}}% Convergence in SOT
\newcommand{\lsem}{\llbracket} %% Kernel semantics
\newcommand{\rsem}{\rrbracket}
\newenvironment{myProof}[1]
	{
	\noindent \textbf{Proof of #1.}\\
	}
	{
	\begin{flushright}$\blacksquare$\end{flushright}
	}
\newcommand{\pars}[1]{\left( #1 \right)} %% variable size parentheses
\theoremstyle{plain}
\theoremstyle{plain}
\newtheorem{theorem}{Theorem}
\newtheorem{proposition}[theorem]{Proposition}
\newtheorem{corollary}[theorem]{Corollary}
\newtheorem{definition}[theorem]{Definition}
\newtheorem{example}[theorem]{Example}
\begin{document}

%% Title information
\title{Borel Kernels and their Approximation, Categorically}         %% [Short Title] is optional;
                                        %% when present, will be used in
                                        %% header instead of Full Title.
%\titlenote{with title note}             %% \titlenote is optional;
                                        %% can be repeated if necessary;
                                        %% contents suppressed with 'anonymous'
%\subtitle{Subtitle}                     %% \subtitle is optional
%\subtitlenote{with subtitle note}       %% \subtitlenote is optional;
                                        %% can be repeated if necessary;
                                        %% contents suppressed with 'anonymous'

%% Author information
%% Contents and number of authors suppressed with 'anonymous'.
%% Each author should be introduced by \author, followed by
%% \authornote (optional), \orcid (optional), \affiliation, and
%% \email.
%% An author may have multiple affiliations and/or emails; repeat the
%% appropriate command.
%% Many elements are not rendered, but should be provided for metadata
%% extraction tools.

\author{Fredrik Dahlqvist}
%\authornote{with author1 note}          %% \authornote is optional;
                                        %% can be repeated if necessary
%\orcid{nnnn-nnnn-nnnn-nnnn}             %% \orcid is optional
\affiliation{
%  \position{Position1}
%  \department{Department1}              %% \department is recommended
  \institution{University College London}            %% \institution is required
%  \streetaddress{Street1 Address1}
%  \city{City1}
%  \state{State1}
%  \postcode{Post-Code1}
%  \country{Country1}                    %% \country is recommended
}
\email{f.dahlqvist@ucl.ac.uk}          %% \email is recommended

\author{Vincent Danos}
%\authornote{with author2 note}          %% \authornote is optional;
                                        %% can be repeated if necessary
%\orcid{nnnn-nnnn-nnnn-nnnn}             %% \orcid is optional
\affiliation{
%  \position{Position2a}
%  \department{Department2a}             %% \department is recommended
  \institution{Ecole Normale Sup\'{e}rieure Paris}           %% \institution is required
%  \streetaddress{Street2a Address2a}
%  \city{City2a}
%  \state{State2a}
%  \postcode{Post-Code2a}
%  \country{Country2a}                   %% \country is recommended
}
\email{Vincent.Danos@ens.fr}         %% \email is recommended

\author{Ilias Garnier}
%\authornote{with author3 note}          %% \authornote is optional;
                                        %% can be repeated if necessary
%\orcid{nnnn-nnnn-nnnn-nnnn}             %% \orcid is optional
\affiliation{
%  \position{Position1}
%  \department{Department1}              %% \department is recommended
  \institution{Ecole Normale Sup\'{e}rieure Paris}            %% \institution is required
%  \streetaddress{Street1 Address1}
%  \city{City1}
%  \state{State1}
%  \postcode{Post-Code1}
%  \country{Country1}                    %% \country is recommended
}
\email{Ilias.Garnier@ens.fr}          %% \email is recommended

\author{Alexandra Silva}
%\authornote{with author4 note}          %% \authornote is optional;
                                        %% can be repeated if necessary
%\orcid{nnnn-nnnn-nnnn-nnnn}             %% \orcid is optional
\affiliation{
%  \position{Position1}
%  \department{Department1}              %% \department is recommended
  \institution{University College London}            %% \institution is required
%  \streetaddress{Street1 Address1}
%  \city{City1}
%  \state{State1}
%  \postcode{Post-Code1}
%  \country{Country1}                    %% \country is recommended
}
\email{alexandra.silva@ucl.ac.uk}          %% \email is recommended

%% Abstract
%% Note: \begin{abstract}...\end{abstract} environment must come
%% before \maketitle command
\begin{abstract}
 This paper introduces a categorical framework to study the exact and approximate semantics of probabilistic programs. We construct a dagger symmetric monoidal category of Borel kernels where the dagger-structure is given by Bayesian inversion. We show functorial bridges between this category and categories of Banach lattices which formalize the move from kernel-based semantics to predicate transformer (backward) or state transformer (forward) semantics. These bridges are related by natural transformations, and we show in particular that the Radon-Nikodym and Riesz representation theorems - two pillars of probability theory - define natural transformations.

With the mathematical infrastructure in place, we present a generic and endogenous approach to approximating kernels on standard Borel spaces which exploits the involutive structure of our category of kernels. The approximation can be formulated in several equivalent ways by using the functorial bridges and natural transformations described above. Finally, we show that for sensible discretization schemes, every Borel kernel can be approximated by kernels on finite spaces, and that these approximations converge for a natural choice of topology.

We illustrate the theory by showing two examples of how approximation can effectively be used in practice: Bayesian inference and the Kleene $\ast$ operation of ProbNetKAT.
\end{abstract}

%% 2012 ACM Computing Classification System (CSS) concepts
%% Generate at 'http://dl.acm.org/ccs/ccs.cfm'.
%\begin{CCSXML}
%<ccs2012>
%<concept>
%<concept_id>10011007.10011006.10011008</concept_id>
%<concept_desc>Software and its engineering~General programming languages</concept_desc>
%<concept_significance>500</concept_significance>
%</concept>
%<concept>
%<concept_id>10003456.10003457.10003521.10003525</concept_id>
%<concept_desc>Social and professional topics~History of programming languages</concept_desc>
%<concept_significance>300</concept_significance>
%</concept>
%</ccs2012>
%\end{CCSXML}

%\ccsdesc[500]{Software and its engineering~General programming languages}
%\ccsdesc[300]{Social and professional topics~History of programming languages}
%% End of generated code

%% Keywords
%% comma separated list
\keywords{Probabilistic programming, probabilistic semantics, Markov process, Bayesian inference, approximation}  %% \keywords are mandatory in final camera-ready submission

%% \maketitle
%% Note: \maketitle command must come after title commands, author
%% commands, abstract environment, Computing Classification System
%% environment and commands, and keywords command.
\maketitle
\setlength{\abovedisplayskip}{2pt}
\setlength{\belowdisplayskip}{1pt}
\setlength{\dbltextfloatsep}{9pt}
\setlength{\parskip}{2pt}

\section{Introduction}

Finding a good category in which to study probabilistic programs is a subject of active research \cite{HeunenEtAl,KozenLICS16,fossacs2017,staton2017commutative}. In this paper we present a dagger symmetric monoidal category of kernels whose dagger-structure is given by Bayesian inversion.  The advantages of this new category are two-fold.

Firstly, the most important new construct introduced by probabilistic programming, \textit{viz. Bayesian inversion}, is interpreted completely straightforwardly by the $\dagger$-operation which is native to our category. In particular we never leave the world of kernels and we therefore do not require any normalization construct. Consider for example the following simple Bayesian inference problem in Anglican (\cite{anglican2014}) 

\begin{lstlisting}
(defquery example
  (let [x (sample (normal 0 1))]
    (observe (normal x 1) 0.5)
    (> x 1)))
\end{lstlisting}

\noindent The semantics of this program is build easily and compositionally in our category:
\begin{itemize}[leftmargin=*]
\item The second line builds a Borel space equipped with a normally distributed probability measure -- an object $(\R,\mu)$ of our category.
\item The \texttt{(normal x 1)} instruction builds a Borel kernel -- a morphism $f: (\R,\mu) \to (\R,\nu)$ in our category.
\item The \texttt{observe} statement builds the Bayesian inverse of the kernel -- the morphism $f\dg: (\R,\nu) \to (\R,\mu)$ in our $\dagger$-category.
\item Finally, the kernel $f\dg$ is evaluated, i.e. the denotation of the program above is $f\dg(0.5)(\left]1,\infty\right[)$.
\end{itemize}  
\noindent The functoriality of $\dagger$ ensures compositionality.

Second, since Bayesian inference problems are in general very hard to compute (although the one given above has an analytical solution), it makes sense to seek approximate solutions, i.e. \emph{approximate denotations} to probabilistic programs. As we will show, our category of kernels comes equipped with a generic and endogenous approximating scheme which relies on its involutive structure and on the structure of standard Borel spaces. Moreover, this approximation scheme can be shown to converge for any choice of kernel for a natural choice of topology.
\paragraph{Main contributions.}  

\begin{enumerate}[leftmargin=*]
\item We build a category $\Krn$ of Borel kernels (\S \ref{sec:Krn}) and we show how two kernels which agree almost everywhere %(a.e.) 
can be identified under a categorical quotient operation. This  technical construction is what allows us to define \emph{Bayesian inversion} as an involutive functor, denoted $\dagger$. This is a key technical improvement on \cite{fossacs2017} where the $\dagger$-structure\footnote{Suggested to us by Chris Heunen.} was hinted at but was not functorial. We show that $\Krn$ is a dagger symmetric monoidal category.  
\item  We introduce the category $\BL$ of Banach lattices and $\sigma$-order continuous positive operators as well as the K\"{o}the dual functor $(-)^\sigma: \BL\op\to\BL$ (\S\ref{sec:BL}). These will play a central role in studying convergence of our approximation schemes.%The most important examples of Banach lattices from our perspective are the $\Lps$-spaces $\Lps(X,\mu)$. 

\item We provide the first\footnote{To the best of our knowledge.} categorical understanding of the Radon-Nikodym and the Riesz representation theorems. These arise as natural transformations between two functors relating kernels and Banach lattices (\S\ref{sec:BackForth}). 

%\item In \S\ref{sec:BackForth}  we relate the world of kernels -- embodied by our category $\Krn$ -- to the world of Banach lattices and operators via the functors $\Lpfop :\Krn\to \BL\op$ building $\Lps$-spaces and predicate transformers, and the functor $\abscont: \Krn\to \BL$ building states (probabilities) and state transformers. We define natural transformations between these pictures, one of which is a categorical incarnation of the Radon-Nikodym theorem, and another of the Riesz representation theorem. %To the best of our knowledge the categorical meaning of these famous theorems has never been noticed before.

\item We show how the $\dagger$-structure of $\Krn$ can be exploited to approximate kernels by averaging (\S\ref{sec:Approx}).  Due to an important structural feature of $\Krn$ (Th.~\ref{thm:ccdRepresentation}) every kernel in $\Krn$ can be approximated by kernels on {\em finite} spaces. 

\item We show a natural class of approximations schemes where the sequence of approximating kernels converges to the kernel to be approximated. The notion of convergence is given naturally by moving to $\BL$ and considering convergence in the Strong Operator Topology (\S\ref{sec:Conv}).

\item 
We apply our theory of kernel approximations to two practical applications (\S\ref{sec:Applications}). First, we show how Bayesian inference can be performed approximately by showing that the $\dagger$-operation commutes with taking approximations. 
%We illustrate approximate Bayesian inference on the Anglican example given above. 
%Second, we show how our approximations can be used to find the stationary distributions of stochastic processes. 
%By approximating an Orstein-Uhlenbeck process we show that the stationary distributions of the approximants converges to the stationary distribution of the full process. 
Secondly, we consider the case of ProbNetKAT, a language developed in  \cite{2016:probnetkat, 2017:CantorScott} to probabilistically reason about networks. ProbNetKAT includes a Kleene star operator $(-)^*$ with a complex semantics which has proved hard to approximate. We show that $(-)^*$ can be approximated, and that the approximation converges.
%, and show how to approximate a ProbNETKAT program containing $(-)^*$ generating a continuous distribution on the Cantor space.
\end{enumerate}
All the proofs can be found in the Appendix.

\paragraph{Related work.} 
Quasi-Borel sets have recently been proposed as a semantic framework for higher-order probabilistic programs in \cite{HeunenEtAl}. The main differences with our approach are: (i) unlike \cite{HeunenEtAl, staton2017commutative} we never leave the realm of kernels, and in particular we never need to worry about normalization. This makes the interpretation of \texttt{observe} statements, i.e. of Bayesian inversion, simpler and more natural. However, (ii) unlike the quasi-Borel sets of \cite{HeunenEtAl}, our category is not Cartesian closed. We can therefore not give a semantics to all higher-order programs. This shortcoming is partly mitigated by the fact that the category of Polish space, on which our category ultimately rests, does have access to many function spaces, in particular all the spaces of functions whose domain is locally compact. We can thus in principle provide a semantics to higher-order programs, provided that $\lambda$-abstraction is restricted to locally compact spaces like the reals and the integers, although this won't be investigated in this paper.

The approximation of probabilistic kernels has been a topic of investigation in theoretical computer science for nearly twenty years (see e.g.\@ \cite{desharnais2000approximating,
danos2003conditional,desharnais2004metrics,ampba}), and for much longer in the mathematical literature (e.g.\@ \cite{kim1972approximation}). Our results build on the formalism developed in \cite{ampba} with the following differences: (i) we can approximate kernels, their associated \emph{stochastic operator} (backward predicate transformer), or their associated \emph{Markov operator} (forward state transformer) with equivalent ease, and move freely across the three formalisms. (ii) Given a kernel $f: X\klar Y$, we can define its approximation $f' :X'\to Y'$ along any quotients $X'$ of $X$ and $Y'$ of $Y$ as in \cite{ampba}, but we can also `internalize' the approximation as a kernel $f^*: X\to Y$ of the original type. Morally $f'$ and $f^*$ are the same approximation, but the second approximant, being of the same type as the original kernel, can be compared with it. In particular it becomes possible to study the convergence of ever finer approximations, which we do in Section \ref{sec:Conv}. Finally, (iii) we opt to work with Banach lattices rather than the normed cones of \cite{selinger2004towards,ampba} because it allows us to formulate the operator side of the theory very naturally, and it connects to a large body of classic mathematical results (\cite{aliprantis,2012:zaanenIntroduction}) which have been used in the semantics of probabilistic programs as far back as Kozen's seminal \cite{1981:kozenProbSemantics}.

\section{A category of Borel kernels}\label{sec:Krn}

%In this section and the next, we introduce the categorical universe in which our approximation theory lives. We must however stress that whilst our focus is on approximation, the categorical infrastructure which we will put in place can also be considered as a natural home for the denotational semantics of any first-order probabilistic programming language.  This claim will be made more precise in \S \ref{sec:Applications:Bayes}.

In \cite{fossacs2017} the first three authors presented a category of Borel kernels similar in spirit to the construction of this section, but with a major shortcoming. As we will shortly see, our category $\Krn$ of Borel kernels can be equipped with an involutive functor -- a \emph{dagger} operation $\dagger$ in the terminology of \cite{selinger2007dagger} -- which captures the notion of \emph{Bayesian inversion} and is absolutely crucial to everything that follows. In \cite{fossacs2017} this operation had merely been identified as a map, i.e.\@ not even as a functor. In this section we show that Bayesian inversion does indeed define a $\dagger$-structure on a more sophisticated -- but measure-theoretically very natural -- category of kernels.

\subsection{Standard Borel spaces and the Giry monad}
A \emph{standard Borel space} -- or \emph{SB space} for short -- is a measurable space $(X,\sigAlg)$ for which there exists a Polish topology $\topol$ on $X$ whose Borel sets are the elements of $\sigAlg$, i.e. such that $\sigAlg=\sigma(\topol)$ (see e.g. \cite{kechris} for an overview). Let us write $\SB$ for the category of standard Borel spaces and measurable maps. One key structural feature of $\SB$ is the following:

\begin{theorem}\label{thm:ccdRepresentation}
Every $\SB$ object is a limit of a countable co-directed diagram of finite spaces.
\end{theorem}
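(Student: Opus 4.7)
The plan is to invoke the Borel isomorphism theorem from descriptive set theory (see e.g.\@ \cite{kechris}) to reduce the statement to a very small number of canonical cases. That theorem says that, up to Borel isomorphism, an $\SB$ object is determined by its cardinality, and only three possibilities arise: a finite discrete set, the countably infinite discrete space $\N$, or the Cantor space $2^\N$ (equivalently $\R$ or $\unit$). Because being expressible as a countable co-directed limit of finite $\SB$-objects is invariant under isomorphism in $\SB$, it suffices to exhibit such a presentation for one representative of each class.

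For the Cantor space this is essentially tautological: $2^\N$ is the limit in $\Meas$, and therefore in $\SB$, of the diagram $\{2^n\}_{n\in\N}$ whose transition maps $2^{n+1}\to 2^n$ drop the last coordinate. The underlying set of the inverse limit is in bijection with the set of infinite binary sequences via the compatibility condition, and the $\sigma$-algebra generated by the projections onto each $2^n$ coincides with the cylinder-generated Borel $\sigma$-algebra on $2^\N$. This case will serve as the template for the verifications in the other two.

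For finite $\SB$ objects one simply uses the constant diagram. For the countably infinite case, I would use $X_n = \{0,1,\dots,n\}$ with transition maps $f_n : X_{n+1}\to X_n$ defined by $k\mapsto k$ for $k\leq n$ and $n+1\mapsto n$. A direct inspection of the compatible sequences shows that the underlying set of the limit is in bijection with $\N\cup\{\infty\}$, and the $\sigma$-algebra generated by the projections is the discrete one: each singleton $\{k\}$ with $k\in\N$ is already the preimage of a point under a single projection, while $\{\infty\} = \bigcap_n \pi_n^{-1}\{n\}$. Being a countably infinite discrete measurable space, this limit is Borel-isomorphic to $\N$.

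The main technical point to check, and the one where care is needed, is that inverse limits computed a priori only in $\Meas$ really do realise the limit inside $\SB$. In each case the limit sits as a closed subset of a countable product of finite discrete spaces, which is Polish and hence standard Borel; and the universal property in $\SB$ follows at once from the universal property in $\Meas$, because any measurable map from an $\SB$-object into a finite discrete space is automatically an $\SB$-morphism and hence factors uniquely through the limit. Once these items are verified, the theorem follows.
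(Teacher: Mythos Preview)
Your proposal is correct and follows essentially the same route as the paper: invoke the Borel isomorphism theorem to reduce to the three cardinality cases, then exhibit the Cantor space as $\varprojlim 2^n$, the countably infinite case as $\varprojlim\{0,\dots,n\}$ (whose limit is $\N\cup\{\infty\}$, i.e.\ the one-point compactification of $\N$), and treat finite spaces trivially. Your additional paragraph checking that the $\Meas$-limit lands in $\SB$ and retains its universal property there is a welcome bit of care that the paper's proof leaves implicit.
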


\noindent The \emph{Giry monad} was originally defined in two variants~\cite{giry}:
%\begin{enumerate}
%\item 
- As an endofunctor $\Giry_{\Pol}$ of $\Pol$, the category of \emph{Polish spaces}, 
one sets $\Giry_{\Pol} (X,\topol)$ to be the space of Borel probability measures over $X$ together with the weak topology. This space is Polish~\cite[Th 17.23]{kechris}, and the Portmanteau Theorem~\cite[Th 17.20]{kechris}) 
gives multiple characterizations of the weak topology.
%\item 
- As an endofunctor $\Giry_{\Meas}$ of $\Meas$, the category of \emph{measurable spaces}, 
one sets $\Giry_{\Meas} (X,\sigAlg)$ to be the set of probability measures on $X$ together with the initial $\sigma$-algebra for the maps $\ev_A: \Giry_{\Meas} (X,\sigAlg)\to \R, \mu\mapsto \mu(A), A\in\sigAlg$.
%\end{enumerate}

In both cases the Giry monad is defined on an arrow $f: X\to Y$ as the map $f_*$ which sends a measure $\mu$ on $X$ to the \emph{pushforward measure} $f_*\mu$ on $Y$, defined as
$G(f)(\mu)(B)=f_*\mu(B):=\mu(f^{-1}(B))$ for $B$ a measurable subset of $Y$.

We want to define the Giry monad on the category $\SB$ of standard Borel spaces (and measurable maps), and the two versions of the Giry monad described above offer us natural ways to do this: given an SB space $(X,\sigma(\topol))$ we can either compute $\Giry_{\Pol}(X,\topol)$ and take the associated standard Borel space, or directly compute $\Giry_{\Meas}(X,\sigma(\topol))$. Fortunately, the two methods agree.
\begin{theorem}[\cite{kechris}, Th 17.24]
  Let $\Borel:\Pol\to\SB$ denote the functor sending a Polish space $(X,\topol)$ to its associated SB-space $(X,\sigma(\topol))$ and leaving morphisms unchanged, then
\[
  \Giry_{\Meas} \circ \Borel = \Borel \circ \Giry_{\Pol}.
\]
\end{theorem}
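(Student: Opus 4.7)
The plan is to establish the functorial identity by checking agreement on objects and on morphisms separately. On morphisms, both $\Giry_{\Meas}$ and $\Giry_{\Pol}$ act by pushforward $f \mapsto f_*$ and $\Borel$ leaves morphisms unchanged, so the two composites agree on arrows as soon as they agree on objects. The crux is therefore to show that for a Polish space $(X,\topol)$ the two $\sigma$-algebras on the common underlying set of Borel probability measures coincide: the Borel $\sigma$-algebra of the weak topology on one hand, and the initial $\sigma$-algebra for the evaluation maps $\ev_A(\mu) = \mu(A)$ with $A \in \sigma(\topol)$ on the other.

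For the inclusion $\sigma(\ev_A : A \in \sigma(\topol)) \subseteq \sigma(\text{weak})$, I would first invoke the Portmanteau theorem (already cited in the excerpt) to conclude that for every open $U \in \topol$ the map $\ev_U$ is lower semicontinuous in the weak topology, hence weakly Borel measurable. A $\pi$-$\lambda$ (Dynkin) argument then extends this from open sets to all Borel sets: the class $\mathcal{D} := \{A \in \sigma(\topol) : \ev_A \text{ is weakly Borel measurable}\}$ contains $X$, is closed under complements via $\ev_{A^c} = 1 - \ev_A$, and is closed under countable disjoint unions via $\ev_{\bigsqcup_n A_n} = \sum_n \ev_{A_n}$ pointwise; since it contains the $\pi$-system $\topol$, it equals $\sigma(\topol)$.

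For the reverse inclusion, I would use that $\Giry_{\Pol}(X,\topol)$ is itself Polish (Kechris, Th.~17.23, as noted in the excerpt), hence second countable. Fixing a countable base $\{U_n\}_{n \in \N}$ for $\topol$, the family $\{\{\mu : \mu(U_n) > r\} : n \in \N,\, r \in \Q\}$ is a subbase for the weak topology which lies entirely inside the evaluation $\sigma$-algebra, each element being of the form $\ev_{U_n}\inv((r,1])$. Since a second countable topology has its Borel $\sigma$-algebra generated by any countable subbase, this gives $\sigma(\text{weak}) \subseteq \sigma(\ev_A : A \in \sigma(\topol))$.

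The main obstacle, modest as it is, is locating a countable subbase of the weak topology sitting visibly inside the evaluation $\sigma$-algebra; once that is in hand both inclusions are routine. The forward direction is a textbook Dynkin argument, while the reverse exploits the Polishness, and hence second countability, of $\Giry_{\Pol}(X,\topol)$ to reduce generation of its Borel $\sigma$-algebra to a countable, manifestly evaluation-measurable family.
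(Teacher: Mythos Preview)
The paper does not give a proof of this statement: it is quoted directly from Kechris (Theorem~17.24) and used as a black box. Your argument is correct and is essentially the standard proof one finds in Kechris. One small remark: your claim that the sets $\{\mu:\mu(U_n)>r\}$, for $U_n$ ranging over a countable base and $r\in\Q$, form a \emph{subbase for the weak topology itself} is true but not entirely immediate; it goes through cleanly once you take the base $\{U_n\}$ to be closed under finite unions, so that the Portmanteau criterion $\liminf\mu_k(U)\ge\mu(U)$ for all open $U$ reduces to checking it on the $U_n$. Alternatively you can sidestep the subbase claim altogether: since $\mu\mapsto\int f\,d\mu$ is evaluation-measurable for every bounded measurable $f$ (approximate by simple functions), every subbasic weak-open set $\{\mu:\int f\,d\mu>r\}$ already lies in the evaluation $\sigma$-algebra, and second countability of the weak topology finishes the inclusion.
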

We define the \emph{Giry monad} on SB spaces to be the endofunctor $\Giry:\SB\to\SB$ defined by either of the two equivalent constructions above. The monadic data of $\Giry$ is given at each SB space $X$ by the unit $\delta_X: X\to\Giry X, x\mapsto \delta_x$, the Dirac $\delta$ measure at $x$, and the multiplication $m_X: \Giry^2 X\to \Giry X, \mathbb{P}\mapsto \lambda A.\int_{\Giry X} \ev_A d\mathbb{P}$. We refer the reader to \cite{giry} 
for proofs that $\delta_X$ and $m_X$ are measurable.

\subsection{The construction of $\Krn$}

Let us denote by $\Klm$ the Kleisli category associated with the Giry monad $(\Giry, \delta, m)$. We denote Kleisli arrows, i.e.\@ Markov kernels, by $X \klar Y$, and we call such an arrow \emph{deterministic} if it can be factorized as an ordinary measurable function followed by the unit $\delta$. Kleisli composition is denoted by $\klcirc$. The category $\ast \downarrow \Klm$ has arrows $\ast \klar X$ as objects, where $\ast$ is the one point SB space (the terminal object in $\SB$). An arrow from $\mu : \ast \klar X$ to $\nu : \ast \klar Y$ is a $\Klm$ arrow $f : X \klar Y$ such that $\nu = f \klcirc \mu$, i.e.\@ such that $\nu(A)=\int_X f(x)(A)d\mu$ for any measurable subset $A$ of $Y$. This situation will be denoted in short by $f : (X, \mu) \klar (Y,\nu)$, and we will call a pair $(X,\mu)$ a \emph{measured SB space}.

%Recall from \cite{maclane} II.8. that a \emph{congruence} $R$ on a locally small category $\cat$ is an equivalence relation $\sim_{X,Y}\subseteq \Hom{X,Y}\times \Hom{X,Y}$ for each pair of objects $X,Y$ which is stable under pre- and post-composition, i.e.\@ such that if $g\sim_{X,Y} g'$, $f: X'\to X$ and $h: Y\to Y'$, then $h\circ g\circ f\sim_{X',Y'} h\circ g'\circ f$. The \emph{quotient category} $\cat/\sim$ is the category whose objects are the objects of $\cat$ but whose arrows are equivalent classes of $\cat$-arrows under $\sim$. Composition of equivalence classes is well defined by definition of a congruence. For notation clarity we will omit the subscript indices of $\sim$.

We want to construct a quotient of $\ast \downarrow \Klm$, such that two $\ast \downarrow \Klm$
arrows are identified if they disagree on a null set w.r.t. the measure on their domain. For $g, g' : (X,\mu) \klar (Y,\nu)$, we define $N(g,g') = \{ x \in X \mid g(x) \neq g'(x) \}$.

\begin{lemma}\label{lem:NggMeas}
  $N(g,g')$ is a measurable set.
\end{lemma}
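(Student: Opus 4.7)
The plan is to exploit the countable generation of the $\sigma$-algebra on the SB space $Y$, together with the defining property of the $\sigma$-algebra on $\Giry Y$, to write $N(g,g')$ as a countable union of evidently measurable sets.

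First, I would fix a countable $\pi$-system $\mathcal{C} = \{A_n\}_{n\in\N}$ generating the Borel $\sigma$-algebra of $Y$ — for instance the finite intersections of a countable basis for a compatible Polish topology on $Y$, which exists by definition of an SB space. The key observation is then that by the uniqueness part of the $\pi$-$\lambda$ theorem, two probability measures $\mu_1,\mu_2 \in \Giry Y$ coincide if and only if $\mu_1(A_n) = \mu_2(A_n)$ for every $n \in \N$. This immediately yields the decomposition
\[
 N(g,g') \;=\; \bigcup_{n \in \N} \{\, x \in X \mid g(x)(A_n) \neq g'(x)(A_n)\,\}.
\]

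Next, I would argue that each set in this union is measurable. By definition of the $\sigma$-algebra on $\Giry Y$, every evaluation map $\ev_{A_n} : \Giry Y \to \R$ is measurable, so $x \mapsto g(x)(A_n) = \ev_{A_n} \circ g$ and $x \mapsto g'(x)(A_n) = \ev_{A_n} \circ g'$ are measurable as compositions of measurable maps. Hence the map $h_n : x \mapsto g(x)(A_n) - g'(x)(A_n)$ is measurable into $\R$, and $\{x \mid g(x)(A_n) \neq g'(x)(A_n)\} = h_n^{-1}(\R \setminus \{0\})$ belongs to the $\sigma$-algebra on $X$. Taking a countable union preserves measurability, which concludes the proof.

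I do not anticipate a real obstacle here: the argument is a standard consequence of the fact that SB spaces have countably generated $\sigma$-algebras. The only subtlety worth flagging is that the measurability of $g(x)(A_n)$ in $x$ uses precisely the definition of $\Giry_{\Meas}$ recalled above (initial $\sigma$-algebra for the evaluation maps), rather than any topological structure on $\Giry Y$.
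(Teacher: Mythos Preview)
Your proposal is correct and is essentially identical to the paper's own proof: both invoke a countable $\pi$-system generating the Borel $\sigma$-algebra of $Y$, use the $\pi$-$\lambda$ theorem to reduce equality of probability measures to agreement on that system, and then write $N(g,g')$ as the countable union $\bigcup_n (\ev_{A_n}\circ g - \ev_{A_n}\circ g')^{-1}(\R\setminus\{0\})$, appealing to the measurability of the evaluation maps on $\Giry Y$.
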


We now define a relation $\sim$ on $\Hom{(X,\mu),(Y,\nu)}$ by saying that for any two arrows $g, g' : (X,\mu) \klar (Y,\nu)$, $g \sim g' \text{ if } \mu(N(g,g')) = 0$.
This clearly defines an equivalence relation on $\Hom{(X,\mu),(Y,\nu)}$.
In order to perform the quotient of the category $\ast \downarrow \Klm$ modulo $\sim$, we need to check that it is %a congruence, i.e.\@ 
compatible with composition. 
%Consider the following situation where $g\sim g'$:
%\begin{equation}
%\xymatrix{
%  (W,\lambda) \ar@{|>}[r]^f & (X,\mu) \ar@{|>}@/^1pc/[r]^g \ar@{|>}@/_1pc/[r]_{g'} & (Y, \nu) \ar@{|>}[r]^h & (Z,\rho)
%}
%\end{equation}
%We want to show that $h \klcirc g \klcirc f \sim h \klcirc g' \klcirc f$. We start with the following Proposition.

\begin{proposition}
  \label{prop:compatibility}
  If $g \sim g'$, then $h \klcirc g \klcirc f \sim h \klcirc g' \klcirc f$.
\end{proposition}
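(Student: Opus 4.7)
The plan is to reduce the statement to two simpler congruence properties of $\sim$: (a) left-congruence, $g\sim g'\Rightarrow h\klcirc g\sim h\klcirc g'$, and (b) right-congruence, $g\sim g'\Rightarrow g\klcirc f\sim g'\klcirc f$. Applying (b) then (a) yields the statement; measurability of the sets involved is handled by Lemma~\ref{lem:NggMeas}.

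For (a) the argument is pointwise. If $x\notin N(g,g')$, then $g(x)=g'(x)$ as probability measures on $Y$, so for every measurable $C\subseteq Z$,
\[
(h\klcirc g)(x)(C)=\ssint{Y} h(y)(C)\,dg(x)(y)=\ssint{Y}h(y)(C)\,dg'(x)(y)=(h\klcirc g')(x)(C),
\]
hence the two measures $(h\klcirc g)(x)$ and $(h\klcirc g')(x)$ agree. This shows $N(h\klcirc g,h\klcirc g')\subseteq N(g,g')$, and since $\mu(N(g,g'))=0$ we get $h\klcirc g\sim h\klcirc g'$.

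For (b), let $f:(W,\lambda)\klar(X,\mu)$. The defining property of the slice category $\ast\downarrow\Klm$ forces $\mu=f\klcirc\lambda$, i.e. $\mu(A)=\int_W f(w)(A)\,d\lambda(w)$ for every measurable $A$. Instantiating at the measurable set $A=N(g,g')$ (measurable by Lemma~\ref{lem:NggMeas}) and using $\mu(N(g,g'))=0$ yields
\[
0=\ssint{W} f(w)(N(g,g'))\,d\lambda(w).
\]
Since the integrand is non-negative and measurable in $w$, there is a $\lambda$-null set $E\subseteq W$ such that $f(w)(N(g,g'))=0$ for every $w\in W\setminus E$. For any such $w$ and any measurable $B\subseteq Y$,
\[
(g\klcirc f)(w)(B)=\ssint{X\setminus N(g,g')} g(x)(B)\,df(w)(x)=\ssint{X\setminus N(g,g')} g'(x)(B)\,df(w)(x)=(g'\klcirc f)(w)(B),
\]
where the middle equality uses that $g(x)=g'(x)$ off $N(g,g')$. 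Since the exceptional set $E$ does not depend on $B$, the measures $(g\klcirc f)(w)$ and $(g'\klcirc f)(w)$ coincide for all $w\notin E$, so $N(g\klcirc f,g'\klcirc f)\subseteq E$ and $g\klcirc f\sim g'\klcirc f$.

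The main subtleties are bookkeeping rather than deep: one must be sure the exceptional null set produced in (b) is independent of the test set $B$ (which is why we extract it from $\mu(N(g,g'))=0$ before looking at any particular $B$), and one must keep track that every set of interest is measurable, for which Lemma~\ref{lem:NggMeas} and the measurability of evaluation maps $\ev_A$ entering the Kleisli composition suffice. Combining (a) and (b) by transitivity of $\sim$ gives $h\klcirc g\klcirc f\sim h\klcirc g'\klcirc f$.
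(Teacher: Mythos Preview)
Your proof is correct, and the overall decomposition into left-congruence (a) and right-congruence (b) matches the paper's. Part (a) is essentially identical in both.

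For (b) the routes differ. The paper argues contrapositively: assuming $\lambda(N(g\klcirc f,g'\klcirc f))>0$, it invokes the countable $\pi$-system $\{B_n\}$ on $Y$ to isolate a single test set $B_n$ witnessing the discrepancy, and then chases a chain of integral inequalities down to $\mu(N(g,g'))>0$. Your argument is direct and more economical: from $0=\mu(N(g,g'))=\int_W f(w)(N(g,g'))\,d\lambda$ you conclude that $f(w)(N(g,g'))=0$ for $\lambda$-a.e.\ $w$, and for each such $w$ the Kleisli integrals defining $(g\klcirc f)(w)(B)$ and $(g'\klcirc f)(w)(B)$ coincide for every $B$ at once. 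This avoids the $\pi$-system bookkeeping in step (b); the only place the $\pi$-system is still needed is inside Lemma~\ref{lem:NggMeas}, to ensure $N(g,g')$ is measurable. The paper's route has the minor advantage of making the dependence on a countable basis explicit throughout, but yours is the cleaner argument.
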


%It follows that the equivalence
%relation $\sim$ is compatible with composition in $\ast \downarrow \Klm$.

\begin{definition}
  Let $\Krn$ be the category obtained by quotienting $\ast \downarrow \Klm$ hom-sets with $\sim$.
\end{definition}

The following Theorem is of great practical use and generalizes the well-known result for deterministic arrows.

\begin{theorem}[Change of Variables in $\Krn$]\label{thm:chgVarKrn}
Let $f: (X,\mu)\klar (Y,\nu)$ be a $\Krn$-morphism. For any %positive and 
measurable function $\phi: Y\to\R$, if $\phi$ is $\nu$-integrable, then $\phi\klcirc f(x)=\int_Y \phi\, df(x)$ is $\mu$-integrable and
\[
\int_Y \phi ~d\nu=\int_X \phi\klcirc f ~d\mu
\]
\end{theorem}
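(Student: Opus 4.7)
The plan is to proceed by the standard measure-theoretic induction: start with indicator functions, extend by linearity to simple functions, pass to non-negative measurable functions via the Monotone Convergence Theorem, and finally split a general integrable function into positive and negative parts. The defining property of a morphism in $\ast\downarrow\Klm$, namely $\nu(A)=\int_X f(x)(A)\,d\mu(x)$ for every measurable $A\subseteq Y$, supplies the base case for free.

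Before anything else I would check that the statement is well-posed on the quotient category $\Krn$: if $f\sim f'$ then $f(x)=f'(x)$ for $\mu$-almost every $x$, so the pointwise integrals $\phi\klcirc f(x)$ and $\phi\klcirc f'(x)$ agree $\mu$-a.e., and hence their $\mu$-integrals coincide. Next, for $\phi=\mathbf{1}_A$ with $A\subseteq Y$ measurable, both sides equal $\nu(A)$ by the hypothesis $\nu=f\klcirc\mu$; measurability of $x\mapsto f(x)(A)=\ev_A\circ f(x)$ is immediate because $\ev_A$ is measurable by the very definition of the $\sigma$-algebra on $\Giry Y$ and $f:X\to\Giry Y$ is measurable. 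Linearity of integration then handles non-negative simple functions $\phi=\sum_i c_i\mathbf{1}_{A_i}$, for which $\phi\klcirc f(x)=\sum_i c_i f(x)(A_i)$ is measurable and the equality of integrals is clear.

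For a general non-negative measurable $\phi$, I pick an increasing sequence of simple functions $\phi_n\uparrow\phi$. For each fixed $x$, the MCT applied to the measure $f(x)$ gives $\phi_n\klcirc f(x)\uparrow\phi\klcirc f(x)$, so the limit function $x\mapsto\phi\klcirc f(x)$ is measurable as a pointwise supremum of measurable functions. A second application of the MCT, this time with respect to $\mu$ on the left and $\nu$ on the right, yields
\[
\int_X \phi\klcirc f\,d\mu=\lim_n\int_X \phi_n\klcirc f\,d\mu=\lim_n\int_Y\phi_n\,d\nu=\int_Y\phi\,d\nu.
\]
Finally, for $\nu$-integrable $\phi$, write $\phi=\phi^+-\phi^-$. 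The previous step applied to $|\phi|$ gives $\int_X|\phi|\klcirc f\,d\mu=\int_Y|\phi|\,d\nu<\infty$, and since $|\phi\klcirc f(x)|\leq|\phi|\klcirc f(x)$ pointwise, $\phi\klcirc f$ is $\mu$-integrable; the desired equality then follows by subtracting the identities for $\phi^+$ and $\phi^-$.

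There is no real obstacle here: every step is the exact analogue of the classical change-of-variables proof for pushforward measures, with the only novelty being that the integration is ``spread out'' by the kernel $f$. The one point worth care is tracking measurability of $x\mapsto\int_Y\phi\,df(x)$ through the limiting procedure, but this is handled cleanly by measurability of $f$ into $\Giry Y$ together with pointwise MCT.
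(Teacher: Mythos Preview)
Your proof is correct and follows essentially the same route as the paper: verify the identity on indicators using the defining relation $\nu=f\klcirc\mu$, extend linearly to simple functions, and pass to the limit via the Monotone Convergence Theorem. Your treatment is in fact more careful than the paper's, which tacitly assumes $\phi\ge 0$ when choosing $\phi_n\uparrow\phi$; your explicit handling of measurability, of the quotient, and of the decomposition $\phi=\phi^+-\phi^-$ fills gaps the paper leaves implicit.
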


\paragraph{The symmetric monoidal structure of $\hspace{2pt}\Krn$}\hspace{-5pt}is defined on a pair of objects $(X,\mu),(Y,\nu)$ by the Cartesian product and the product of measure, i.e.\@ $(X,\mu)\otimes(Y,\nu)=(X\times Y,\mu\otimes \nu)$. On pairs of morphisms $f:(X,\mu)\klar (Y,\nu)$ and $f':(X',\mu')\klar (Y',\nu')$ it is defined by $(f\otimes f')(x,x'):= f(x)\otimes f'(x')$. The unitors, associator and braiding transformations are given by the obvious bijections.

\subsection{The dagger structure of $\Krn$}

$\Krn$ has an extremely powerful inversion principle:

\begin{theorem}[Measure Disintegration Theorem, \cite{kechris}, 17.35]\label{thm:Disintegration}
Let $f: (X,\mu)\klar (Y,\nu)$ be a \emph{deterministic} $\Krn$-morphism, there exists a unique morphism $f\dg_\mu: (Y,\nu)\klar (X,\mu)$ such that 
\begin{equation}\label{eq:DisintegrationDef}
f\klcirc f\dg_\mu=\IdMor_{(Y,\nu)}.
\end{equation}
\end{theorem}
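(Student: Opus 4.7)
My plan is essentially to transcribe the classical Measure Disintegration Theorem (Kechris 17.35) into the categorical vocabulary of $\Krn$, carefully tracking how classical almost-everywhere statements interact with the quotient defining $\Krn$.

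First, I would unpack the hypothesis. A deterministic $\Krn$-morphism $f : (X,\mu)\klar (Y,\nu)$ factors as $f = \delta_Y \circ f_0$ for some measurable $f_0 : X \to Y$, and the defining condition $\nu = f\klcirc\mu$ becomes the pushforward identity $\nu = (f_0)_*\mu$. This places us exactly in the hypotheses of the Disintegration Theorem in Kechris, which produces a $\nu$-a.e.\ unique family $\{\mu_y\}_{y\in Y}$ of Borel probability measures on $X$ such that: (i) $y \mapsto \mu_y(A)$ is measurable for every Borel $A\subseteq X$; (ii) $\mu_y$ is concentrated on $f_0^{-1}(\{y\})$ for $\nu$-a.e.\ $y$; and (iii) for every Borel $A\subseteq X$, $\mu(A) = \int_Y \mu_y(A)\, d\nu(y)$.

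Next I would define $f\dg_\mu : Y \klar X$ by $f\dg_\mu(y) := \mu_y$. Property (i) gives that $f\dg_\mu$ is a measurable map into $\Giry X$, hence a genuine Kleisli arrow; property (iii) says precisely that $f\dg_\mu \klcirc \nu = \mu$, so $f\dg_\mu$ lifts to a morphism $(Y,\nu)\klar(X,\mu)$ in $\ast\downarrow\Klm$, hence an arrow in $\Krn$. For the equation $f\klcirc f\dg_\mu = \IdMor_{(Y,\nu)}$, note that $\IdMor_{(Y,\nu)}$ is represented by $\delta_Y$. By property (ii), for $\nu$-a.e.\ $y$ the measure $\mu_y$ sits in the fibre $f_0^{-1}(\{y\})$, so its pushforward under $f_0$ is $\delta_y$. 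Hence $(f\klcirc f\dg_\mu)(y) = (f_0)_*\mu_y = \delta_y$ for $\nu$-a.e.\ $y$, which is exactly the statement that $f\klcirc f\dg_\mu \sim \delta_Y$; this is an equality in $\Krn$ by definition of the quotient.

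For uniqueness, suppose $g : (Y,\nu)\klar (X,\mu)$ also satisfies $f\klcirc g = \IdMor_{(Y,\nu)}$ in $\Krn$. Unfolding, $(f_0)_*g(y) = \delta_y$ for $\nu$-a.e.\ $y$, which forces $g(y)$ to be concentrated on $f_0^{-1}(\{y\})$ for $\nu$-a.e.\ $y$; combined with the fact that $g$ is a $\Krn$-morphism into $(X,\mu)$, i.e.\ $\int_Y g(y)(A)\, d\nu(y) = \mu(A)$, the family $\{g(y)\}_y$ is a disintegration of $\mu$ along $f_0$. The $\nu$-a.e.\ uniqueness in Kechris 17.35 then yields $g(y) = \mu_y = f\dg_\mu(y)$ for $\nu$-a.e.\ $y$, i.e.\ $g \sim f\dg_\mu$, and hence $g = f\dg_\mu$ in $\Krn$.

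The only genuinely delicate point is the careful matching of the classical $\nu$-a.e.\ statements with the $\sim$-quotient of $\Krn$, in particular making sure both that the identity equation holds only up to a null set (and that this is all we need) and that uniqueness is similarly modulo null sets. Everything else is a direct appeal to Theorem 17.35 of Kechris together with the definitions of Kleisli composition, deterministic morphisms, and the symmetric monoidal quotient category $\Krn$.
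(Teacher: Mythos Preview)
Your proposal is correct. Note that the paper does not actually supply a proof of this theorem: it is stated with a citation to Kechris 17.35 and used as a black box. What you have written is precisely the translation of that classical statement into the vocabulary of $\Krn$, and your handling of the two places where the quotient by $\sim$ matters (the equation $f\klcirc f\dg_\mu=\IdMor_{(Y,\nu)}$ only holding $\nu$-a.e., and uniqueness only $\nu$-a.e.) is exactly what is needed to make the citation rigorous in this setting.
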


The kernel $f\dg_\mu$ is called the \emph{disintegration of $\mu$ along $f$}. As our notation suggests, the disintegration depends fundamentally on the measure $\mu$ over the domain, however we will omit this subscript when there is no ambiguity. The following lemma relates disintegrations to conditional expectations.

\begin{lemma}[\cite{concur2016}]\label{lem:disintegrationCondExp}
Let $f: (X,\mu)\to (Y,\nu)$ be a deterministic $\Krn$-morphism, and let $\phi: X\to \R$ be measurable, then $\mu$-a.e.
\[
\phi\klcirc f\dg\klcirc f=\EXP{\phi\mid \sigma(f)}
\]
\end{lemma}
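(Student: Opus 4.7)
My plan is to invoke the uniqueness characterization of conditional expectation: $\EXP{\phi \mid \sigma(f)}$ is the unique (up to $\mu$-null sets) $\sigma(f)$-measurable function whose integral over every $A \in \sigma(f)$ coincides with $\int_A \phi \, d\mu$. So I would verify these two properties directly for the function $\phi \klcirc f\dg \klcirc f$.

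The measurability step should be quick: since $f$ is deterministic, $f(x) = \delta_{f(x)}$ as a kernel, and hence $\phi \klcirc f\dg \klcirc f$ evaluates to $(\phi \klcirc f\dg)(f(x))$, i.e.\ the composition of the measurable function $\phi \klcirc f\dg : Y \to \R$ with $f : X \to Y$. It therefore factors through $f$ and is automatically $\sigma(f)$-measurable.

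For the integral condition I would fix $A = f^{-1}(B)$ with $B$ measurable in $Y$, and apply the Change of Variables in $\Krn$ (Theorem~\ref{thm:chgVarKrn}) twice. Applied to $f: (X,\mu)\klar (Y,\nu)$ with integrand $\mathbbm{1}_B \cdot (\phi \klcirc f\dg)$, it yields
\[
\int_A \phi \klcirc f\dg \klcirc f \, d\mu = \int_B \phi \klcirc f\dg \, d\nu;
\]
applied to $f\dg: (Y,\nu)\klar(X,\mu)$ with integrand $\mathbbm{1}_A \cdot \phi$, it yields
\[
\int_A \phi \, d\mu = \int_Y (\mathbbm{1}_A \cdot \phi) \klcirc f\dg \, d\nu.
\]

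The main obstacle will be matching the two right-hand sides, i.e.\ showing that $(\mathbbm{1}_A \cdot \phi) \klcirc f\dg (y) = \mathbbm{1}_B(y) \cdot (\phi \klcirc f\dg)(y)$ for $\nu$-a.e.\ $y$. This is where the defining equation $f \klcirc f\dg = \IdMor_{(Y,\nu)}$ of Theorem~\ref{thm:Disintegration} becomes essential: unpacked, it states that $f_* f\dg(y) = \delta_y$ for $\nu$-a.e.\ $y$, which is precisely the assertion that the probability measure $f\dg(y)$ is concentrated on the fibre $f^{-1}(\{y\})$. On this fibre $\mathbbm{1}_A(x) = \mathbbm{1}_B(f(x)) = \mathbbm{1}_B(y)$ is constant in $x$ and factors out of the defining integral $\int_X \mathbbm{1}_A\cdot\phi\,df\dg(y)$, delivering the claimed identity $\nu$-a.e. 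Substituting into the second equation aligns the two right-hand sides, both equalling $\int_B \phi \klcirc f\dg \, d\nu$, and uniqueness of conditional expectation concludes.
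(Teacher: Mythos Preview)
Your argument is correct. The paper does not supply its own proof of this lemma; it simply cites \cite{concur2016}, so there is no in-paper proof to compare against. Your route---checking $\sigma(f)$-measurability and then the defining integral identity of conditional expectation over sets $A=f^{-1}(B)$---is exactly the standard one, and each step is sound: the factorization $\phi\klcirc f\dg\klcirc f = (\phi\klcirc f\dg)\circ f$ gives $\sigma(f)$-measurability immediately, the two invocations of Theorem~\ref{thm:chgVarKrn} are legitimate (for the second, $f\dg:(Y,\nu)\klar(X,\mu)$ is a $\Krn$-arrow so the theorem applies), and the key identification $(\one_A\cdot\phi)\klcirc f\dg(y)=\one_B(y)\cdot(\phi\klcirc f\dg)(y)$ does follow from Theorem~\ref{thm:Disintegration} exactly as you say, since $f\klcirc f\dg=\IdMor_{(Y,\nu)}$ unpacks to $f\dg(y)(f^{-1}\{y\})=1$ for $\nu$-a.e.\ $y$.

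Two minor remarks. First, the lemma as stated only assumes $\phi$ measurable, but conditional expectation and your use of Theorem~\ref{thm:chgVarKrn} both need $\phi\in\Lps[1](X,\mu)$; this is an omission in the statement rather than your proof, and is clearly the intended hypothesis. Second, to match the paper's notation you should write $\one_A$ rather than $\mathbbm{1}_A$.
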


We can extend the definition of $(-)\dg$ to \emph{any} $\Krn$-morphism $f:(X,\mu)\klar (Y,\nu)$ in a functorial way, although $f\dg$ will not in general be a right inverse to $f$.
The construction of $f\dg$ is detailed in \cite{fossacs2017}, but let us briefly recall how it works. The category $\SB$ has products which are built in the same way as in $\Meas$ via the product of $\sigma$-algebras\footnote{Unlike the category $\Krn$ which does \emph{not} have products.}. Given any kernel $f:(X,\mu)\klar (Y,\nu)$, we can canonically construct a probability measure $\gamma_f$ on the product $X\times Y$  of SB-space by defining it on the rectangles of $X\times Y$ as 
\begin{equation}\label{eq:integrationDef}
\gamma_f(A\times B) = \int_{x\in X} \one_A(x) \cdot f(x)(B) ~ d\mu.
\end{equation}
Equivalently, $\gamma_f=(\delta_X\otimes f)\klcirc\Delta_X\klcirc \mu$, where $\Delta_X: X\to X\times X$ is the diagonal map.
Letting $\pi_X : X \times Y \to X$ and $\pi_Y : X \times Y \to Y$ be the canonical projections, we observe that
$\Giry \pi_X(\gamma_f) = \mu$ and $\Giry \pi_Y(\gamma_f) = \nu$: in other words,
$\gamma_f$ is a coupling of $\mu$ and $\nu$.
The disintegration of $\gamma_f$ along $\pi_Y$ is a kernel $\pi_Y\dg: (Y, \nu) \to (X\times Y, \gamma_f)$. Finally we define:
\begin{equation}\label{eq:disintegrationDef}
  f\dg=\pi_X\klcirc\pi_Y\dg.
\end{equation}
The following  $\Krn$ diagram sums up the situation:
\[
\xymatrix{
  (X,\mu)  \ar@/^1pc/@{|>}[r]_{\pi_X\dg} \ar@{|>}@<3pt>@/^2pc/[rr]_f & (X \times Y, \gamma_f) \ar@/^1pc/@{|>}[r]_{\pi_Y} \ar@/^1pc/@{|>}[l]_{\pi_X} & (Y,\nu) \ar@/^1pc/@{|>}[l]_{\pi_Y\dg}\ar@{|>}@<3pt>@/^2pc/[ll]_{f\dg}
}
\]
where $\pi\dg_X$ is explicitly given by $(\delta_X\otimes f)\klcirc\Delta_X$.
The following property characterizes the action of $(-)\dg$ on $\Krn$-morphisms:
\begin{theorem}\label{thm:bayesianinversion}
  For all $f : (X,\mu) \klar (Y,\nu)$,
  $f\dg : (Y,\nu) \klar (X,\mu)$ is the unique $\Krn$ morphism satisfying
  for all measurable sets $A \subseteq X$, $B \subseteq Y$ the following equation:  
  \begin{equation}\label{eq:disintegrationEqu}
    \int_{x\in X} \one_A(x) \cdot f(x)(B) ~ d\mu= \int_{y\in Y} f\dg(y)(A) \cdot \one_B(y) ~ d\nu
  \end{equation}
\end{theorem}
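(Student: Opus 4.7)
The plan is to verify that the arrow $f\dg$ defined by (\ref{eq:disintegrationDef}) satisfies (\ref{eq:disintegrationEqu}), and then to establish uniqueness via a Dynkin $\pi$-$\lambda$ argument that exploits countable generation of the Borel $\sigma$-algebra on any SB space.

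For existence, I would unpack $f\dg = \pi_X \klcirc \pi_Y\dg$. Since $\pi_X$ is deterministic, $f\dg(y)(A) = \pi_Y\dg(y)(A \times Y)$ for any measurable $A \subseteq X$. The crux is that for $\nu$-a.e.~$y$ the probability measure $\pi_Y\dg(y)$ is concentrated on the fibre $X \times \{y\}$: the identity $\pi_Y \klcirc \pi_Y\dg = \IdMor_{(Y,\nu)}$ supplied by Theorem~\ref{thm:Disintegration} means that, outside a fixed $\nu$-null set, $\pi_Y\dg(y)(X \times C) = \delta_y(C)$ for every measurable $C \subseteq Y$; since singletons in SB spaces are Borel, taking $C = Y \setminus \{y\}$ forces the mass of $\pi_Y\dg(y)$ onto $X \times \{y\}$. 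A straightforward intersection then gives, for $\nu$-a.e.~$y$ and all measurable $A, B$,
\[
\pi_Y\dg(y)(A \times B) = f\dg(y)(A)\cdot \one_B(y).
\]
Integrating with respect to $\nu$, and using that $\pi_Y\dg : (Y,\nu) \klar (X \times Y, \gamma_f)$ being a $\Krn$-arrow entails $\int_Y \pi_Y\dg(y)(E)\, d\nu = \gamma_f(E)$ for every measurable $E \subseteq X \times Y$, I obtain $\int_Y f\dg(y)(A)\one_B(y)\, d\nu = \gamma_f(A \times B)$, which equals $\int_X \one_A(x) f(x)(B)\, d\mu$ by (\ref{eq:integrationDef}).

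For uniqueness, suppose $g, g' : (Y,\nu) \klar (X,\mu)$ both satisfy (\ref{eq:disintegrationEqu}). Fixing a measurable $A \subseteq X$ and letting $B$ vary, the two finite measures on $Y$ given by $B \mapsto \int_B g(y)(A)\, d\nu$ and $B \mapsto \int_B g'(y)(A)\, d\nu$ coincide, so $g(y)(A) = g'(y)(A)$ outside some $\nu$-null set $N_A$. The main obstacle is that $N_A$ depends on $A$, whereas the Borel $\sigma$-algebra of $X$ is in general uncountable. To bridge this gap I would invoke the fact that every SB space admits a countable $\pi$-system $\{A_n\}_{n\in\N}$ generating its $\sigma$-algebra (for instance, a countable basis of a compatible Polish topology, closed under finite intersections). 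The union $N := \bigcup_n N_{A_n}$ is still $\nu$-null, and outside $N$ the probability measures $g(y)$ and $g'(y)$ agree on $\{A_n\}$; Dynkin's $\pi$-$\lambda$ theorem then lifts this agreement to the entire Borel $\sigma$-algebra of $X$. Hence $g \sim g'$, i.e.~$g = g'$ in $\Krn$.
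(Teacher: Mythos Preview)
Your proposal is correct and follows essentially the same approach as the paper. For existence you spell out explicitly why $\pi_Y\dg(y)(A\times B)=f\dg(y)(A)\cdot\one_B(y)$ via the fibre-concentration argument, whereas the paper simply asserts that $\int_Y f\dg(y)(A)\one_B(y)\,d\nu=\gamma_f(A\times B)$ follows from the definition and the disintegration theorem; for uniqueness both arguments rest on a countable generating $\pi$-system and Dynkin's lemma, the paper phrasing it contrapositively (if $\nu(N(f\dg,g))>0$ then some $N_k^+$ has positive measure, yielding a contradiction) while you argue directly.
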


In view of Eq. (\ref{eq:disintegrationEqu}), we will call $f\dg$ the \emph{Bayesian inversion of $f$}, and refer to $(-)\dg$ as the \emph{Bayesian inversion operation} on $\Krn$. It will be crucial throughout the rest of this paper.
It is important to see that $f\dg$ absolutely depends on the choice of $\mu$ and not only on $f$ seen as a function. We can now improve on \cite{fossacs2017} and show that $(-)\dg$ is indeed a $\dagger$-operation in the strict categorical meaning of the term.

\begin{theorem}\label{thm:KrnDagger}
$\Krn$ is a dagger symmetric monoidal category, with $(-)\dg$ given by Bayesian inversion.
\end{theorem}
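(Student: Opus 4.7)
The plan is to verify the axioms of a dagger symmetric monoidal category, using the uniqueness clause of Theorem \ref{thm:bayesianinversion} as the main workhorse. A candidate morphism can be identified with $f^\dagger$ as soon as one checks that it satisfies equation \eqref{eq:disintegrationEqu}, so the strategy is: propose the expected formula on each side, then reduce both sides of \eqref{eq:disintegrationEqu} to a common integral by Fubini and change of variables (Theorem \ref{thm:chgVarKrn}).

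First I would dispatch the involutive identities. For $\IdMor_{(X,\mu)} = \delta_X$, the two integrals in \eqref{eq:disintegrationEqu} both collapse to $\mu(A\cap B)$, so $\delta_X^\dagger = \delta_X$. For $(f^\dagger)^\dagger = f$, swapping the roles of $X$ and $Y$ in \eqref{eq:disintegrationEqu} shows that $f$ itself satisfies the defining equation for $(f^\dagger)^\dagger$, hence the two agree in $\Krn$ by uniqueness (i.e.\ modulo $\sim$).

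Next I would prove contravariance on composition: given $f:(X,\mu)\klar(Y,\nu)$ and $g:(Y,\nu)\klar(Z,\lambda)$, I claim $f^\dagger \klcirc g^\dagger$ satisfies the defining equation for $(g\klcirc f)^\dagger$. Starting from $\int_X \one_A(x)(g\klcirc f)(x)(C)\,d\mu$, I unfold Kleisli composition as $\int_X\int_Y g(y)(C)\,df(x)\,d\mu$, then rewrite this as an integral against the coupling $\gamma_f$ on $X\times Y$ and disintegrate along $\pi_Y$ using $\pi_X\klcirc\pi_Y^\dagger = f^\dagger$; this yields $\int_Y g(y)(C)\,f^\dagger(y)(A)\,d\nu$. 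A second application of the same trick, now on $Y\times Z$ using $\gamma_g$ and $\pi_Y\klcirc\pi_Z^\dagger = g^\dagger$, converts this into $\int_Z \one_C(z)\bigl(\int_Y f^\dagger(y)(A)\,dg^\dagger(z)\bigr)d\lambda$, which is precisely $\int_Z (f^\dagger\klcirc g^\dagger)(z)(A)\one_C(z)\,d\lambda$. Uniqueness then gives $(g\klcirc f)^\dagger = f^\dagger\klcirc g^\dagger$. This is the step I expect to be the main obstacle, because it requires carefully tracking two disintegrations in parallel and justifying the applications of Fubini; everything else is largely formal.

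For the monoidal side, I would verify $(f\otimes f')^\dagger = f^\dagger\otimes f'^\dagger$ by checking \eqref{eq:disintegrationEqu} on product rectangles $A\times A'$, $B\times B'$: because the product of kernels factors and the product of measures Fubini-factorises, each side splits as a product of two instances of \eqref{eq:disintegrationEqu} for $f$ and $f'$ separately; equality on rectangles, which form a generating $\pi$-system, extends to all measurable sets and thus uniquely identifies the dagger of the tensor. Finally, the coherence isomorphisms (unitors, associator, braiding) are all of the form $\delta_\phi$ for a measurable bijection $\phi$ with measurable inverse satisfying $\phi_*\mu = \nu$. For such morphisms a direct calculation using $\int_X \one_A(x)\one_B(\phi(x))\,d\mu = \mu(A\cap\phi^{-1}(B)) = \nu(\phi(A)\cap B)$ shows $\delta_\phi^\dagger = \delta_{\phi^{-1}}$, so each coherence iso is unitary. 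Combining these four points yields the theorem.
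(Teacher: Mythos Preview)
Your proposal is correct and follows essentially the same route as the paper: use the uniqueness clause of Theorem~\ref{thm:bayesianinversion} to verify identity, involution, contravariance, compatibility with $\otimes$, and unitarity of the coherence isomorphisms. The only presentational differences are that the paper carries out the contravariance step by explicit simple-function approximations and the MCT (rather than the coupling-and-disintegrate phrasing you sketch, which hides the same MCT inside the passage from rectangles to general integrands), and that for the coherence isomorphisms the paper appeals to Theorem~\ref{thm:Disintegration} rather than your direct change-of-variables computation; both amount to the same verification.
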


\section{Banach lattices}\label{sec:BL}
%\begin{fd}TODO?: cite Matteo Mio's 2017 LICS paper on Riesz Modal Logic and mention similarities with domains.\end{fd}

It is well-known that kernels can alternatively be seen as predicate -- i.e. real-valued function --transformers, or as state -- i.e. probability measure -- transformers. The latter perspective was adopted by Kozen in \cite{1981:kozenProbSemantics} to describe the denotational semantics of probabilistic programs (without conditioning). We shall see in this section and the next, that the predicate and state transformer perspectives are dual to one another in the category of \emph{Banach lattices}, a framework incidentally also used in \cite{1981:kozenProbSemantics}. For an introduction to the theory of Banach lattices we refer the reader to e.g. \cite{aliprantis,2012:zaanenIntroduction}. 

An \emph{ordered real vector space} $V$ is a real vector space together with a partial order $\le$ which is compatible with the linear structure in the sense that for all $u,v,w\in V, \lambda\in \R^+$ 
\begin{align*}
u\hspace{-1pt}\le \hspace{-1pt}v\Rightarrow u+w\hspace{-2pt}\le \hspace{-2pt} v+w \qquad\text{and}\qquad u\hspace{-1pt}\le\hspace{-1pt} v\Rightarrow \lambda u\hspace{-1pt}\le \hspace{-1pt}\lambda v
\end{align*}
An ordered vector space $(V,\le)$ is called a \emph{Riesz space} if the poset structure forms a lattice. A vector $v$ in a Riesz space $(V,\le)$ is called \emph{positive} if $0\leq v$, and its \emph{absolute value} $\Absval{v}$ is defined as $\Absval{v}=v \vee (-v)$. A Riesz space $(V,\le)$ is \emph{$\sigma$-order complete} if every non-empty countable subset of $V$ which is order bounded has a supremum.

%\begin{example}
%Every finite dimensional real vector space is a Riesz space with the join (resp. meet) operation defined by the coordinate-wise maximum (resp. minimum).
%%
%%
%%\begin{align*}
%%(x_1,\ldots,x_n)\vee (y_1,\ldots,y_n)&=(\max\{x_1,y_1\},\ldots,\max\{x_n,y_n\})\\
%%(x_1,\ldots,x_n)\wedge (y_1,\ldots,y_n)&=(\min\{x_1,y_1\},\ldots,\min\{x_n,y_n\})
%%\end{align*}
%%Equivalently, the poset structure can be defined by $(x_1,\ldots x_n)\le(y_1,\ldots y_n)$ if there exists a positive vector $(z_1,\ldots,z_n)\in (\R^+)^n$ such that $(x_1,\ldots x_n)+(z_1,\ldots,z_n)=(y_1,\ldots y_n)$. The absolute value is the component-wise absolute value.
%\end{example}
%
%\begin{example}\label{ex:Cunit1}
%The vector space $\mathcal{C}\unit$ of continuous maps $\unit\to\R$ is a Riesz space by taking the pointwise ordering on functions. However it is not a $\sigma$-order complete Riesz space, as one can quite easily construct a sequence $\{f_n\}_{n\in \N}\in \mathcal{C}\unit$, whose supremum is not continuous (see  Example 8.5 of \cite{aliprantis}).
%\end{example}

A \emph{normed Riesz space} is a Riesz space $(V,\le)$ equipped with a \emph{lattice norm}, i.e.\@ a map $\Norm{\cdot}: V\to\R$ such that: 
\begin{equation}\label{eq:latticeNorm}
\Absval{v}\le\Absval{w} \text{ implies }\Norm{v}\le\Norm{w}.
\end{equation}
A normed Riesz space is called a \emph{Banach lattice} if it is (norm-) complete, i.e.\@ if  every Cauchy sequence (for the norm $\Norm{\cdot}$) has a limit in $V$. 

\begin{example}\label{ex:LpSpaces1}
For each measured space $(X,\mu)$ -- and in particular $\Krn$-objects -- and each $1\leq p\leq \infty$, the space $\Lps(X,\mu)$ is a Riesz space with the pointwise order. When it is equipped with the usual $\Lps$-norm, it is a Banach lattice. This fact is often referred to as the \emph{Riesz-Fischer theorem} (see \cite[Th 13.5]{aliprantis}). We will say that $p,q\in \N\cup\{\infty\}$ are \emph{H\"{o}lder conjugate} if either of the following conditions hold: (i) $1<p,q<\infty$ and $\frac{1}{p}+\frac{1}{q}=1$, or (ii) $p=1$ and $q=\infty$, or (iii) $p=\infty$ and $q=1$.
\end{example}

%The following result will be useful.

\begin{theorem}[Lemma 16.1 and Theorem 16.2 of \cite{2012:zaanenIntroduction}]\label{thm:BLsigmaComp}
Every Banach lattice is $\sigma$-order complete.
\end{theorem}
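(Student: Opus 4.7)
The plan is to reduce to the special case of a monotone increasing order-bounded sequence, realise its supremum as a norm-limit, and exploit norm-completeness of the Banach lattice. First I would observe that, given a non-empty countable order-bounded set $\{v_n\}_{n\in\N}\subseteq V$ with upper bound $w$, the partial suprema $u_n := v_1 \vee \cdots \vee v_n$ are well defined because $V$ is a Riesz space. The sequence $(u_n)$ is increasing and still satisfies $u_n \le w$, and whenever either exists $\sup_n u_n = \sup_n v_n$. So it suffices to treat monotone increasing order-bounded sequences.

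Given such a sequence $u_1 \le u_2 \le \cdots \le w$, I would translate to $\tilde u_n := u_n - u_1 \in [0,\,w - u_1]$ and argue that $(\tilde u_n)$ is Cauchy in the Banach lattice norm. This is the technical heart of the proof: the lattice norm inequality (\ref{eq:latticeNorm}) has to be combined with the order structure via relative-uniform-convergence techniques internal to the Banach lattice, following Lemma 16.1 of \cite{2012:zaanenIntroduction}. Granting Cauchyness, norm-completeness produces a limit $\tilde u \in V$, and the candidate supremum is $u := \tilde u + u_1$.

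Finally, I would verify that $u = \sup_n u_n$. The key ancillary fact is that the positive cone of a Banach lattice is norm-closed: it is the preimage of $0$ under the map $v \mapsto (-v)\vee 0$, which is continuous by the lattice norm inequality (\ref{eq:latticeNorm}). Applying this to $u_m - u_n \ge 0$ for $m \ge n$ and taking norm-limits in $m$ yields $u - u_n \ge 0$, so $u$ is an upper bound of the $u_n$. For leastness, any upper bound $w'$ of $\{u_n\}$ satisfies $w' - u_n \ge 0$, and the same closedness argument gives $w' - u \ge 0$, i.e.\@ $w' \ge u$.

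The main obstacle is the norm-Cauchyness step: a monotone order-bounded sequence in a normed Riesz space need not a priori be norm-Cauchy, and the detailed argument (deferred to \cite{2012:zaanenIntroduction}) handles this through a careful interplay between monotonicity, order-boundedness, and the lattice norm in the norm-complete setting, which is what makes the passage from norm-completeness to $\sigma$-order completeness non-trivial.
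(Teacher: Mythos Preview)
The paper gives no proof of this statement; it merely cites Zaanen. More to the point, your central step---that an order-bounded monotone sequence in a Banach lattice is norm-Cauchy---is not merely difficult but false. In $\ell^\infty$ take $u_n=(1,\dots,1,0,0,\dots)$ with $n$ ones: then $0\le u_n\uparrow\le\mathbf 1$ yet $\|u_m-u_n\|_\infty=1$ for all $m\neq n$, so the sequence is not Cauchy (its supremum $\mathbf 1$ exists nonetheless, so this does not refute the claim itself). No amount of ``careful interplay between monotonicity, order-boundedness, and the lattice norm'' will rescue this step: when the supremum exists it is in general \emph{not} a norm limit of the sequence, so your strategy of producing the supremum as a norm limit cannot work.

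Worse, the theorem as stated is actually false, so no proof strategy can succeed. The Banach lattice $C[0,1]$ (sup norm, pointwise order) is not $\sigma$-order complete: the increasing, order-bounded sequence $u_n(x)=\min\!\bigl(1,\max(0,n(x-\tfrac12))\bigr)$ has no least upper bound. Indeed, any continuous upper bound $g$ satisfies $g\ge 1$ on $(\tfrac12,1]$ and hence $g(\tfrac12)\ge 1$; but for each $\epsilon>0$ the function $g_\epsilon$ that is $0$ on $[0,\tfrac12-\epsilon]$, $1$ on $[\tfrac12,1]$, and linear in between is also an upper bound, which would force a putative least upper bound to vanish on $[0,\tfrac12)$, contradicting continuity at $\tfrac12$. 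The paper has evidently misstated the content of Zaanen's Lemma~16.1 and Theorem~16.2 (which concern, roughly, the existence of $\bigvee_n v_n$ under the summability hypothesis $\sum_n\|v_n\|<\infty$, not mere order-boundedness). You should flag the claim as erroneous rather than try to prove it.
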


There are two very natural modes of `convergence' in a Banach lattice: \emph{order convergence} and \emph{norm convergence}. The latter is well-known, the former less so. An order bounded sequence $\{v_n\}_{n\in\N}$ in a $\sigma$-complete Riesz space (and thus in a Banach lattice) \emph{converges in order to $v$} if either of the following equivalent conditions holds:
\[
v=\liminf_n v_n:=\bigvee_n \bigwedge_{n\leq m} v_m, \hspace{1em}v=\limsup_n v_n:=\bigwedge_n \bigvee_{n\leq m} v_m.
\]
For a monotone increasing sequence $v_n$, this definition simplifies to $v=\bigvee_n v_n$, which is often written $v_n\uparrow v$. 

In a general $\sigma$-complete Riesz space, order and norm convergence are disjoint concepts, i.e.\@ neither implies the other (see \cite[Ex.~15.2]{2012:zaanenIntroduction} for two counter-examples). However if a sequence converges both in order and in norm then the limits are the same (see \cite[Th.~15.4]{2012:zaanenIntroduction}). Moreover, for \emph{monotone} sequences norm convergence implies order convergence:

\begin{proposition}[\cite{2012:zaanenIntroduction} Theorem 15.3]
If $\{v_n\}_{n\in\N}$ is an increasing sequence in a normed Riesz space and if $v_n$ converges to $v$ in norm (notation $v_n\to v)$, then $v_n\uparrow v$.
\end{proposition}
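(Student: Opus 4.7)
The plan is to prove $v_n \uparrow v$ by verifying (a) that $v$ is an upper bound for the sequence and (b) that it is the least such upper bound. Both parts will reduce to one key fact about normed Riesz spaces: the positive cone $V^+ = \{u \in V : u \geq 0\}$ is closed in the norm topology.

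First I would establish this closure property. From the lattice norm axiom (\ref{eq:latticeNorm}) and the standard Riesz-space inequality $\bigl||u|-|w|\bigr| \leq |u-w|$ (and analogously for $u^+$), the maps $u\mapsto |u|$ and $u\mapsto u^-$ are $1$-Lipschitz, hence norm-continuous. Since $V^+ = \{u : u^- = 0\}$ is the preimage of the closed set $\{0\}$ under the continuous map $u \mapsto u^-$, it is norm-closed.

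Next, for part (a), fix $n$. Because $\{v_m\}$ is increasing, $v_m - v_n \in V^+$ for every $m \geq n$. Since $v_m \to v$ in norm, we have $v_m - v_n \to v - v_n$ in norm. Closure of $V^+$ then gives $v - v_n \in V^+$, i.e.\ $v_n \leq v$. For part (b), let $w$ be any upper bound of the sequence, so $w - v_n \in V^+$ for all $n$. Passing to the norm limit and using closure of $V^+$ again yields $w - v \in V^+$, i.e.\ $v \leq w$. Hence $v = \bigvee_n v_n$, which is precisely $v_n \uparrow v$.

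The only subtlety — and what I would present as the main (though modest) obstacle — is isolating the closure of $V^+$, since that is where the lattice norm axiom (\ref{eq:latticeNorm}) genuinely enters; once that is in hand, the argument is a two-line sandwich. No $\sigma$-order completeness or existence-of-supremum lemma is needed, because the proof directly exhibits $v$ as the supremum rather than invoking Theorem \ref{thm:BLsigmaComp}.
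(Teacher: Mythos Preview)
Your argument is correct and is in fact the standard proof of this classical result. The key step---closure of the positive cone $V^+$ in the norm topology---is where the lattice norm axiom enters, and you have handled it cleanly via the $1$-Lipschitz continuity of $u \mapsto u^-$ (or $u \mapsto |u|$). Parts (a) and (b) are then immediate consequences of passing to limits inside the closed set $V^+$.

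There is, however, nothing to compare your argument against: the paper does not supply its own proof of this proposition. It is stated with an attribution to \cite{2012:zaanenIntroduction}, Theorem~15.3, and used as a black box in the discussion of order versus norm convergence in Banach lattices. Your write-up would serve perfectly well as a self-contained replacement for that citation.
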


\noindent In a Banach lattice we have the following stronger property.

\begin{proposition}[Lemma 16.1 and Theorem 16.2 of \cite{2012:zaanenIntroduction}]
If $\{v_n\}_{n\in\N}$ is a sequence of positive vectors in a Banach lattice such that $\sup_n \Norm{v_n}$ converges, then $\bigvee_n v_n$ exists and $\Norm{\bigvee_n v_n}=\bigvee_n\Norm{v_n}$.
\end{proposition}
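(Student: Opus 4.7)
My plan combines three ingredients: the $\sigma$-order completeness of Banach lattices (Theorem \ref{thm:BLsigmaComp}), the preceding proposition transferring norm convergence to order convergence for increasing sequences, and the lattice norm property in (\ref{eq:latticeNorm}). The natural reading is that $\{v_n\}$ is monotone increasing; this is the substantive case and the one consistent with the stated norm identity, since a general positive sequence reduces to this by passing to the partial suprema $w_n := v_1 \vee \cdots \vee v_n$ and noting that $\bigvee_n v_n = \bigvee_n w_n$ whenever either exists.

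First I would observe that the lattice norm property applied to the chain $0 \leq v_n \leq v_{n+1}$ shows $\{\Norm{v_n}\}$ to be an increasing real sequence, so it converges to $s := \sup_n \Norm{v_n}$ in $\R$, finite by hypothesis. The main obstacle is then to upgrade this numerical convergence to norm convergence of $\{v_n\}$ in the Banach lattice; in a general Banach lattice this does not follow from norm-boundedness alone and requires the genuine Banach-lattice argument from \cite{2012:zaanenIntroduction}, Lemma 16.1. The strategy is to extract a subsequence $v_{n_k}$ whose numerical increments $\Norm{v_{n_{k+1}}} - \Norm{v_{n_k}}$ are summable, exploit the positive telescoping differences $v_{n_{k+1}} - v_{n_k} \geq 0$ and the Banach lattice structure to bound their norms, and then invoke completeness to conclude that the telescoping series converges absolutely in norm. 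The subsequential norm convergence thus obtained is then propagated to the full sequence using monotonicity.

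Once $v_n \to v$ in norm is established for some $v$, the preceding proposition (norm convergence of an increasing sequence implies order convergence) delivers $v_n \uparrow v$, which is precisely the statement that $v = \bigvee_n v_n$ exists. This establishes the first conclusion, and also witnesses concretely the order-bounded instance of $\sigma$-order completeness appealed to in Theorem \ref{thm:BLsigmaComp}.

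Finally, the norm identity follows from two inequalities. The lattice norm property applied to $v \geq v_n$ gives $\Norm{v} \geq \Norm{v_n}$ for all $n$, hence $\Norm{v} \geq s$. Conversely, continuity of the norm applied to $v_n \to v$ gives $\Norm{v} = \lim_n \Norm{v_n}$, which equals $s$ by the first step. Combining the two yields $\Norm{\bigvee_n v_n} = s = \bigvee_n \Norm{v_n}$, closing the argument.
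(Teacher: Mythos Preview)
The paper does not supply its own proof of this proposition; it is simply quoted from Zaanen (Lemma~16.1 and Theorem~16.2), so there is no in-paper argument to compare your attempt against.

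On the substance, your central step has a genuine gap. You try to show that an increasing positive sequence $\{v_n\}$ with $\sup_n \Norm{v_n}<\infty$ is norm-Cauchy by passing to a subsequence with summable \emph{numerical} increments $\Norm{v_{n_{k+1}}}-\Norm{v_{n_k}}$ and then ``exploiting the Banach lattice structure'' to control the \emph{vector} increments $\Norm{v_{n_{k+1}}-v_{n_k}}$. No such control exists in a general Banach lattice. In $\ell^\infty$, take $v_n=(1,\ldots,1,0,0,\ldots)$ with $n$ ones: the sequence is positive and increasing with $\Norm{v_n}=1$ for every $n$, so all numerical increments vanish, yet $\Norm{v_m-v_n}=1$ for $m\neq n$ and the sequence is not Cauchy. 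Hence the route through the preceding proposition (norm convergence $\Rightarrow$ order convergence for monotone sequences) is blocked. Worse, the analogous construction in $C[0,1]$ (continuous increasing approximants of $\one_{[0,1/2]}$, all of sup-norm $1$) has no supremum in $C[0,1]$ at all, so the statement as literally printed is already delicate. Zaanen's Lemma~16.1 and Theorem~16.2 in fact use the stronger hypothesis $\sum_n\Norm{v_n}<\infty$; under that hypothesis your telescoping sketch \emph{does} go through, since absolute convergence of the series gives a norm limit of the partial sums, and the preceding proposition then identifies that limit as the order supremum. Your side remark reducing a general positive sequence to an increasing one via $w_n=v_1\vee\cdots\vee v_n$ also fails to preserve the hypothesis: $\sup_n\Norm{v_n}<\infty$ does not imply $\sup_n\Norm{w_n}<\infty$ (take $v_n=e_n$ in $\Lps[1]$).
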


It can also happen that order convergence implies norm convergence. A lattice norm on a Riesz space is called \emph{$\sigma$-order continuous} if $v_n\downarrow 0$ ($v_n$ is a decreasing sequence whose infimum is 0) implies $\Norm{v_n}\downarrow 0$.

\begin{example}\label{ex:LpSpaces2}
For $1\leq p< \infty$, the $\Lps$-norm is \emph{$\sigma$-order continuous}, and thus order convergence and norm convergence coincide. However, for $p=\infty$ this is not the case as the following simple example shows. Consider the sequence of essentially bounded functions $v_n=\one_{[n,+\infty[}$: it is decreasing for the order on $\Lps[\infty](\R,\lambda)$ with the constant function $0$ as its infimum, i.e.\@ $v_n\downarrow 0$. However $\Norm{v_n}=1$ for all $n$. 
\end{example}

Many types of morphisms  between Banach lattices are considered in the literature but most are at least \emph{linear} and \emph{positive}, that is to say they send positive vectors to positive vectors. From now on, we will assume that all morphisms are positive (linear) operators. Other than that, we will only mention two additional properties, corresponding to the two modes of convergence which we have examined. The first notion is very well-known: a linear operator $T: V\to W$ between normed vector spaces is called \emph{norm-bounded} if there exists $C\in \R$ such that $\Norm{T v}\leq C\Norm{v}$ for every $v\in V$. The following result is familiar:
\begin{theorem}
An operator $T: V\to W$ between normed vector spaces is norm-bounded iff it is continuous.
\end{theorem}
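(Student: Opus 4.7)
The plan is to prove both implications separately, exploiting the fact that $T$ is linear so continuity reduces to continuity at the origin (and $T(0) = 0$).

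For the forward direction (norm-bounded implies continuous), I would pick the constant $C$ witnessing norm-boundedness and show $T$ is in fact uniformly continuous. Given $\epsilon > 0$, choose $\delta = \epsilon / C$; then for $v, v' \in V$ with $\Norm{v - v'} < \delta$, linearity gives $\Norm{Tv - Tv'} = \Norm{T(v - v')} \leq C \Norm{v - v'} < \epsilon$. This is a one-line calculation, so I would present it without fuss.

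For the converse (continuous implies norm-bounded), the key observation is that by linearity continuity at any point is equivalent to continuity at $0$. So I would use continuity at $0$ with $\epsilon = 1$ to extract a $\delta > 0$ such that $\Norm{v} < \delta$ implies $\Norm{Tv} < 1$. Then, for an arbitrary nonzero $v \in V$, rescale: set $w = \frac{\delta}{2\Norm{v}} v$, so that $\Norm{w} = \delta/2 < \delta$ and hence $\Norm{Tw} < 1$. Using linearity once more, $\Norm{Tw} = \frac{\delta}{2\Norm{v}} \Norm{Tv}$, from which $\Norm{Tv} \leq \frac{2}{\delta} \Norm{v}$. Taking $C = 2/\delta$ (and noting the bound holds trivially at $v = 0$) yields norm-boundedness.

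There is essentially no obstacle here; this is a completely standard textbook result. The only subtlety worth flagging is that linearity is used twice in the converse direction: first to reduce global continuity to continuity at $0$, and second to transfer the local estimate on a small ball to a global estimate via rescaling. Since the rest of the paper only invokes this theorem to move freely between the two viewpoints, I would keep the proof short and cite a reference such as Aliprantis for a textbook treatment rather than expand it further.
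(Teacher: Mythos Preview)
Your argument is correct and entirely standard; there is nothing to fix. Note, however, that the paper does not actually supply a proof of this theorem---it is stated as a ``familiar'' result and left unproved---so your write-up already goes beyond what the paper provides, and your suggestion to keep it brief with a reference (e.g.\ to \cite{aliprantis}) is exactly in line with how the paper handles it.
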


Thus norm-bounded operators preserve norm-convergence. The corresponding order-convergence concept is defined as follows: an operator $T: V\to W$ between $\sigma$-order complete Riesz spaces is said to be \emph{$\sigma$-order continuous} if whenever $v_n\uparrow v$, $Tv=\bigvee T v_n$. It follows that we can consider two types of dual spaces on a Banach lattice $V$: on the one hand we can consider the \emph{norm-dual}:
\[
V^*=\{f: V\to \R\mid f\text{ is norm-continuous}\}
\]
and the \emph{$\sigma$-order-dual}:
\[
V\sod=\{f: V\to \R\mid f\text{ is $\sigma$-order continuous}\}
\]
The latter is sometimes known as the \emph{K\"{o}the dual} of $V$ (see \cite{1951:dieudonneKothe,2012:zaanenIntroduction}). The two types of duals coincide for a large class of Banach spaces of interest to us.
\begin{theorem}\label{thm:TwoDualsEqual}
If a Banach lattice $V$ admits a strictly positive linear functional and has a $\sigma$-order-continuous norm, then $V^*=V\sod$.
\end{theorem}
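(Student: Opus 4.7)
The plan is to establish the two set-theoretic inclusions $V\sod \subseteq V^*$ and $V^* \subseteq V\sod$ separately. Under this section's standing convention that all operators are positive, both duals consist of positive linear functionals, so it suffices to treat the positive case.

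For $V\sod \subseteq V^*$, I would invoke the classical fact that every positive linear operator between Banach lattices is automatically norm-bounded. The standard proof is by contradiction using norm-completeness of $V$: if a positive $f$ were unbounded, pick $v_n \in V^+$ with $\Norm{v_n} = 1$ and $f(v_n) \geq 4^n$; the series $v = \sum_n 2^{-n} v_n$ converges absolutely in $V$ (since $\sum 2^{-n} < \infty$), its partial sums form a monotone sequence, and by the Proposition on monotone norm-convergent sequences recalled in Section \ref{sec:BL} we get $v \geq 2^{-n} v_n$ for every $n$. Positivity of $f$ then yields $f(v) \geq 2^{-n} f(v_n) \geq 2^n$ for all $n$, contradicting $f(v) \in \R$. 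Thus every $\sigma$-order-continuous positive functional is norm-continuous; note that this direction uses neither of the theorem's hypotheses.

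For $V^* \subseteq V\sod$, the $\sigma$-order-continuity of the norm does the work. Given a positive $f \in V^*$ and a sequence $v_n \downarrow 0$ in $V$, that hypothesis yields $\Norm{v_n} \downarrow 0$; norm-continuity of $f$ then gives $0 \leq f(v_n) \leq \Norm{f} \cdot \Norm{v_n} \to 0$. Since $\{f(v_n)\}$ is a decreasing sequence of non-negative reals converging to $0$, we have $f(v_n) \downarrow 0$ in $\R$. By linearity this is equivalent to the statement that $v_n \uparrow v$ implies $f(v_n) \uparrow f(v)$, which is exactly the definition of $\sigma$-order continuity of $f$.

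The strictly positive functional hypothesis plays no role in either inclusion above; it becomes relevant only when one drops the positivity convention and considers arbitrary signed norm-continuous functionals. In that setting, a Kakutani-type representation of $V$ as an ideal of measurable functions on a $\sigma$-finite measure space guarantees that the Jordan/Riesz decomposition $f = f^+ - f^-$ sends $V^*$ into pairs of positive norm-continuous functionals, reducing the signed case to the positive one treated above. The main subtlety I would watch is therefore the second inclusion, where one must be careful to apply the hypothesis on $\Norm{\cdot}$ to transport order convergence into norm convergence; Example \ref{ex:LpSpaces2} (with $V = \Lps[\infty]$, whose norm fails $\sigma$-order-continuity) confirms that without this hypothesis the inclusion $V^* \subseteq V\sod$ genuinely fails.
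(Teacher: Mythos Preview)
The paper does not actually prove Theorem~\ref{thm:TwoDualsEqual}: it is stated in Section~\ref{sec:BL} as a background fact from the theory of Banach lattices, without proof or explicit citation, so there is no argument in the paper against which to compare your proposal.

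Your argument is correct and is the standard one. The direction $V\sod\subseteq V^*$ is just the classical fact that positive linear operators between Banach lattices are automatically norm-bounded, and your sketch of that is fine. The direction $V^*\subseteq V\sod$ follows exactly as you say from $\sigma$-order-continuity of the norm. You are also right that, under the paper's standing positivity convention, the strictly-positive-functional hypothesis plays no role. One small simplification of your closing remark on the signed case: no Kakutani-type representation is needed, because the norm dual of any Banach lattice is itself a Banach lattice, so every $f\in V^*$ already has a Riesz decomposition $f=f^{+}-f^{-}$ into positive bounded functionals; this reduces the signed case to the positive one directly, still without invoking the extra hypothesis.
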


\begin{example}\label{ex:LpSpaces3}
The result above can directly be applied to our running example: given a measured space $(X,\mu)$ and an integer $1\leq p<\infty$, the Lebesgue integral provides a strictly positive functional on $\Lps(X,\mu)$, and we already know from Example \ref{ex:LpSpaces2} that $\Lps(X,\mu)$ has a $\sigma$-order-continuous norm. It follows that
\[
\Lps(X,\mu)^*=\Lps(X,\mu)\sod
\]
Moreover, it is well-known that if $(p,q)$ are H\"{o}lder conjugate and $1<p,q<\infty$, then $\Lps(X,\mu)^*=\Lps[q](X,\mu)$, and thus $\Lps(X,\mu)\sod=\Lps[q](X,\mu)$. It is also known that $\Lps[1](X,\mu)^*=\Lps[\infty](X,\mu)$, and thus $\Lps[1](X,\mu)\sod=\Lps[\infty](X,\mu)$.

However Theorem \ref{thm:TwoDualsEqual} does \emph{not} hold for $\Lps[\infty](X,\mu)$ since the $\Lps[\infty]$-norm is not $\sigma$-order continuous, as was shown in Example \ref{ex:LpSpaces2}. It is well-known that $\Lps[\infty](X,\mu)^*\neq \Lps[1](X,\mu)$, and in fact $\Lps[\infty](X,\mu)^*$ can be concretely described as the Banach lattice  $\mathrm{ba}(X,\mu)$ of \emph{charges} (i.e.\@ finitely additive finite signed measures) which are absolutely continuous w.r.t, $\mu$ on $X$ (see \cite[IV.8.16]{1971:dunfordLinear}). However, as is shown in e.g. \cite{2012:zaanenIntroduction,ampba}  
\begin{equation}
\Lps[\infty](X,\mu)\sod = \Lps[1](X,\mu)
\end{equation}
\end{example}

As Examples \ref{ex:LpSpaces2} and \ref{ex:LpSpaces3} show, the $(-)\sod$ operation brings a lot of symmetry to the relationship between $\Lps$-spaces since
\[
\Lps(X,\mu)\sod = \Lps[q](X,\mu)
\] 
for \emph{any} H\"{o}lder conjugate pair  $1\leq p\leq \infty$. For this reason we will consider the category $\BL$ whose objects are Banach lattices and whose morphisms are $\sigma$-order continuous positive operators. Note that the K\"{o}the dual of a Banach lattice is a Banach lattice, and it easily follows that $(-)^\sigma$ in fact defines a contravariant functor $\BL\op\to\BL$ which acts on morphisms by pre-composition. As we will now see, $\BL$ is the category in which predicate and state transformers are most naturally defined. 

\section{From Borel kernels to Banach lattices  }
%{Back and forth between categories}
\label{sec:BackForth}

\paragraph{The functors $\Lpfop$ and $\Lpf$.}
For $1\leq p\leq \infty$, the operation which associates to a $\Krn$-object $(X,\mu)$ the space $\Lps(X,\mu)$ can be thought of as either a contravariant or a covariant functor. We define the functors $\Lpfop: \Krn\to\BL\op, 1\leq p\leq\infty$ as expected on objects, and on $\Krn$-morphisms $f: X\klar Y$ via the well-known `predicate transformer' perspective:
\[
\Lpfop(f): \Lps(Y,\nu)\to \Lps(X,\mu), \phi\mapsto\lambda x. \int_Y \phi \,df(x)=\phi\klcirc f
\]
For a proof that this defines a functor see \cite{fossacs2017}.
We define the covariant functors $\Lpf: \Krn\to\BL, 1\leq p\leq \infty$ as $\Lpf = \Lpfop\circ (-)\dg$.
%These are clearly functors since $(-)\dg$ and $\Lpfop$ are.  

%The action of the functor $\Lpf[1]$ on morphism has the following interesting property, which is a direct application of the change of variables formula (Theorem~\ref{thm:chgVarKrn}).
%\begin{proposition}\label{prop:LpPreservesIntegrals}
%If $f: (X,\mu)\klar(Y,\nu)$ is deterministic and $\phi\in \Lps[1](X,\mu)$ is positive then 
%\[
%\int_X \phi d\mu = \int_Y \Lpf[1](f)(\phi) d\nu.
%\]
%In other words, $\Lpf[1](f)$ preserve Lebesgue integrals.
%\end{proposition}

\paragraph{The functor $\abscont$.}

An \emph{ideal} of a Riesz space $V$ is a sub-vector space $U\subseteq V$ with the property that if $\Absval{u}\leq \Absval{v}$ and $v\in U$ then $u\in U$. An ideal $U$ is called a \emph{band} when for every subset $D\subseteq U$ if $\bigvee D$ exists in $V$, then it also belongs to $U$. Every band in a Banach lattice is itself a Banach lattice. Of particular importance is the band $B_v$ generated by a singleton $\{v\}$, which can be described explicitly as
\[
B_v=\{w\in V\mid (\Absval{w}\wedge n\Absval{v}) \uparrow \Absval{w}\}
\]
 
\begin{example}
Let $X$ be an SB-space and $\ca(X)$ denote the set of measures of bounded variation on $X$. It can be shown (\cite[Th 10.56]{aliprantis}) that $\ca(X)$ is a Banach lattice. The linear structure on $\ca(X)$ is as expected, the Riesz space structure is given by
\[
(\mu\vee \nu)(A)=\sup\{\mu(B)+\nu(A\setminus B)\mid B\text{ measurable }, B\subseteq A\}
\]
and the dual definition for the meet operation. The norm is given by the total variation i.e.
\[
\Norm{\mu}=\sup\left\{\sum_i^n \Absval{\mu(A_i)}\bigg| \{A_1,\ldots,A_n\}\text{ a meas. partition of }X\right\}
\]
Given $\mu\in \ca(X)$, the band $B_\mu$ generated by $\mu$ is just the set of measures of bounded variation which are absolutely continuous w.r.t. $\mu$. In particular $B_\mu$ is a Banach lattice.
\end{example}
\noindent We can now define the functor $\abscont:\Krn\to\BL$ by:
\[
\begin{cases}
\abscont(X,\mu):=B_\mu\\
\abscont f: \abscont(X,\mu)\to \abscont(Y,\nu) , \rho\mapsto f\klcirc \rho
\end{cases}
\]
We will usually write $\abscont(X,\mu)$ as $\abscont[\mu](X)$.

\begin{proposition}\label{prop:MFunctor}
  Let $f : (X,\mu) \klar (Y,\nu)$ be a $\Krn$ arrow.
  Let $\rho$ be a finite measure on $X$ such that $\rho \ll \mu$.
  Then $f \klcirc \rho \ll \nu$, and thus $\abscont$ defines a functor.
\end{proposition}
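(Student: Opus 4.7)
The plan is to split the proposition into two parts: (i) the absolute continuity claim $f \klcirc \rho \ll \nu$, and (ii) the functoriality of $\abscont$, which also entails checking that $\abscont f$ is a well-defined $\BL$-morphism.

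For (i), I would start from a measurable $B \subseteq Y$ with $\nu(B) = 0$. Since $f : (X,\mu) \klar (Y,\nu)$ is a $\Krn$-morphism, by the very definition of the slice category $\ast \downarrow \Klm$ we have $\nu = f \klcirc \mu$, so $\int_X f(x)(B) \, d\mu = 0$. As $f(x)(B) \geq 0$ everywhere, this forces $f(x)(B) = 0$ for $\mu$-almost every $x$. The hypothesis $\rho \ll \mu$ then gives $f(x)(B) = 0$ for $\rho$-almost every $x$, and consequently $(f \klcirc \rho)(B) = \int_X f(x)(B) \, d\rho = 0$. So $f \klcirc \rho \ll \nu$, and it lies in $\abscont[\nu](Y)$; finiteness is automatic because $(f \klcirc \rho)(Y) = \int_X 1 \, d\rho = \rho(X) < \infty$.

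For (ii), I would verify that $\abscont f : \abscont[\mu](X) \to \abscont[\nu](Y)$ is linear, positive, and $\sigma$-order continuous. Linearity and positivity follow directly from the integral formula for Kleisli composition of kernels with measures. For $\sigma$-order continuity, suppose $\rho_n \uparrow \rho$ in $\abscont[\mu](X)$; since the lattice operations in $\ca(X)$ are computed setwise on positive measures, one has $\rho_n(A) \uparrow \rho(A)$ for every measurable $A$. A standard approximation argument (indicator functions, then simple functions, then monotone convergence on nonnegative measurable functions) yields $\int g \, d\rho_n \uparrow \int g \, d\rho$ for every nonnegative measurable $g$. Applying this to $g(x) = f(x)(B)$ for each measurable $B \subseteq Y$ gives $(f \klcirc \rho_n)(B) \uparrow (f \klcirc \rho)(B)$, so $\abscont f(\rho_n) \uparrow \abscont f(\rho)$ in $\abscont[\nu](Y)$.

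Functoriality is then immediate from the monad structure: $\abscont(\IdMor_{(X,\mu)})(\rho) = \delta_X \klcirc \rho = \rho$, and associativity of Kleisli composition gives $\abscont(g \klcirc f)(\rho) = (g \klcirc f) \klcirc \rho = g \klcirc (f \klcirc \rho) = \abscont g \,(\abscont f\,(\rho))$. The main obstacle is step (i): it is crucial that we use the compatibility $\nu = f \klcirc \mu$ built into $\Krn$-objects-as-arrows; without it there would be no reason for pushforwards of $\mu$-absolutely continuous measures to remain absolutely continuous with respect to an arbitrary $\nu$. The $\sigma$-order continuity check is technically the most delicate remaining point, but it reduces cleanly to monotone convergence once one recalls that suprema of increasing nets of positive measures in $\ca(X)$ are setwise.
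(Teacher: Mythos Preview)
Your argument for (i) is correct and in fact more direct than the paper's. The paper approximates $x\mapsto f(x)(B)$ by an increasing chain of simple functions $f^B_n=\sum_i \alpha_i^n\one_{A_i^n}$, applies the MCT to both $\int f^B_n\,d\mu$ and $\int f^B_n\,d\rho$, and then argues termwise: $\nu(B)=0$ forces every summand $\alpha_i^n\mu(A_i^n)$ to vanish, hence $\alpha_i^n\rho(A_i^n)=0$ by $\rho\ll\mu$, hence $(f\klcirc\rho)(B)=0$. You bypass this entirely with the observation that a nonnegative measurable function with zero $\mu$-integral is zero $\mu$-a.e., and then transfer that null set via $\rho\ll\mu$. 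Both arguments hinge on the compatibility $\nu=f\klcirc\mu$, as you note; your route just uses it more efficiently.

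You also go further than the paper by checking explicitly that $\abscont f$ is linear, positive and $\sigma$-order continuous, and by spelling out preservation of identities and composition. The paper's proof covers only the absolute continuity claim and leaves the remaining functoriality implicit. Your $\sigma$-order continuity argument is fine: for increasing sequences of positive measures in a band of $\ca(X)$, suprema are setwise, so the MCT applied to $g(x)=f(x)(B)$ does the job.
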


%Whilst $\Giry$ preserve ccds by Bochner's theorem~\cite[Theorem 2.5]{mfps31} -- and is thus essentially described on Borel spaces by its action on \emph{finite Borel spaces} -- this is not the case for $\abscont$, as the following example shows. 
%
%\begin{example}[$\abscont$ does not preserve ccds]
%Consider again the Cantor space $\Ct$ which can be written as the limit of Diagram (\ref{diag:Cantorccd}). Again, we denote by $p_n$ the projections $\Ct\to 2^n$ truncating an infinite binary sequence at length $n$. Consider also the Haar measure $\CtM$ on $\Ct$: it is the limit of the uniform measures $\CtM_n$ on $2^n$ (by Bochner's theorem).
%Consider now the measure $\delta_{(1,0,0,\ldots)}$ on $\Ct$. It is clearly \emph{not} absolutely continuous w.r.t. to $\CtM$. However, $d_n=\Giry p_n=(\delta_{\nicefrac{1}{2}})=\delta_{(1,0,\ldots,0)}$, which is absolutely continuous w.r.t. the counting measure $c_n$ since every measure on $2^n$ is trivially absolutely continuous w.r.t. to $\CtM_n$. Again by Bochner's theorem $\delta_{(1,0,0,\ldots)}$ is the limit of the measures $d_n$ and for each $n, d_n\ll c_n$. But $\delta_{(1,0,0,\ldots)}\not\ll \CtM$, which proves that $\abscont$ does not preserve ccds.
%\end{example}

\paragraph{Radon-Nikodym is natural.}

We now present a first pair of natural transformations which will establish a natural isomorphism between the functors $\Lpf[1]$ and $\abscont$. First, we define the \emph{Radon-Nikodym transformation} $\rn: \abscont\to \Lpf[1]$ at each $\Krn$-object $(X,\mu)$ by the map 
\[
\rn_{(X,\mu)}: \abscont[\mu](X)\to \Lps[1](X,\mu),\quad  \rn_{(X,\mu)}(\rho)=\frac{d\rho}{d\mu}
\]
where $\nicefrac{d\rho}{d\mu}$ is of course the Radon-Nikodym derivative of $\rho$ w.r.t. $\mu$.
The fact that this transformation defines a positive operator between Banach lattices is simply a restatement of the usual Radon-Nikodym theorem \cite[III.10.7.]{1971:dunfordLinear},  combined with the well-known linearity property of the Radon-Nikodym derivative. To see that it is also $\sigma$-order-continuous, consider a monotone sequence $\mu_n\uparrow\mu$ converging in order to $\mu$ in $\abscont[\nu](X)$. This means that for any measurable set $A$ of $X$, $\lim_{n\to\infty}\mu_n(A)=\mu(A)$. Since $(\nicefrac{d\mu_n}{d\nu})_{n\in\N}$ is bounded in $\Lps[1]$-norm the function $g=\bigvee_n \nicefrac{d\mu_n}{d\nu}$ exists and is simply the pointwise limit $g(x)=\lim_{n\to\infty}\nicefrac{d\mu_n}{d\nu}(x)$. It now follows from the monotone convergence theorem (MCT) that
\[
\ssint{A} g d\nu\sseq \ssint{A} \lim_{n\to\infty} \frac{d\mu_n}{d\nu} d\nu \sseq\lim_{n\to\infty}\ssint{A}  \frac{d\mu_n}{d\nu} d\nu\sseq \lim_{n\to\infty}\mu_n(A)\sseq \mu(A)
\]
in other words, $g=\nicefrac{d\mu}{d\nu}$ and $\rn$ is well-defined. That $\rn$ is also natural has -- to our knowledge -- never been published. 

\begin{theorem}\label{thm:RNnatural}
The Radon-Nikodym transformation is natural.
\end{theorem}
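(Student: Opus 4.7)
}
Naturality amounts to showing that for every $\Krn$-morphism $f:(X,\mu)\klar(Y,\nu)$ the square
\[
\xymatrix{
\abscont[\mu](X) \ar[r]^-{\rn_{(X,\mu)}} \ar[d]_{\abscont f} & \Lps[1](X,\mu) \ar[d]^{\Lpf[1] f} \\
\abscont[\nu](Y) \ar[r]_-{\rn_{(Y,\nu)}} & \Lps[1](Y,\nu)
}
\]
commutes. Unfolding the definitions, for $\rho\in\abscont[\mu](X)$ this means proving
\begin{equation}\label{eq:rnGoal}
\frac{d(f\klcirc\rho)}{d\nu}\;=\;\frac{d\rho}{d\mu}\klcirc f\dg \qquad \text{in }\Lps[1](Y,\nu).
\end{equation}
Note that the left-hand side makes sense: by Proposition~\ref{prop:MFunctor}, $f\klcirc\rho\ll\nu$, so its Radon-Nikodym derivative with respect to $\nu$ exists and is unique up to $\nu$-null sets.

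The strategy is to test \eqref{eq:rnGoal} by integrating both sides against indicators of arbitrary measurable $B\subseteq Y$. For the left-hand side, the defining property of the Radon-Nikodym derivative together with the Kleisli definition of $f\klcirc\rho$ gives
\[
\int_B \frac{d(f\klcirc\rho)}{d\nu}\,d\nu \;=\; (f\klcirc\rho)(B) \;=\; \int_X f(x)(B)\,d\rho \;=\; \int_X \one_B\klcirc f(x)\cdot\frac{d\rho}{d\mu}(x)\,d\mu,
\]
where the last equality uses $d\rho=\tfrac{d\rho}{d\mu}\,d\mu$.

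The crucial step is to rewrite this last integral using the Bayesian inverse. Theorem~\ref{thm:bayesianinversion} supplies the identity \eqref{eq:disintegrationEqu} for pairs of indicators. Fixing $B$ and viewing both sides of \eqref{eq:disintegrationEqu} as linear functionals of $\one_A$, one extends the identity from indicators to simple functions by linearity, then to arbitrary non-negative measurable functions by the monotone convergence theorem, and finally to any $\mu$-integrable function by splitting into positive and negative parts. Applied to $\psi=\tfrac{d\rho}{d\mu}\in\Lps[1](X,\mu)$, this yields
\[
\int_X \frac{d\rho}{d\mu}(x)\cdot f(x)(B)\,d\mu \;=\; \int_Y \left(\int_X \frac{d\rho}{d\mu}\,df\dg(y)\right)\cdot \one_B(y)\,d\nu \;=\; \int_B \frac{d\rho}{d\mu}\klcirc f\dg\,d\nu.
\]
Combining the two displayed chains shows that the integrals of the two sides of \eqref{eq:rnGoal} over any measurable $B\subseteq Y$ coincide, so the two functions are equal $\nu$-a.e., i.e.\ equal as elements of $\Lps[1](Y,\nu)$.

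The only nontrivial step is the bootstrapping of \eqref{eq:disintegrationEqu} from indicators to an arbitrary $\Lps[1]$ integrand; this is a standard measure-theoretic escalation but must be done carefully because $\tfrac{d\rho}{d\mu}$ is only defined $\mu$-a.e. The fact that $\sim$-equivalent kernels induce the same integrals (Theorem~\ref{thm:chgVarKrn}) ensures this poses no issue.
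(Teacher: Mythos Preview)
Your proposal is correct and follows essentially the same route as the paper. The paper isolates the bootstrapping of \eqref{eq:disintegrationEqu} from indicators to an arbitrary integrable function as a separate lemma (stated there for the $Y$-side argument and then applied to $f\dg$, using $(f\dg)\dg=f$), whereas you carry out the same monotone-class extension inline on the $X$-side; the rest of the argument---testing against indicators of measurable $B\subseteq Y$ and invoking uniqueness of the Radon--Nikodym derivative---is identical.
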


Secondly, we define the \emph{Measure Representation transformation} $\dr: \Lpf[1]\to \abscont$ at each $\Krn$-object $(X,\mu)$ by the map $\dr_{(X,\mu)}:\Lpf[1] (X,\mu) \to \abscont(X,\mu)$ defined as
\[
\dr_{(X,\mu)}(f)(B_X)=\int_{B_X} fd\mu
\]
This is a very well-known construction in measure theory, and the fact that $\dr_{(X,\mu)}$ is a $\sigma$-order continuous operator between Banach lattices is immediate from the linearity of integrals and the MCT.

\begin{theorem}\label{thm:measRepNat}
The Measure Representation transformation is natural.
\end{theorem}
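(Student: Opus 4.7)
The plan is to prove commutativity of the naturality square induced by a $\Krn$-morphism $f\colon (X,\mu)\klar (Y,\nu)$ by evaluating both legs on an arbitrary $\phi\in\Lps[1](X,\mu)$ and showing that the resulting elements of $\abscont[\nu](Y)$ agree on every measurable $B\subseteq Y$. Unpacking definitions, the outer route $\abscont(f)\circ\dr_{(X,\mu)}$ sends $\phi$ to the Kleisli composite $f\klcirc(\phi\cdot\mu)$, whose value on $B$ is $\int_X f(x)(B)\,\phi(x)\,d\mu$ (by integration against the density $\phi$). The inner route $\dr_{(Y,\nu)}\circ\Lpf[1](f)=\dr_{(Y,\nu)}\circ\Lpfop[1](f\dg)$ sends $\phi$ to the measure with density $\phi\klcirc f\dg$ relative to $\nu$, whose value on $B$ is $\int_Y\one_B(y)\cdot(\phi\klcirc f\dg)(y)\,d\nu$.

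Naturality therefore reduces to the single identity
$$\int_X \phi(x)\cdot f(x)(B)\,d\mu \;=\; \int_Y \one_B(y)\cdot (\phi\klcirc f\dg)(y)\,d\nu \qquad (\star)$$
holding for every $\phi\in\Lps[1](X,\mu)$ and every measurable $B\subseteq Y$. For indicator functions $\phi=\one_A$ this is exactly the characterising equation~(\ref{eq:disintegrationEqu}) of Theorem~\ref{thm:bayesianinversion}, since $(\one_A\klcirc f\dg)(y)=\int_X\one_A\,df\dg(y)=f\dg(y)(A)$. From there I would bootstrap $(\star)$ by the usual three-step argument: linearity gives the identity for simple functions, the monotone convergence theorem promotes it to nonnegative measurable $\phi\in\Lps[1](X,\mu)$, and the Jordan decomposition $\phi=\phi^+-\phi^-$ handles signed integrable $\phi$.

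The main subtlety is making sure the MCT step applies uniformly on both sides of $(\star)$: on the left the monotone limit passes through a single integral against $\mu$, whereas on the right it must be exchanged with the iterated integral $\int_Y\int_X(-)\,df\dg(y)\,d\nu$. The inner exchange is handled by $\sigma$-order continuity of $\phi\mapsto\phi\klcirc f\dg$ (equivalently, MCT applied pointwise in $y$ against each probability measure $f\dg(y)$), and the outer exchange by MCT on $(Y,\nu)$; integrability of the limit on the right is then automatic from $(\star)$ applied to the simple approximants, which are dominated by $\Absval{\phi}$. Once $(\star)$ holds for every $\phi\in\Lps[1](X,\mu)$, the two legs of the square produce identical measures in $\abscont[\nu](Y)$, which completes the proof. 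I expect no difficulty with well-definedness modulo the quotient $\sim$, since both legs only depend on $\phi$ up to $\mu$-a.e.\ equality and on $f\dg$ up to $\nu$-a.e.\ equality, and $(\star)$ is stable under such modifications.
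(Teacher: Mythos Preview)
Your proposal is correct and follows essentially the same route as the paper: both reduce naturality to the identity $(\star)$, establish it for indicators via the defining equation~(\ref{eq:disintegrationEqu}) of $f\dg$, and then bootstrap by linearity and MCT. The only cosmetic difference is that the paper packages $(\star)$ as a reusable lemma (Lemma~\ref{lem:rnHelp}, proved during the Radon--Nikodym naturality argument and applied here with $f\dg$ in place of $f$) together with Lemma~\ref{lem:drHelp} for the $\abscont(f)\circ\dr_{(X,\mu)}$ side, whereas you inline both computations directly.
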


\paragraph{Riesz representations are natural.}

We now present a second pair of natural transformations which will establish a natural isomorphism between $(-)\sod\circ\Lpfop[\infty]$ and $\abscont$. First, we define the \emph{Riesz Representation} transformation  $\mr: (-)\sod\circ\Lpfop[\infty]\to \abscont$ at each $\Krn$-object $(X,\mu)$ by the map $\mr_{(X,\mu)}: (-)\sod\circ\Lpfop[\infty](X,\mu) \to \abscont(X,\mu)$ defined as
\[
\mr_{(X,\mu)}(F)(B_X)=F(\one_{B_X})
\]
This construction is key to a whole collection of results in functional analysis commonly known as Riesz Representation Theorems (see \cite{aliprantis} Chapter 14 for an overview). One can readily check that the Riesz Representation transformation is well-defined: $\mr_{(X,\mu)}(F)(\emptyset)=F(0)=0$ and the $\sigma$-additivity of $\mr_{(X,\mu)}(F)$ follows from the $\sigma$-order-continuity of $F$.  To see that $\mr_{(X,\mu)}(F)\ll \mu$, assume that $\mu(B_X)=0$, then clearly $\one_{B_X}=0$ $\mu$-a.e., i.e.\@ $\one_{B_X}=0$ in $\Lps[\infty](X,\mu)$, and thus $F(\one_{B_X})=0$.

\begin{theorem}\label{thm:RieszRepNatural}
The Riesz Representation transformation is natural.
\end{theorem}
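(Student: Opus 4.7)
The plan is to unfold what naturality asserts, evaluate both sides of the naturality square on a measurable set, and reduce the resulting equality to an integration-versus-functional identity, whose proof is a standard monotone approximation argument made to work by the $\sigma$-order continuity built into the functor $(-)\sod$.

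Fix a $\Krn$-morphism $f:(X,\mu)\klar(Y,\nu)$ and some $F\in\Lps[\infty](X,\mu)\sod$. I must show that
\[
\abscont(f)\pars{\mr_{(X,\mu)}(F)} \;=\; \mr_{(Y,\nu)}\pars{((-)\sod\circ\Lpfop[\infty])(f)(F)}
\]
as elements of $\abscont[\nu](Y)$. Since both sides are measures, it suffices to evaluate them on an arbitrary measurable set $B\subseteq Y$. Writing $\rho := \mr_{(X,\mu)}(F)$, the left-hand side becomes $(f\klcirc \rho)(B) = \int_X f(x)(B)\,d\rho(x)$ by the definition of $\abscont(f)$ and Kleisli composition (cf.\@ Eq.~\eqref{eq:integrationDef}). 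The right-hand side unfolds as $F(\Lpfop[\infty](f)(\one_B))$, which by definition of $\Lpfop[\infty]$ equals $F\pars{x\mapsto f(x)(B)}$. The bounded measurable function $g_B:x\mapsto f(x)(B)$ lies in $\Lps[\infty](X,\mu)$ since $0\leq g_B\leq 1$, so the two expressions to be compared are
\[
\int_X g_B\,d\rho \;=\; F(g_B).
\]

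This reduces naturality to the following key lemma: for every $g\in\Lps[\infty](X,\mu)$ with $g\geq 0$, $\int_X g\,d\rho = F(g)$. I would prove it by the usual ascending hierarchy. For $g=\one_A$ with $A\in\sigAlg$, this is precisely the definition $\rho(A)=F(\one_A)$ of $\mr_{(X,\mu)}$. Linearity of both sides then extends the identity to non-negative simple functions. For a general non-negative $g\in\Lps[\infty](X,\mu)$, pick the standard increasing sequence of simple functions $s_n\uparrow g$ pointwise, bounded by $\Norm{g}_\infty$. In $\Lps[\infty](X,\mu)$ the sequence satisfies $s_n\uparrow g$ in order, hence by $\sigma$-order continuity of $F$ we get $F(s_n)\uparrow F(g)$; on the other side, MCT yields $\int s_n\,d\rho\uparrow\int g\,d\rho$. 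Since $F(s_n)=\int s_n\,d\rho$ at every stage, the limits agree, proving the lemma. Applying it to $g=g_B$ gives the required equality, and linearity disposes of the sign of $g$ in the general case.

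The only delicate step is the passage from simple functions to bounded measurable ones, and it is precisely at this point that the choice of $(-)\sod$ (rather than the norm-dual) earns its keep: the naturality square simply would not close up for a merely norm-continuous $F$, because the approximation $s_n\uparrow g$ does not converge in $\Lps[\infty]$-norm (recall Example~\ref{ex:LpSpaces2}). Everything else is bookkeeping: $\mr_{(X,\mu)}(F)$ is $\mu$-absolutely continuous (already verified in the text), and $f\klcirc \rho \ll \nu$ by Proposition~\ref{prop:MFunctor}, so both composite morphisms land in the appropriate band, making the naturality square well-typed.
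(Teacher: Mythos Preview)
Your proof is correct and follows essentially the same route as the paper: the paper isolates exactly your ``key lemma'' as Lemma~\ref{lem:mrHelp} (that $F(\phi)=\int_X\phi\,d(\mr_{(X,\mu)}(F))$ for all $\phi\in\Lps[\infty](X,\mu)$), proves it by the same indicator~$\to$ simple~$\to$ monotone-limit chain, and then evaluates both sides of the naturality square on a measurable $B\subseteq Y$ just as you do. If anything, your write-up is slightly more careful than the paper's, since you make explicit that the limit step on the $F$-side uses $\sigma$-order continuity of $F$ (the paper just says ``by the MCT'', which strictly speaking covers only the integral side).
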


Finally, we define the \emph{Functional Representation transformation}  $\fr$ at each $\Krn$-object $(X,\mu)$ by the map $\fr_{(X,\mu)}: \abscont(X,\mu)\to (-)\sod\circ\Lpfop[\infty](X,\mu)$ by
\[
\fr_{(X,\mu)}(\mu)(\phi)=\int_X \phi d\mu
\]
This construction is also completely standard in measure theory, although it has never to our knowledge been seen as a natural transformation.
\begin{theorem}\label{thm:frNatural}
The Functional Representation transformation is well-defined, i.e.\@ $\fr_{(X,\mu)}$ is a $\sigma$-order continuous positive operator, and is natural.
\end{theorem}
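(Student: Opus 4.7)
The plan is to verify three things in turn: (a) for each $\rho \in \abscont[\mu](X)$, the map $\fr_{(X,\mu)}(\rho)$ is a well-defined $\sigma$-order continuous linear functional on $\Lps[\infty](X,\mu)$; (b) the assignment $\rho \mapsto \fr_{(X,\mu)}(\rho)$ is itself a positive $\sigma$-order continuous operator $\abscont[\mu](X) \to \Lps[\infty](X,\mu)\sod$; and (c) naturality in $(X,\mu)$.

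For (a), since $\rho \ll \mu$, any representative of an element $\phi \in \Lps[\infty](X,\mu)$ is essentially bounded with respect to $|\rho|$ and two $\mu$-a.e.\ equal representatives coincide $|\rho|$-a.e., so $\int_X \phi\, d\rho$ is well-defined on equivalence classes. Linearity and positivity are immediate. For $\sigma$-order continuity of $\fr_{(X,\mu)}(\rho)$, take $\phi_n \downarrow 0$ in $\Lps[\infty](X,\mu)$: the sequence is dominated by $\phi_0$ and converges pointwise to $0$ $\mu$-a.e., hence $|\rho|$-a.e., so the dominated convergence theorem with respect to $|\rho|$ gives $\int \phi_n\, d\rho \to 0$.

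For (b), linearity and positivity of the assignment follow from linearity of integration in the measure argument. For $\sigma$-order continuity, suppose $\rho_n \uparrow \rho$ in $\abscont[\mu](X)$; set $\sigma_n = \rho - \rho_n \geq 0$, so $\sigma_n \downarrow 0$. Using that $\rn$ (Theorem~\ref{thm:RNnatural}) is $\sigma$-order continuous, $\sigma_n = g_n \mu$ with $g_n \downarrow 0$ in $\Lps[1](X,\mu)$. For any $\phi \geq 0$ in $\Lps[\infty](X,\mu)$ with essential bound $M$,
\[
\int_X \phi\, d\sigma_n = \int_X \phi g_n\, d\mu \leq M\int_X g_n\, d\mu = M\sigma_n(X) \to 0,
\]
which establishes $\fr_{(X,\mu)}(\rho_n) \uparrow \fr_{(X,\mu)}(\rho)$ pointwise, i.e.\ in the order of $\Lps[\infty](X,\mu)\sod$.

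For (c), the commuting square reduces to showing, for every $f: (X,\mu) \klar (Y,\nu)$, every $\rho \in \abscont[\mu](X)$, and every $\psi \in \Lps[\infty](Y,\nu)$,
\[
\int_Y \psi\, d(f \klcirc \rho) = \int_X (\psi \klcirc f)\, d\rho.
\]
This is a change-of-variables identity extending Theorem~\ref{thm:chgVarKrn} from the probability case to arbitrary finite signed measures $\rho \ll \mu$. It is verified by the standard approximation scheme: first for indicators $\psi = \one_B$ directly from the defining equation $(f\klcirc\rho)(B) = \int_X f(x)(B)\, d\rho$; then for simple functions by linearity; for nonnegative bounded measurable functions by the MCT; and for signed $\rho$ via Jordan decomposition. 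The main obstacle is bookkeeping the interaction between the $\mu$-a.e.\ equivalence used to form $\Lps[\infty](X,\mu)$, the order structure on the band $\abscont[\mu](X) \subseteq \ca(X)$, and the absolute continuity condition — most delicately in (b), where the passage from order convergence in $\abscont[\mu](X)$ to controllable convergence of integrals is cleanest after transporting through $\rn$ into $\Lps[1]$, whose norm is $\sigma$-order continuous.
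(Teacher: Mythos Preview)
Your proposal is correct. The architecture matches the paper's proof, but the execution diverges in two places worth noting.

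For $\sigma$-order continuity of $\fr_{(X,\mu)}$ (your part (b)), the paper argues directly: given $\rho_m \uparrow \rho$ and $\phi \in \Lps[\infty]$, it approximates $\phi$ by a monotone sequence of simple functions $\phi_n$ and then justifies the interchange of the double limit $\lim_m \lim_n \int \phi_n\, d\rho_m$ by a hands-on monotonicity argument on the differences $d_{mn} = \int \phi_n\, d\rho_{m+1} - \int \phi_n\, d\rho_m$. Your route via $\rn$---writing $\rho - \rho_n = g_n\mu$ with $g_n \downarrow 0$ in $\Lps[1]$ and bounding $\int \phi g_n\, d\mu \leq \Norm{\phi}_\infty \Norm{g_n}_1$---is shorter and reuses machinery already in place; the cost is a dependence on Theorem~\ref{thm:RNnatural}, which the paper's direct argument avoids. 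Both are valid.

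For naturality (your part (c)), the paper simply invokes Theorem~\ref{thm:chgVarKrn} at the key step $\int_Y \phi\, d(f\klcirc\rho) = \int_X (\phi\klcirc f)\, d\rho$, without commenting on the fact that the theorem is stated for the base measure $\mu$ rather than an arbitrary $\rho \in \abscont[\mu](X)$. You are right to flag this as an extension and to spell out the indicator--simple--MCT--Jordan route; that is more careful than the paper. Your part (a), verifying that each $\fr_{(X,\mu)}(\rho)$ actually lands in $(\Lps[\infty])\sod$, is also a check the paper leaves implicit.
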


\paragraph{Natural Isomorphisms}We have now defined the following four natural transformations:
\[
\xymatrix
{
\Lpf[1]\ar@<1ex>@{=>}[r]^-{\dr} & \abscont\ar@<1ex>@{=>}[l]^{\rn}\ar@<1ex>@{=>}[r]^-{\fr} & (\Lpfop[\infty])\sod\ar@<1ex>@{=>}[l]^{\mr}
}
\]
In fact, both pairs form natural isomorphisms, and these can be restricted to arbitrary H\"older conjugate pairs $(p, q)$.
%%In this section we will show that both pairs form natural isomorphisms. We will then extend the result to arbitrary H\"{o}lder conjugate pairs $(p,q)$.
\begin{theorem}\label{thm:firstIso}
$\rn$ and $\dr$ are inverse of one another, in particular there exists a \emph{natural} isomorphism between $\abscont[\mu](X)$ and $\Lps[1](X,\mu)$.
\end{theorem}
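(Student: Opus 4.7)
The plan is to check that the two composites $\dr \circ \rn$ and $\rn \circ \dr$ agree with the identity natural transformation at every component $(X,\mu)$. Since both $\rn$ and $\dr$ have already been shown to be natural transformations in Theorems~\ref{thm:RNnatural} and~\ref{thm:measRepNat}, once I verify that their components are mutually inverse as morphisms of Banach lattices, natural isomorphism will follow from general abstract nonsense (a natural transformation whose components are all invertible is a natural isomorphism, with inverse given componentwise).

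First, I would verify $\dr_{(X,\mu)} \circ \rn_{(X,\mu)} = \IdMor_{\abscont[\mu](X)}$. Given $\rho \in \abscont[\mu](X)$, i.e. $\rho \ll \mu$, one computes directly
\[
\dr_{(X,\mu)}(\rn_{(X,\mu)}(\rho))(B_X) = \int_{B_X} \frac{d\rho}{d\mu}\, d\mu = \rho(B_X)
\]
for every measurable $B_X \subseteq X$, where the second equality is exactly the defining property of the Radon--Nikodym derivative. Hence $\dr_{(X,\mu)} \circ \rn_{(X,\mu)}(\rho) = \rho$.

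Second, I would verify $\rn_{(X,\mu)} \circ \dr_{(X,\mu)} = \IdMor_{\Lps[1](X,\mu)}$. Given $f \in \Lps[1](X,\mu)$, set $\rho_f = \dr_{(X,\mu)}(f)$, so that $\rho_f(B_X) = \int_{B_X} f\, d\mu$. By construction $f$ itself already witnesses $\rho_f \ll \mu$ and satisfies the Radon--Nikodym identity for $\rho_f$. The uniqueness clause of the Radon--Nikodym theorem then asserts that any two such densities agree $\mu$-a.e., hence as elements of $\Lps[1](X,\mu)$ (whose points are equivalence classes modulo $\mu$-null functions) we have $\nicefrac{d\rho_f}{d\mu} = f$, i.e. $\rn_{(X,\mu)}(\dr_{(X,\mu)}(f)) = f$.

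The only subtlety here is ensuring that ``identity'' is understood in the right sense: in $\abscont[\mu](X)$, measures are genuine set functions (so equality is strict), whereas in $\Lps[1](X,\mu)$, equality is a.e.\ equality. The quotient by $\mu$-null sets built into the definition of $\Lps[1]$ is precisely what makes the uniqueness of the Radon--Nikodym derivative yield a true equality and not merely an a.e.\ statement. This is the one place where a little care is required, but it is not a serious obstacle. Finally, restriction to an arbitrary H\"{o}lder conjugate pair $(p,q)$ is immediate, since the same pointwise formulas define operators between the corresponding $\Lps$ spaces and the $\dr$/$\rn$ identities are witnessed by the same densities.
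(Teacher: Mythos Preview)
Your proof is correct and follows exactly the paper's approach: both argue that the componentwise inverses hold by appealing to the two defining identities of the Radon--Nikodym derivative, namely $\int_{B_X}\frac{d\rho}{d\mu}\,d\mu=\rho(B_X)$ and $\frac{d(\int_{-}\phi\,d\mu)}{d\mu}=\phi$. Your version is simply more explicit about the uniqueness clause and the a.e.\ quotient; the final remark about H\"older pairs is extraneous to this theorem (it pertains to Theorem~\ref{thm:LpLq1}), but it does no harm.
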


\begin{theorem}\label{thm:secondIso}
$\mr$ and $\fr$ are inverse of each other, in particular there exists a \emph{natural} isomorphism between $\abscont[\mu](X)$ and $(\Lps[\infty](X,\mu))\sod$.
\end{theorem}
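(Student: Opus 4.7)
The plan is to verify the two compositions $\mr \circ \fr$ and $\fr \circ \mr$ pointwise; naturality of the resulting isomorphism is free, since Theorems \ref{thm:RieszRepNatural} and \ref{thm:frNatural} already guarantee that both $\mr$ and $\fr$ are natural.

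First I would dispatch the easy direction $\mr \circ \fr = \IdNat$. For $\rho \in \abscont[\mu](X)$ and any measurable $B \subseteq X$, the definitions give
\[
\mr_{(X,\mu)}(\fr_{(X,\mu)}(\rho))(B) \;=\; \fr_{(X,\mu)}(\rho)(\one_B) \;=\; \ssint{X}\one_B\, d\rho \;=\; \rho(B),
\]
so the two signed measures agree on the $\sigma$-algebra, hence are equal in $\abscont[\mu](X)$.

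The substantive direction is $\fr \circ \mr = \IdNat$. Fix $F \in (\Lps[\infty](X,\mu))\sod$ and write $\rho := \mr_{(X,\mu)}(F) \in \abscont[\mu](X)$, so by definition $\rho(B) = F(\one_B)$. I need to show $\int_X \phi\, d\rho = F(\phi)$ for every $\phi \in \Lps[\infty](X,\mu)$. On simple functions $\phi = \sum_{i=1}^n a_i \one_{B_i}$ this is immediate from linearity of both $F$ and the integral. To pass to arbitrary $\phi \in \Lps[\infty](X,\mu)$, I would first reduce to the case $\phi \geq 0$ by writing $\phi = \phi^+ - \phi^-$, and then use the standard dyadic approximation
\[
\phi_n \;:=\; \sum_{k=1}^{n 2^n} \frac{k-1}{2^n}\, \one_{\{\frac{k-1}{2^n}\,\leq\,\phi\,<\,\frac{k}{2^n}\}},
\]
which is an increasing sequence of simple functions with $\phi_n \uparrow \phi$ pointwise and also in the order of $\Lps[\infty](X,\mu)$ (since the sequence is uniformly bounded by $\Norm{\phi}_\infty$ and monotone). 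Applying $\sigma$-order continuity of $F$ gives $F(\phi_n) \uparrow F(\phi)$, while $\rho$ has bounded variation (being in $\ca(X)$), so the Jordan decomposition together with the monotone convergence theorem applied to $\rho^+$ and $\rho^-$ yields $\int \phi_n\, d\rho \to \int \phi\, d\rho$. Combining these with the already-established identity on simple functions gives $F(\phi) = \int_X \phi\, d\rho = \fr_{(X,\mu)}(\rho)(\phi)$, as required.

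The main obstacle is the careful matching of the two modes of convergence in the second direction: the $\sigma$-order continuity of $F$ lives in the Banach-lattice world, while the MCT lives in the measure-theoretic world, and one must know that the dyadic truncations behave monotonically in both. Once this is done, combining the two equalities with the already-proven naturality of $\mr$ and $\fr$ produces the desired natural isomorphism $\abscont[\mu](X) \cong (\Lps[\infty](X,\mu))\sod$.
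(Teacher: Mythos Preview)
Your proof is correct and follows the same approach as the paper: the $\mr\circ\fr$ direction is computed exactly as you do, and the $\fr\circ\mr$ direction is precisely the content of Lemma~\ref{lem:mrHelp}, which the paper has already established (during the proof of Theorem~\ref{thm:RieszRepNatural}) via the same indicator-to-simple-to-general bootstrap you spell out. The only difference is packaging: the paper cites the lemma, while you reprove it inline with a bit more care about the signed case via Jordan decomposition.
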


We can now conclude that the isomorphism proved in Theorem 6 of \cite{fossacs2017} is in fact natural.
\begin{corollary}
There exists a \emph{natural} isomorphism between $\Lpf[1]:= \Lpfop[1]\circ (-)\dg$ and $(-)\sod\circ\Lpfop[\infty]$.
\end{corollary}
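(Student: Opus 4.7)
The plan is to obtain the desired natural isomorphism by composing the two natural isomorphisms already established in Theorems \ref{thm:firstIso} and \ref{thm:secondIso}, using $\abscont$ as the intermediate functor. Concretely, I would define
\[
\eta := \fr \circ \dr : \Lpf[1] \Rightarrow (-)\sod\circ\Lpfop[\infty],
\qquad
\eta^{-1} := \rn \circ \mr : (-)\sod\circ\Lpfop[\infty] \Rightarrow \Lpf[1].
\]
Since natural transformations compose, $\eta$ and $\eta^{-1}$ are natural; since both components $\dr,\fr$ (resp.\ $\mr,\rn$) are componentwise isomorphisms by Theorems \ref{thm:firstIso} and \ref{thm:secondIso}, the composites are componentwise isomorphisms as well, hence natural isomorphisms of functors.

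It therefore suffices to check that $\eta \circ \eta^{-1}$ and $\eta^{-1} \circ \eta$ are the identity natural transformations, but this follows immediately from the two preceding theorems: $\rn\circ\dr=\IdNat_{\Lpf[1]}$ and $\dr\circ\rn=\IdNat_{\abscont}$ by Theorem \ref{thm:firstIso}, and similarly $\mr\circ\fr=\IdNat_{\abscont}$ and $\fr\circ\mr=\IdNat_{(-)\sod\circ\Lpfop[\infty]}$ by Theorem \ref{thm:secondIso}, so the intermediate $\abscont$-components cancel.

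I do not anticipate any real obstacle: the content of the corollary is purely formal once the two natural isomorphisms through $\abscont$ are in hand. The only thing worth spelling out, for the reader's benefit, is the concrete description of $\eta_{(X,\mu)}$: given $\phi\in\Lps[1](X,\mu) = \Lpf[1](X,\mu)$, the functional $\eta_{(X,\mu)}(\phi)\in\Lps[\infty](X,\mu)\sod$ acts by $\psi\mapsto \int_X \psi\,\phi\,d\mu$, which is just the classical duality pairing between $\Lps[1]$ and $\Lps[\infty]$. This makes it transparent that the isomorphism promised in the corollary is the expected one, and that the framework of natural transformations developed in this section recovers the standard $\Lps[1]$–$\Lps[\infty]$ duality uniformly in $(X,\mu)$.
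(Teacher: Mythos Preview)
Your proposal is correct and matches the paper's approach: the corollary is stated immediately after Theorems~\ref{thm:firstIso} and~\ref{thm:secondIso} as a direct consequence of composing the two natural isomorphisms through $\abscont$, and indeed the very next result (Theorem~\ref{thm:LpLq1}) refers to the composite $\rn\circ\mr$ explicitly, which is your $\eta^{-1}$. Your additional remark unpacking $\eta_{(X,\mu)}$ as the classical $\Lps[1]$--$\Lps[\infty]$ pairing is a nice touch that the paper does not spell out.
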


We can in fact restrict this result to \emph{any} H\"{o}lder conjugate pair $(p,q)$:
\begin{theorem}\label{thm:LpLq1}
For $1\leq p\leq \infty$ with H\"{o}lder conjugate $q$, the natural transformation $\rn\circ\mr$ restricts to a natural transformation $(-)\sod\circ\Lpfop[q]\to\Lpf$.
\end{theorem}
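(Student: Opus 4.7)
The strategy is to exhibit the claimed transformation $\tau : (-)\sod \circ \Lpfop[q] \Rightarrow \Lpf$ as the factorization of $\rn \circ \mr$ through two canonical ``inclusion'' natural transformations: a restriction $\rho : (-)\sod \circ \Lpfop[q] \Rightarrow (-)\sod \circ \Lpfop[\infty]$ obtained by precomposition with the continuous inclusion $\Lps[\infty](X,\mu) \hookrightarrow \Lps[q](X,\mu)$ (valid because $\mu$ is a probability measure), and an inclusion $\iota : \Lpf \Rightarrow \Lpf[1]$ coming from $\Lps[p](X,\mu) \subseteq \Lps[1](X,\mu)$. Defining $\tau$ and proving its naturality will then amount to a pointwise factorization together with a diagram chase that uses the already-established naturality of $\rn \circ \mr$ from Theorem~\ref{thm:firstIso}. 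The case $p = 1$, $q = \infty$ is exactly Theorem~\ref{thm:firstIso}, so the content lies in the remaining cases.

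\textbf{Pointwise factorization.} Fix $(X, \mu) \in \Krn$. Given $F \in \Lps[q](X, \mu)\sod$, the Hölder duality recalled in Example~\ref{ex:LpSpaces3} produces a unique $g \in \Lps[p](X, \mu)$ such that $F(\phi) = \int \phi g\, d\mu$ for all $\phi \in \Lps[q](X,\mu)$. Since $\rho_{(X,\mu)}(F)$ is just $F$ precomposed with the inclusion of $\Lps[\infty]$ into $\Lps[q]$, the same formula holds in particular on indicators, so
\[
  \mr_{(X,\mu)}\bigl(\rho_{(X,\mu)}(F)\bigr)(B) \;=\; F(\one_B) \;=\; \int_B g\, d\mu,
\]
which is the measure $g \cdot \mu \in \abscont[\mu](X)$. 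Applying $\rn_{(X, \mu)}$ returns $g$ by definition of the Radon--Nikodym derivative. Hence $(\rn \circ \mr)_{(X,\mu)} \circ \rho_{(X,\mu)}$ lands in $\Lps[p](X, \mu) \subseteq \Lps[1](X, \mu)$ and factors uniquely through $\iota_{(X, \mu)}$, thereby defining $\tau_{(X, \mu)}$. The surjectivity clause of Hölder duality ensures that every $g \in \Lps[p](X,\mu)$ is hit, so the image of $\tau$ is genuinely all of $\Lps[p]$.

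\textbf{Naturality.} The family $\rho_{(X,\mu)}$ is natural because $\Lpfop[q] f$ and $\Lpfop[\infty] f$ both act by precomposition with $f$ and so commute with the inclusions $\Lps[\infty] \hookrightarrow \Lps[q]$ on each object. Dually, $\iota$ is natural since $\Lpf f$ and $\Lpf[1] f$ both send $\phi$ to $\phi \klcirc f\dg$, and $\Lpf$ being a functor into $\BL$ (Section~\ref{sec:BackForth}) guarantees this action preserves the $\Lps[p]$-class. Given naturality of $\rn \circ \mr$, $\rho$, and $\iota$, a short diagram chase shows that the two sides of the candidate naturality square for $\tau$ at $f : (X,\mu) \klar (Y,\nu)$ coincide after post-composition with $\iota_{(Y,\nu)}$; as each $\iota_{(Y,\nu)}$ is a set-theoretic monomorphism of $\mu$-a.e. equivalence classes, it may be cancelled, yielding naturality of $\tau$.

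\textbf{Main obstacle.} The only genuinely subtle point is the naturality of $\iota$, which amounts to checking that the $\Lps[p]$- and $\Lps[1]$-interpretations of $\phi \mapsto \phi \klcirc f\dg$ are mutually compatible; this however is already encoded in the well-definedness of $\Lpf$ as a functor into $\BL$, so no new analytic work is required. Every other step reduces to unfolding the definitions of $\mr$, $\rn$, and the Hölder pairing.
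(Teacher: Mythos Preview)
Your proposal is correct. The paper, by contrast, gives essentially no argument: its ``proof'' is a single sentence pointing to external references (Bogachev's Theorem~4.4.1 for $1<p<\infty$ and Proposition~3.3 of \cite{ampba} for $p=\infty$) for the analytic fact that the composite lands in $\Lps$, and leaves naturality entirely implicit.

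Your route is genuinely different and more in the categorical spirit of the surrounding section. You recast ``restriction'' as a factorization through two auxiliary natural transformations: the restriction-of-functionals map $\rho$ (precomposition with $\Lps[\infty]\hookrightarrow\Lps[q]$, available because $\mu$ is a probability measure) and the inclusion $\iota:\Lps\hookrightarrow\Lps[1]$. The pointwise factorization then reduces to the H\"older duality already recorded in Example~\ref{ex:LpSpaces3}, and naturality of $\tau$ follows by a clean diagram chase from the already-proved naturality of $\rn\circ\mr$, the evident naturality of $\rho$ and $\iota$, and injectivity of $\iota$. What your approach buys is self-containment: you need nothing beyond Example~\ref{ex:LpSpaces3} and the functoriality of $\Lpf$ claimed in \S\ref{sec:BackForth}, and you make the naturality claim explicit rather than leaving it to the reader to extract from the cited analytic literature. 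The paper's version is shorter but opaque; yours is checkable in place.
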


\noindent The correspondence between the various categories and functors discussed in this section are summarized as follows:
\begin{equation}\label{diag:summary}
\xymatrix@R=11ex@C=11ex
{
 & \BL\\
\BL\op\ar[ur]^{(-)^\sigma} \ar@<-1ex>@{=>}[r]_-{\mr}  &  \ar@<-1ex>@{=>}[r]_-{\rn} \ar@<-2pt>@{=>}[l]_-{\fr}  & \Krn\op\ar[ul]_{\Lpfop[1]}\ar@<-2pt>@{=>}[l]_-{\dr}  \\
 &\Krn\ar[ur]_{(-)\dg}\ar[ul]^{\Lpfop[\infty]}\ar[uu]^>>>>>>>>{\abscont}
}
\end{equation}

\section{Approximations}\label{sec:Approx}
In this section we develop a scheme for approximating kernels
which follows naturally from the $\dagger$-structure of $\Krn$. 
Consider $f: (X,\mu)\klar (Y,\nu)$ and a pair of deterministic maps $p: (X,\mu)\to (X', p_\ast\mu)$ and $q: (Y,\nu)\to (Y',q_\ast\nu)$ (typically these maps coarsen the spaces $X$ and $Y$).
\begin{equation}\label{diag:approx}
\xymatrix@C=12ex@R=9ex
{
(X,\mu)\ar@{|>}@<5pt>[r]^f 
\ar@{|>}@<-5pt>[r]_{f^{p,q}}  
\ar@{|>}@<-1pt>@/_1pc/[d]_{p} & (Y,\nu)\ar@{|>}@<-1pt>@/_1pc/[d]_{q}\\
(X', p_\ast \mu)\ar@{|>}[r] 
\ar@{|>}@<-1pt>@/_1pc/[u]_{p\dg_\mu} 
\ar@{|>}[r]_{f_{p,q}}& (Y', q_\ast \nu)
\ar@{|>}@<-1pt>@/_1pc/[u]_{q\dg_\nu}
}
\end{equation}
The $\dagger$-structure of $\Krn$ allows us to define the new kernels
\begin{align}
&f^{p,q}:=q\dg_\nu \klcirc q \klcirc f \klcirc p\dg_\mu \klcirc p &:X\hspace{3pt}\klar Y\hspace{3pt} \label{eq:approxDef0}\\
&f_{p,q}:=q \klcirc f \klcirc p\dg_\mu &:X'\klar Y' 
 \label{eq:approxDef}
\end{align}
The supscript notation is meant to indicate that the approximation lives `upstairs' in Diagram \eqref{diag:approx} and conversely for the subscripts. Intuitively, $f_{p,q}$ and $f^{p,q}$ take the average of $f$ over the fibres given by $p,q$ according to $\mu$ and $\nu$ (see Section \ref{sec:Applications} for concrete calculations). The advantage of \eqref{eq:approxDef} is that we can approximate a kernel on a huge space by a kernel on a, say, finite one. The advantage of \eqref{eq:approxDef0} is that although it is more complicated, it is morally equivalent and has the same type as $f$, which means that we can compare it to $f$.

A very simple consequence of our definition is that Bayesian inversion commutes with approximations. We shall use this in \S \ref{sec:Applications:Bayes} to perform \emph{approximate Bayesian inference}.
\begin{theorem}\label{thm:ApproxBayesInverse}
Let $f:(X,\mu)\klar (Y,\nu)$, let $p: X\to X'$ and $q: Y\to Y'n$ be a pair of deterministic maps, then
\[
(f\dg)^{q,p}=(f^{p,q})\dg\text{ and }(f\dg)_{q,p}=(f_{p,q})\dg
\]
\end{theorem}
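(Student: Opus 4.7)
The plan is to derive both equations as immediate consequences of Theorem~\ref{thm:KrnDagger}, which says that $(-)\dg$ is a genuine dagger operation on $\Krn$. Recall that a dagger functor is contravariant and involutive, so in our setting it satisfies
\[
(g \klcirc h)\dg = h\dg \klcirc g\dg \qquad\text{and}\qquad (g\dg)\dg = g.
\]
These are the only two facts I intend to use, together with the defining equations \eqref{eq:approxDef0} and \eqref{eq:approxDef}.

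For the subscript identity, I start from $f_{p,q} = q \klcirc f \klcirc p\dg_\mu$ viewed as a morphism $(X',p_\ast\mu)\klar(Y',q_\ast\nu)$. Applying the dagger and using contravariance gives $(f_{p,q})\dg = (p\dg_\mu)\dg \klcirc f\dg \klcirc q\dg$. Involution collapses $(p\dg_\mu)\dg$ to $p$ (and likewise for $q$), yielding $p \klcirc f\dg \klcirc q\dg_\nu$, which is exactly $(f\dg)_{q,p}$ when Definition~\eqref{eq:approxDef} is instantiated for $f\dg:(Y,\nu)\klar(X,\mu)$ along the maps $q$ and $p$ (note that the roles of $p$ and $q$ swap because the domain and codomain of $f\dg$ swap those of $f$).

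The supscript identity is handled in the same fashion but with five factors. Starting from $f^{p,q} = q\dg_\nu \klcirc q \klcirc f \klcirc p\dg_\mu \klcirc p$, contravariance gives
\[
(f^{p,q})\dg = p\dg_\mu \klcirc (p\dg_\mu)\dg \klcirc f\dg \klcirc q\dg_\nu \klcirc (q\dg_\nu)\dg,
\]
and the two involutions again simplify $(p\dg_\mu)\dg = p$ and $(q\dg_\nu)\dg = q$. The resulting expression $p\dg_\mu \klcirc p \klcirc f\dg \klcirc q\dg_\nu \klcirc q$ is precisely the definition of $(f\dg)^{q,p}$ read off from \eqref{eq:approxDef0}.

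The only point worth a small check is the matching of the measure subscripts on the disintegrations: the dagger of $p:(X,\mu)\to(X',p_\ast\mu)$ is $p\dg_\mu$, so its double dagger is $p$ as a morphism from $(X,\mu)$ back to $(X',p_\ast\mu)$ with the same underlying function, which is exactly the $p$ that reappears in $(f\dg)^{q,p}$. There is no hard step here; once Theorem~\ref{thm:KrnDagger} is in hand, the identities are purely formal manipulations in a dagger category, and all equalities hold in the quotient $\Krn$, i.e. modulo $\sim$.
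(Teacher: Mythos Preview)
Your proof is correct and is exactly the argument the paper has in mind: the text introduces the theorem as ``a very simple consequence of our definition'' and gives no separate proof in the appendix, because once Theorem~\ref{thm:KrnDagger} establishes that $(-)\dg$ is contravariant and involutive, both identities follow by unwinding \eqref{eq:approxDef0} and \eqref{eq:approxDef} precisely as you do. Your remark about the measure subscripts is the only point one might want spelled out, and you handle it correctly.
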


In practice we will often consider endo-kernels $f: X\klar  X$ with a single coarsening map $p: X\to X'$ to a \emph{finite space}. In this case \eqref{eq:approxDef0} simplifies greatly.

\begin{proposition}\label{prop:approxEndo}
Under the situation described above
\begin{equation}\label{eq:approxEndo}
f^p:= p\dg_\nu \klcirc p \klcirc f \klcirc p\dg_\mu \klcirc p = f \klcirc p\dg_\mu \klcirc p
\end{equation}
\end{proposition}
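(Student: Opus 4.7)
The plan is to prove this equality of $\Krn$-morphisms by leveraging the $\dagger$-identity $p \klcirc p\dg_\nu = \IdMor_{(X', p_*\nu)}$ from Theorem~\ref{thm:Disintegration}, combined with the conditional-expectation characterization of $p\dg_\mu \klcirc p$ given by Lemma~\ref{lem:disintegrationCondExp}. Since equality in $\Krn$ is $\mu$-almost-everywhere equality of kernels, I would argue pointwise at a $\mu$-typical $x \in X$.

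First I would observe that $(p\dg_\mu \klcirc p)(x)$ is the conditional distribution of $\mu$ given the event $\{p = p(x)\}$, which---because $X'$ is finite---is simply $\mu$ restricted to the fibre $p^{-1}(p(x))$ and normalised. Consequently $(f \klcirc p\dg_\mu \klcirc p)(x)$ depends only on $p(x)$; I will call this common value $g_{p(x)}$, the probability measure on $X$ obtained by $\mu$-averaging $f$ over that fibre.

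Next I would compute the outer $p\dg_\nu \klcirc p$ applied to $g_{p(x)}$. Using $\nu = f_*\mu$ and a Fubini-type rearrangement, I would show that $g_{p(x)}$ is itself a $p_*(g_{p(x)})$-mixture of the fibrewise conditionals $\{p\dg_\nu(y') : y' \in X'\}$; the $\dagger$-identity $p \klcirc p\dg_\nu = \IdMor$ then gives $(p\dg_\nu \klcirc p)(g_{p(x)}) = g_{p(x)}$, so that associativity of Kleisli composition yields the stated equality.

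The main obstacle is the second step: verifying that $g_{p(x)}$ lies in the image of $p\dg_\nu$, i.e.\@ that its conditional on each fibre $p^{-1}(y')$ agrees with $p\dg_\nu(y')$. This reduces to a consistency condition relating the $\mu$-averaging of $f$ over $p^{-1}(p(x))$ with the disintegration of $\nu$ along $p$, which I would establish by applying the defining equation~\eqref{eq:disintegrationEqu} of Bayesian inversion to $p$ under both $\mu$ and $\nu$, together with the identity $\nu = f \klcirc \mu$ that encodes the endo-structure.
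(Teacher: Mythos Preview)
Your strategy mirrors the paper's computation: both expand $b\klcirc f\klcirc a$ (with $a=p\dg_\mu\klcirc p$ and $b=p\dg_\nu\klcirc p$) over the fibres of $p$ and reduce the claim to the fibrewise identity
\[
g_{j}(A\cap p\inv(i)) \;=\; p\dg_\nu(i)(A)\cdot g_{j}(p\inv(i)),\qquad g_j := f\klcirc p\dg_\mu(j),
\]
which is precisely the ``consistency condition'' you flag as the main obstacle and which the paper records as Eq.~\eqref{eq:simplified}.

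The gap is that this identity does not follow from the stated hypotheses; in fact the proposition as written fails. Take $X=\{0,1,2,3\}$ with $\mu$ uniform, $X'=\{a,b\}$, $p(0)=p(1)=a$, $p(2)=p(3)=b$, and the deterministic kernel $f$ sending $0\mapsto 0$, $1\mapsto 2$, $2\mapsto 1$, $3\mapsto 3$. Then $\nu=f\klcirc\mu$ is again uniform, so $p\dg_\nu(a)=\tfrac12(\delta_0+\delta_1)$ and $p\dg_\nu(b)=\tfrac12(\delta_2+\delta_3)$. One computes $g_a=\tfrac12\delta_0+\tfrac12\delta_2$, whence
\[
f^p(0)=(p\dg_\nu\klcirc p)\klcirc g_a = \tfrac12\,p\dg_\nu(a)+\tfrac12\,p\dg_\nu(b)=\tfrac14(\delta_0+\delta_1+\delta_2+\delta_3)\neq g_a=(f\klcirc p\dg_\mu\klcirc p)(0).
\]
The conditional of $g_a$ on the fibre $p\inv(a)$ is $\delta_0$, not $p\dg_\nu(a)$, so your ``$g_{p(x)}$ is a $p_\ast(g_{p(x)})$-mixture of the $p\dg_\nu(y')$'' is false here.

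The same four-point example refutes the step in the paper's derivation where the domains $X$ and $p\inv(j)$ are silently interchanged in a product of integrals (the passage from the second to the third line of the display culminating in~\eqref{eq:simplifying3}). So the obstacle you isolate is not a technicality that Eq.~\eqref{eq:disintegrationEqu} together with $\nu=f\klcirc\mu$ can discharge: the target equality of kernels is false in general, and any argument along these lines must break at exactly that point.
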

 In the case covered by Proposition \ref{prop:approxEndo}, the interpretation of $f^p$ is very natural: for each $x\in X$ the measure $f(x)$ is approximated by its average over the fibre to which $x$ belongs, conditioned on being in the fibre. For fibres with strictly positive $\mu$-probability, this is simply
 \[
 f^p(x)(A)=\frac{\int_{y\in p\inv(p(x))} f(y)(A)~d\mu}{\mu(p\inv(p(x))}
 \]
However \eqref{eq:approxEndo} also covers the case of $\mu$-null fibres. 
Note also that in the case where $f^p = f$, the map $p$ corresponds to
what is known as a strong functional bisimulation for $f$.

\paragraph{Approximating is non-expansive.} It is well-known that conditional expectations are non-expansive and we know from Lemma \ref{lem:disintegrationCondExp} that pre-composing by $p\dg_\mu\klcirc p$ as in \eqref{eq:approxEndo} amounts to conditioning. The following lemma is an easy consequence.

\begin{lemma}\label{lem:nonExpansive}
Let $f: (X,\mu)\klar(Y,\nu)$ and $q: X\to X'$ be a deterministic quotient, then for all $1\leq p\leq \infty$ and $\phi\in\Lps(Y,\nu)$
\[
\Norm{\Lpfop f^q(\phi)}_p\leq \Norm{\Lpfop f(\phi)}_p
\]
\end{lemma}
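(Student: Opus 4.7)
The plan is to reduce the statement to the classical fact that conditional expectation is a contraction on every $L_p$ space ($1\leq p\leq\infty$), using the identification of the approximation operation with a conditional expectation provided by Lemma~\ref{lem:disintegrationCondExp}.

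First, I would unpack the left-hand side. By Proposition~\ref{prop:approxEndo}, $f^q = f\klcirc q\dg_\mu\klcirc q$, so by functoriality of $\Lpfop$
\[
\Lpfop f^q(\phi) \;=\; \phi\klcirc f^q \;=\; (\phi\klcirc f)\klcirc q\dg_\mu\klcirc q \;=\; (\Lpfop f(\phi))\klcirc q\dg_\mu\klcirc q.
\]
Setting $\psi := \Lpfop f(\phi) = \phi\klcirc f \in \Lps(X,\mu)$, Lemma~\ref{lem:disintegrationCondExp} (applied to the deterministic map $q$) yields
\[
\Lpfop f^q(\phi) \;=\; \psi\klcirc q\dg_\mu\klcirc q \;=\; \EXP{\psi\mid \sigma(q)} \quad \mu\text{-a.e.}
\]
Thus the approximation is literally the conditional expectation of $\Lpfop f(\phi)$ with respect to the sub-$\sigma$-algebra generated by the quotient $q$.

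Second, I would invoke the classical contractivity property of conditional expectation: for any sub-$\sigma$-algebra $\mathcal{G}\subseteq\sigAlg_X$ and any $\psi\in\Lps(X,\mu)$, $\Norm{\EXP{\psi\mid\mathcal{G}}}_p\leq \Norm{\psi}_p$ for every $1\leq p\leq\infty$. For finite $p$ this follows from the conditional Jensen inequality applied to the convex map $t\mapsto |t|^p$ together with the tower property; for $p=\infty$ it is immediate since $|\EXP{\psi\mid\mathcal{G}}|\leq \EXP{|\psi|\mid\mathcal{G}}\leq \Norm{\psi}_\infty$ a.e. Applied to $\psi = \Lpfop f(\phi)$ and $\mathcal{G}=\sigma(q)$, this gives exactly
\[
\Norm{\Lpfop f^q(\phi)}_p \;=\; \Norm{\EXP{\Lpfop f(\phi)\mid\sigma(q)}}_p \;\leq\; \Norm{\Lpfop f(\phi)}_p.
\]

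The only non-routine step is making the translation in the first paragraph rigorous, i.e.\ confirming that Lemma~\ref{lem:disintegrationCondExp} indeed applies to $\psi = \phi\klcirc f$ (which requires $\psi$ to be $\mu$-measurable and, for the identification to make sense in $\Lps(X,\mu)$, integrable of the appropriate order). Measurability is automatic since $f$ is a Kleisli arrow and $\phi$ is Borel, and the $\Lps$-integrability of $\phi\klcirc f$ follows from $\phi\in \Lps(Y,\nu)$ via the Change of Variables Theorem~\ref{thm:chgVarKrn} (for $p=1$) together with Jensen's inequality pointwise $\mu$-a.e.\ for $1<p<\infty$, and from the pointwise bound $|\phi\klcirc f(x)|\leq \Norm{\phi}_\infty$ for $p=\infty$. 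Once these belong-in-$L_p$ checks are dispatched, the rest is a one-line consequence of the a.e.\ equality and the contractivity of conditional expectation.
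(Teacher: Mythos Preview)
Your proof is correct and follows precisely the approach the paper indicates in the paragraph preceding the lemma: rewrite $\Lpfop f^q(\phi)$ as $(\Lpfop f(\phi))\klcirc q\dg_\mu\klcirc q$, invoke Lemma~\ref{lem:disintegrationCondExp} to recognise this as $\EXP{\Lpfop f(\phi)\mid\sigma(q)}$, and then apply the classical contractivity of conditional expectation on $\Lps$. One cosmetic remark: your appeal to Proposition~\ref{prop:approxEndo} is slightly mismatched, since that proposition concerns endo-kernels with a single quotient on both sides, whereas here $f^q$ is simply the instance of \eqref{eq:approxDef0} with the codomain quotient taken to be the identity, which gives $f^q=f\klcirc q\dg_\mu\klcirc q$ directly; the formula you use is correct either way, so this does not affect the argument.
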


\paragraph{Compositionality of approximations} 
In the case where we wish to approximate a composite kernel $g \klcirc f$, it
might be convenient, for modularity reasons, to approximate $f$ and $g$ separately.
This does not entail any loss of information provided the quotient maps are
hemi-bisimulations, in the following sense. Let $p : X \to X', q : Y \to Y', r : Z \to Z'$ 
be deterministic quotients and let $f: (X,\mu)\klar(Y,\nu), g: (Y,\nu) \klar (Z,\rho)$
be composable kernels. We say that $q$ is a left hemi-bisimulation for $f$ if
$f = q\dg \klcirc q \klcirc f$, and conversely that it is a right hemi-bisimulation
for $g$ if $g = g \klcirc q\dg \klcirc q$ holds. In either case, one
can verify using Theorems \ref{thm:Disintegration} and~\ref{thm:ApproxBayesInverse}
that approximation commutes with composition, i.e.\@ that
$(g \klcirc f)^{p,r} = g^{q,r} \klcirc f^{p,q}$.

\paragraph{Discretization schemes} We will use \eqref{eq:approxDef} and \eqref{eq:approxEndo} to build sequences of arbitrarily good approximations of kernels. For this we introduce the following terminology.
\begin{definition}\label{def:discretisation}
We define a \emph{discretization scheme for an SB-space} $X$ to be a countable co-directed diagram (ccd) of finite spaces for which $X$ is a cone (not necessarily a limit).
\end{definition}
\noindent If $(X_i)_{i\in I}$ is a discretization scheme of $X$ and $p_i: X\to X_i$ are the maps making $X$ a cone, then it follows from the definition that if $i<j$, $\sigma(p_i) \subseteq \sigma(p_{j})$
where $\sigma(p_i)$ is the $\sigma$-algebra generated by $p_i$. For each $i\in I$ the finite quotient $p_i$ defines a \emph{measurable partition} of $X$ whose disjoint components $p_i\inv(\{k\}),k\in X_i$ we will call \emph{cells}. 

By Theorem \ref{thm:ccdRepresentation} every SB-space has a discretization scheme for which it is not just a cone but a limit.

In practice we will work with discretization schemes linearly ordered by $\N$. In this case the sequence $(X,\sigma(p_n))_{n\in \N}$ defines what probabilists call a \emph{filtration} and we will denote the approximation  $f^{p_n}$ given by \eqref{eq:approxEndo} simply by $f^n$.

\section{Convergence}\label{sec:Conv}
%\paragraph{A notion of convergence in $\Krn$.} 
We now turn to the question of convergence of approximations. There appears to be little literature on the subject of the convergence of approximations of Markov kernels. One rare reference is \cite{kim1972approximation}. %, which provides a convenient handle on the question. 
%It is not immediately clear how one should think about the convergence of a sequence of kernels. Given two SB-spaces $X$ and $Y$, 
%which topologies on the set of Markov kernels 
%$X\klar Y$ are known, and 
%It is not immediately clear which topology should be used to characterise the 
%convergence of our approximations? Following \cite{kim1972approximation} and our own it makes sense to reason in  
Via the functor $\Lpfop$ defined above in Sections \ref{sec:BL} and \ref{sec:BackForth} 
we can seek a topology in terms of the \emph{operators associated to a sequence of kernels}.
%, for which many topologies are known. 
Indeed, following~\cite{kim1972approximation}, we will prove convergence results for the \emph{Strong Operator Topology} (SOT).

\begin{definition}

We will say that a sequence of kernels $f^n: X\klar Y$ \emph{converges to $f: X\klar Y$ in strong operator topology}, and write $f^n\SOT f$, if \hspace{1pt} $\Lpfop[1] f^n$ converges to $\Lpfop[1] f$ in the strong operator topology, i.e.\@ if 
\[
\lim_{n\to\infty} \Norm{\Lpfop[1] f^n(\phi)- \Lpfop[1] f(\phi)}_1=0
\]
\end{definition}

%\begin{proposition}\label{prop:suffCondConv}
%For $f^n: (X,\mu)\klar (Y,\nu)$ to converge to $f$ in SOT it is enough that 
%\[
%f^n(x)(B) \to f(x)(B)
%\]
%in $\R$, for any Borel subset $B\subseteq Y$, $\mu$-a.e.
%\end{proposition}

\paragraph{Proving convergence.} We start with the following key lemma which is a consequence of  L\'{e}vy's upward convergence Theorem (\cite[Th. 14.2]{williams1991probability}) . 

\begin{lemma}\label{lem:Levy}
Let $f:(X,\mu)\klar (Y,\nu)$ be a $\Krn$-morphism and let $p_n: X\to X_n, n\in\N$ be a discretization scheme such that for $\mathcal{B}_X$ the Borel $\sigma$-algebra of $X$ we have
\[
\mathcal{B}_X=\sigma\left(\bigcup_n \sigma(p_n)\right)
\]
and let $A\subseteq Y$ be measurable, then for $f^n:=f\klcirc p_n\dg\klcirc p_n$
\[
\lim_{n\to\infty} f^n(x)(A)=f(x)(A)
\]
for $\mu$-almost every $x\in X$. Moreover, 
\[
\lim_{n\to\infty} \Norm{\Lpfop[1]f^n (\one_A)-\Lpfop[1]f(\one_A)}_1=0
\]
\end{lemma}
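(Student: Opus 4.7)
The strategy is to recognize both claims as instances of L\'{e}vy's upward convergence theorem applied to a cleverly chosen integrand. First I would define the bounded measurable function
\[
\psi : X \to \R, \qquad \psi(x) := f(x)(A) = \int_Y \one_A \, df(x) = \Lpfop[1] f(\one_A)(x).
\]
Since $0 \le \psi \le 1$ and $\mu$ is a probability measure, $\psi \in \Lps[1](X,\mu)$. This reduces everything to showing that $\Lpfop[1] f^n(\one_A)$ is precisely the conditional expectation of $\psi$ with respect to $\sigma(p_n)$, and then invoking Lévy.

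Next I would unfold $\Lpfop[1] f^n(\one_A)$. By definition of $\Lpfop[1]$ and of $f^n = f \klcirc p_n\dg \klcirc p_n$, we have
\[
\Lpfop[1] f^n(\one_A) \;=\; \one_A \klcirc f \klcirc p_n\dg \klcirc p_n \;=\; \psi \klcirc p_n\dg \klcirc p_n.
\]
Since $p_n: (X,\mu) \to (X_n, (p_n)_*\mu)$ is a deterministic $\Krn$-morphism, Lemma \ref{lem:disintegrationCondExp} applied to $p_n$ and $\psi$ yields, $\mu$-almost everywhere,
\[
\psi \klcirc p_n\dg \klcirc p_n \;=\; \EXP{\psi \mid \sigma(p_n)}.
\]
Hence $\Lpfop[1] f^n(\one_A) = \EXP{\psi \mid \sigma(p_n)}$ $\mu$-a.e.

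The last ingredient is to check that $(\sigma(p_n))_{n \in \N}$ is a filtration whose generated $\sigma$-algebra is $\mathcal{B}_X$. Monotonicity $\sigma(p_n) \subseteq \sigma(p_{n+1})$ is the remark recalled just after Definition \ref{def:discretisation}, following from the ccd structure of the discretization scheme. The hypothesis of the lemma is exactly that $\sigma\bigl(\bigcup_n \sigma(p_n)\bigr) = \mathcal{B}_X$. Since $\psi$ is $\mathcal{B}_X$-measurable and integrable, L\'{e}vy's upward convergence theorem (\cite[Th.~14.2]{williams1991probability}) gives
\[
\EXP{\psi \mid \sigma(p_n)} \;\longrightarrow\; \EXP{\psi \mid \mathcal{B}_X} \;=\; \psi
\]
both $\mu$-almost everywhere and in $\Lps[1](X,\mu)$-norm. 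Translating back via $\psi(x) = f(x)(A)$ and $\psi = \Lpfop[1] f(\one_A)$ yields the two conclusions of the lemma simultaneously.

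The only step requiring any care is the identification $\Lpfop[1] f^n(\one_A) = \EXP{\psi \mid \sigma(p_n)}$, which rests on correctly applying Lemma \ref{lem:disintegrationCondExp} to the deterministic quotient $p_n$ with $\psi$ playing the role of the test function; once that identification is in place, the remainder is a direct citation of L\'{e}vy. No real obstacle arises, because the hypothesis on the filtration is tailored so that the conditional expectation in the limit collapses to $\psi$ itself.
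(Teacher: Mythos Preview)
Your proposal is correct and follows essentially the same route as the paper: identify $f^n(-)(A)$ with $\EXP{f(-)(A)\mid\sigma(p_n)}$ via Lemma~\ref{lem:disintegrationCondExp}, observe that the $\sigma(p_n)$ form a filtration generating $\mathcal{B}_X$, and invoke L\'evy's upward theorem for both a.e.\ and $\Lps[1]$-convergence. The only cosmetic difference is that the paper additionally spells out the martingale identity $\EXP{f^n(-)(A)\mid\sigma(p_m)}=f^m(-)(A)$ by a direct $\dagger$-calculus computation ($p_m=p_{nm}\circ p_n$, functoriality of $\dagger$, and Theorem~\ref{thm:Disintegration}), whereas you rely implicitly on the tower property once the conditional-expectation identification is made; both are valid and your version is slightly more economical.
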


\begin{theorem}[Convergence of Approximations Theorem]\label{thm:ApproxConv}
Under the conditions of Lemma \ref{lem:Levy}, for $\mu$-almost every $x\in X$
\[
\lim_{n\to\infty} f^n(x)(A)=f(x)(A)
\]
for all Borel subsets $A$. Moreover, 
\[
\lim_{n\to\infty} \Norm{\Lpfop[1]f^n (\phi)-\Lpfop[1]f(\phi)}_1=0
\]
for any $\phi\in \Lps[1](X,\nu)$. In other words $f^n\SOT f$.
\end{theorem}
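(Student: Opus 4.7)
The plan is to derive both assertions of the theorem from Lemma~\ref{lem:Levy}, using two ingredients: the countable generation of the Borel $\sigma$-algebra of the SB space $Y$ (for the pointwise claim) and the non-expansiveness given by Lemma~\ref{lem:nonExpansive} (for the $\Lps[1]$ claim).

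For the pointwise statement, I fix a countable algebra $\mathcal{A}$ generating $\mathcal{B}(Y)$, which exists because $Y$ is standard Borel. Applying Lemma~\ref{lem:Levy} to each $A\in\mathcal{A}$ yields a $\mu$-null set $N_A$ outside of which $f^n(x)(A)\to f(x)(A)$. The countable union $N := \bigcup_{A\in\mathcal{A}} N_A$ is still $\mu$-null, and for every $x\notin N$ the probability measures $f^n(x)$ converge to $f(x)$ setwise on $\mathcal{A}$. Extending convergence to \emph{all} Borel sets is the delicate step. I set up a Dynkin-class argument on the family $\mathcal{L}_x := \{A\in\mathcal{B}(Y)\mid f^n(x)(A)\to f(x)(A)\}$: this family contains the $\pi$-system $\mathcal{A}$ and $Y$ itself, and is closed under complements since $f^n(x)$ and $f(x)$ are probability measures. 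For closure under countable disjoint unions $A=\bigsqcup_k A_k$, the $\liminf\geq$ bound comes from $f^n(x)(A)\geq\sum_{k=1}^m f^n(x)(A_k)$; the matching $\limsup\leq$ bound is obtained by applying the same argument to the complement of a finite partial union $\bigsqcup_{k\leq m} A_k$ and using the unit-mass identity $f^n(x)(A)=1-f^n(x)(A^c)$ together with $\sigma$-additivity of $f(x)$ as $m\to\infty$. Dynkin's theorem then gives $\mathcal{L}_x=\mathcal{B}(Y)$.

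For the $\Lps[1]$-convergence statement, linearity of $\Lpfop[1]f^n$ and $\Lpfop[1]f$ immediately lifts the indicator-level convergence from Lemma~\ref{lem:Levy} to every $\nu$-simple function on $Y$. A standard density argument then extends to all $\phi\in \Lps[1](Y,\nu)$: given $\varepsilon>0$, pick a simple $\phi'$ with $\Norm{\phi-\phi'}_1<\varepsilon/3$ and write
\begin{align*}
\Norm{\Lpfop[1] f^n(\phi)-\Lpfop[1] f(\phi)}_1
  &\leq \Norm{\Lpfop[1] f^n(\phi-\phi')}_1 \\
  &\quad+ \Norm{\Lpfop[1] f^n(\phi')-\Lpfop[1] f(\phi')}_1 \\
  &\quad+ \Norm{\Lpfop[1] f(\phi-\phi')}_1.
\end{align*}
The middle term vanishes as $n\to\infty$ by the simple-function case. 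Using the change-of-variables result (Theorem~\ref{thm:chgVarKrn}) together with $\Absval{\Lpfop[1]f(\phi)}\leq \Lpfop[1]f(\Absval{\phi})$, $\Lpfop[1]f$ is an $\Lps[1]$-contraction, so the rightmost term is at most $\varepsilon/3$; Lemma~\ref{lem:nonExpansive} dominates the first term by the third, giving $\leq\varepsilon/3$ as well. The SOT claim $f^n\SOT f$ then follows directly from the definition.

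The main obstacle I anticipate is the $\limsup\leq$ half of the Dynkin-class step: setwise convergence of probability measures does \emph{not} follow generically from setwise convergence on a generating $\pi$-system, and one genuinely needs to exploit the normalization $f^n(x)(Y)=1$ together with the fact that the tails $\bigsqcup_{k>m}A_k$ decrease to $\emptyset$. An acceptable fallback, should the Dynkin argument prove too fragile, is to read the first claim in the weaker sense ``for each Borel $A$, for $\mu$-a.e.\ $x$,'' making it a direct restatement of Lemma~\ref{lem:Levy}; the $\Lps[1]$ half then carries the full weight of the SOT convergence.
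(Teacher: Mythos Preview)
Your Dynkin/$\pi$-$\lambda$ strategy for the pointwise claim matches the paper's, and your worry about the $\limsup\leq$ step is well placed: it is a genuine gap, and one the paper's own proof shares. Your complement trick only delivers the wrong inequality---from $A^c\subseteq(\bigsqcup_{k\leq m}A_k)^c$ you get $\limsup_n f^n(x)(A^c)\leq f(x)(A^c)$, not the required $\liminf\geq$. In fact $\mathcal{L}_x$ is \emph{not} a $\lambda$-system in general, because the strong-quantifier pointwise claim is false as stated. Take $f=\IdMor$ on $([0,1],\lambda)$ with the dyadic discretization: then $f^n(x)$ is normalized Lebesgue on the dyadic cell of $x$, so $f^n(x)(\{x\})=0$ for all $n$ while $f(x)(\{x\})=\delta_x(\{x\})=1$, and convergence fails at \emph{every} $x$ for $A=\{x\}$. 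The paper's interchange-of-limits justification has the same defect (the convergence $f^n(x)(C_m)\to f(x)(C_m)$ is not uniform in $m$). Your fallback reading---``for each Borel $A$, for $\mu$-a.e.\ $x$''---is precisely Lemma~\ref{lem:Levy} and is the correct salvage.

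Your $\Lps[1]$ argument, on the other hand, is correct and in fact cleaner than the paper's. The paper derives $\Lps[1]$-convergence on indicators $\one_A$ from the (flawed) pointwise claim via dominated convergence, so its SOT proof as written inherits the gap; you instead go directly from Lemma~\ref{lem:Levy} to simple functions by linearity, then to all $\phi$ by density and the $3\varepsilon$ split. The only uniform control needed is that each $\Lpfop[1]f^n$ is an $\Lps[1]$-contraction, which holds because $f^n:(X,\mu)\klar(Y,\nu)$ is a $\Krn$-morphism (indeed $f^n\klcirc\mu=f\klcirc p_n\dg\klcirc p_n\klcirc\mu=f\klcirc\mu=\nu$); invoking Lemma~\ref{lem:nonExpansive} also works but is more than you need. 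This half of your proof stands independently of the pointwise claim and establishes $f^n\SOT f$, which is what the remainder of the paper actually relies on.
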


%
%\begin{corollary}\label{cor:ApproxConv}
%Let $f: (X,\mu)\klar (Y,\nu)$ be a $\Krn$-morphism, let $p_m: X\to X_n$ and $q_n: Y\to Y_n$ be discretisation schemes such that $(p_m)_{m\in \N}$ satisfies the conditions of Lemma \ref{lem:Levy}, then 
%\[
%f^{m,n}\SOT f
%\]
%\end{corollary}

%\paragraph{The operator perspective.} Our convergence of kernels can tautologically be interpreted as SOT convergence of operators through the $\Lpfop$ functors.
Note that 
operators of the shape $\Lpfop f^n$ obtained from a discretization scheme are \emph{finite rank operators}. Thus, we, in fact, also obtained a theorem to approximate stochastic operators by \emph{stochastic operators of finite rank} for the SOT topology. In general, we cannot hope for convergence in the stronger norm topology since the identity operator -- which is stochastic -- is a limit of operators of finite rank in the norm topology iff the space is finite dimensional. 

Note also %Finally, we can easily use 
that the various relationships established in Section \ref{sec:BackForth} allow us to move from an approximation of a kernel to an approximation of the corresponding Markov operator. Since a discretization scheme making $f^n\SOT f$ will also make $(f\dg)^n \SOT f\dg$, it follows from Theorem \ref{thm:firstIso} that we get a finite rank approximation of the Markov operator $\abscont(f)$. 

\section{Applications}\label{sec:Applications}

\subsection{Approximate Bayesian Inference}\label{sec:Applications:Bayes}
Consider again the inference problem from the introduction.
%
%\begin{lstlisting}
%(defquery example
%  (let [x (sample (normal 0 1))]
%    (observe (normal x 1) 0.5)
%    (> x 1)))
%\end{lstlisting}
%
%\noindent The semantics of this program was sketched in the introduction, and 
There one needed to invert $f(x)=\norm(x,1)$ with prior $\mu=\norm(0,1)$.
%: % the following kernel. 
%\[
%f: (\R,\norm(0,1))\klar (\R,f_\ast \norm(0,1)), x\mapsto\norm(x,1)
%\]
%Whilst there exists a closed form formula for $f\dg$ in this instance, this is in general not the case, and we would like to approximate the Bayesian inverse, i.e.\@ we would like to 
We can use Theorem \ref{thm:ApproxBayesInverse} to see how our approximate Bayesian inverse compares to the exact solution which in this simple case is known to be $f^\dagger_\mu(0.5)=\norm(1/4,1/2)$. To do this, 
we use a doubly indexed discretization scheme:
\[q_{mn}: \R\to 2 \times m \times n + 2\]
defining a window of width $2m$ %$\{-m,\ldots,0,\ldots,m\}$ around 
centred at $0$ 
divided %and $n$ defines a level of granularity on this window, dividing the interval 
%$(-m,m]$ 
in $2 m n$ equal intervals; with the remaining intervals
$(-\infty,-m]$ and $(m,\infty)$ each sent to a point (hence the $+2$ above). 

%Concretely we define $q_{mn}$ by
%\[
%x\mapsto \begin{cases}
%  -\infty &\text{if }x\leq m \\
%  \lceil (x - (-m)) n \rceil - 1 & \text{if } -m < x \le m \\
%%%  (k,l)&\text{if }k+\frac{l}{n}<x\leq k+\frac{l+1}{n}, -m\leq k\leq m, \\
%%%&1\leq l\leq n-1\\
%%%(m,n)&\text{if }x>m
%  \infty &\text{if }x > m
%\end{cases}
%\]
%where $\lceil x \rceil$ denotes the smallest integer greater or equal to $x$.
%Similarly, given $k \in 2 \times m \times n + \ens{-\infty,\infty}$, one can
%retrieve the associated interval, which we will denote $[k] = q_{mn}\inv(i)$.
\begin{figure}[h!]
 %%  \begin{minipage}[b]{.40\linewidth}
 %%    \centering
 %%    \caption{}
 %%    \label{fig:f610}
 %% \end{minipage}
  %% \includegraphics[width=0.4\textwidth]{}
  \includegraphics[width=0.4\textwidth]{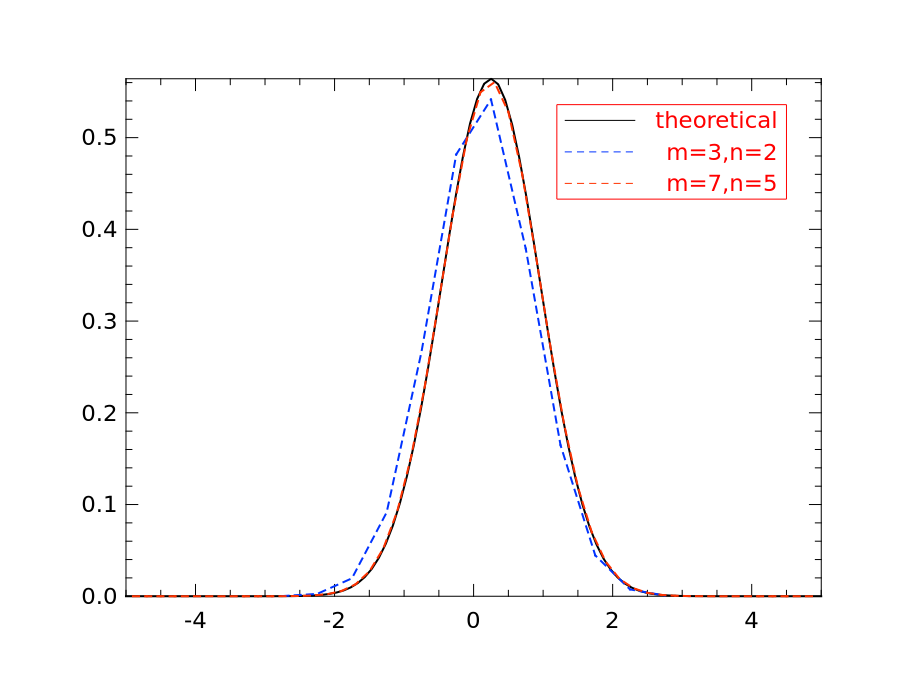}
  \caption{Approximate posteriors}
  \label{fig:approximate_posteriors}
\end{figure}

Since all classes induced by $q_{mn}$ have positive $\mu$-mass,
approximants can be computed simply as:
\[
f^{m,n}([k])([l]) = {\mu[k]}^{-1}\int_{x \in [k]} \norm(x,1)([l])~d\mu
\]
where $[k]$, $[l]$ range over classes of $q_{mn}$.
%Each of these integrals can in turn be easily computed using an off-the-shelf quadrature package. 
The corresponding stochastic matrices are shown
in Fig.~\ref{fig:f53} and~\ref{fig:f610} for $m,n=5,3$ and $6,10$ respectively.

\begin{figure*}
  \begin{minipage}[b]{.46\linewidth}
    \centering\includegraphics[width=0.7\textwidth]{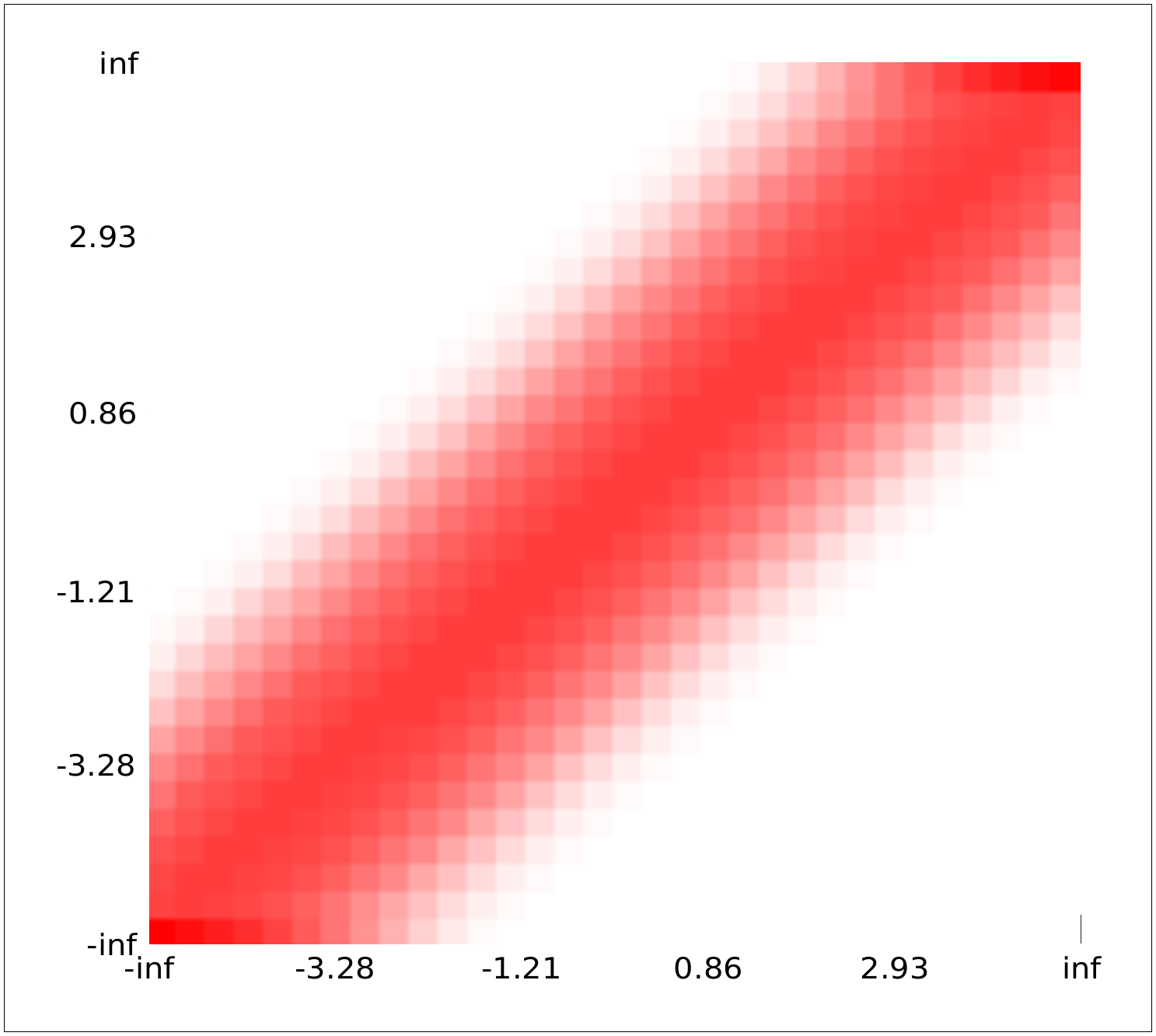}
    \caption{Log-likelihood of $f^{5,3}$}
    \label{fig:f53}
 \end{minipage} \hfill
  \begin{minipage}[b]{.46\linewidth}
    \centering\includegraphics[width=0.7\textwidth]{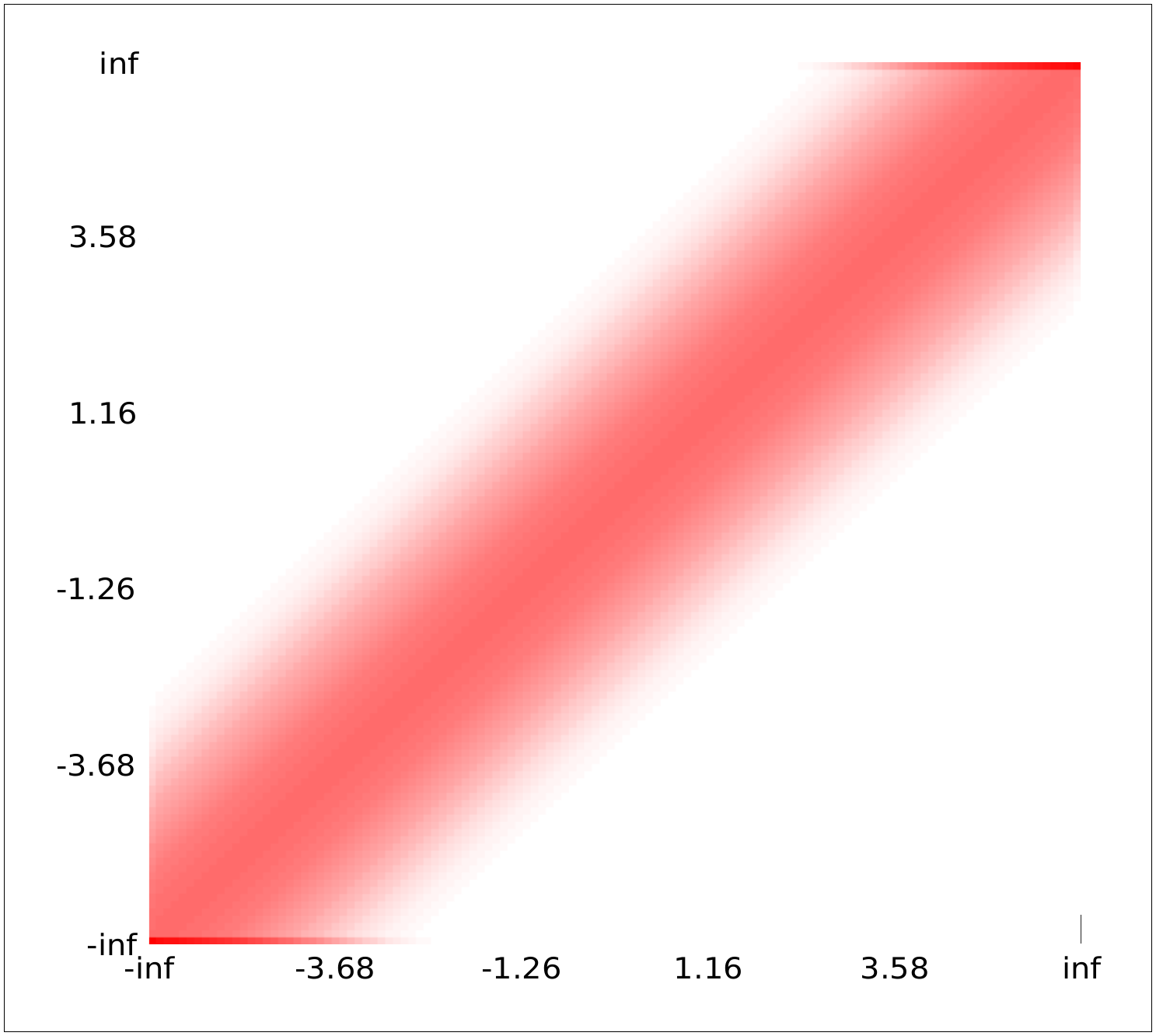}
    \caption{Log-likelihood of $f^{6,10}$}
    \label{fig:f610}
 \end{minipage} \hfill
\end{figure*}

Since these approximants are finite, 
their Bayesian inverse can be computed %in the same way:
directly by Bayes theorem (i.e.\@ taking the adjoint of
the stochastic matrices):
%. For all $k, l \in 2 \times m \times n + \ens{-\infty,\infty}$, 
%the approximate inverse is:
\begin{equation}
\label{eq:bayesianinverse}
{f^{m,n}}\dg([l])([k]) = \frac{\mu[k] \cdot f^{m,n}([k])([l])}{\nu[l]}
\end{equation}
with $\nu=f_*(\mu)$. Commutation of inversion and approximation guarantees that 
the $f^{m,n}{}\dg$ converge to $f\dg$. 

Indeed, Fig.~\ref{fig:approximate_posteriors} shows the
the Lebesgue density of $f^{m,n}{}\dg(0.5)$ for $m,n = 3,2$ (in dashed blue) and $7,5$ 
(dashed red). The latter approximant is already hardly distinguishable from 
the exact solution (solid black).
%corresponding to a gaussian
%of mean $1/2$ and of variance $\sigma^2 = 1/2$ (easily computed using the fact
%that gaussians form a self-conjugate family of measures).

It must be emphasized that this example is meant only as an illustration and does not
constitute a universal solution to the irreducibly hard (not even computable in general
~\cite{AckermanFreerRoyLICS11}) problem of performing Bayesian inversion.
Also, not all quotients are equally convenient: what makes the approach computationally
tractable is that the fibres are easily described and the measure conveniently evaluated
on such fibres. 

\subsection{Approximating the Kleene star of ProbNetKAT}\label{sec:Applications:ProbNetKAT}
\newcommand{\cantor}{\mathtt{cantor}}
ProbNetKAT (\cite{2016:probnetkat,2017:CantorScott}) is a probabilistic network specification language extending Kleene Algebras with Tests (\cite{kozen1997kleene}) with network primitives and a binary probabilistic choice operator $\oplus_\lambda, \lambda\in\unit$. For the purpose of the example shown here we will not need to introduce the full syntax and semantics of ProbNetKAT, rather we will focus on a single ProbNetKAT program which we will call $\cantor$ and is given by:
\begin{equation}\label{eq:probnetkat}
\cantor:=\tt p;(dup;p)^*\quad\text{where}\quad p:=\pi_0! \oplus_{\nicefrac{1}{2}} \pi_1!
\end{equation}
The program acts on sets of finite sequences of 0 and 1, which can be thought of as packet histories. 
We will write $H$ for the set $\{0,1\}^*$ of all packet histories and $H_n$ for the set of histories of length as most $n$. A ProbNetKAT program is always interpreted as a kernel $2^H\to \Giry 2^H$. Programs with both $\tt dup$ and $^\ast$ revealed to be quite complex from the earliest development of the language. As we will describe, $ \cantor$ denotes a continuous distribution and hence having a way to approximate it is crucial for practical uses of the language. The denotation of $\pi_0!$ on a single sequence $\{\mathtt{(a_0,\ldots,a_n)}\}$ is:
\[
 \tt \lsem\pi_0!\rsem(\{(a_0,\ldots,a_n)\})=\delta_{\{(0,a_1,\ldots,a_n)\}}
\]
in other words $\pi_0!$ overwrites the first entry in the sequence with $\tt 0$. Similarly, $\pi_1!$ overwrites the first entry with $\tt 1$. This semantics is extended to sets of sequences in the obvious way by taking direct images.  The semantics of $\mathtt{p}$ is thus:
\[
\lsem \mathtt{p}\rsem(a)=\text{0.5}\delta_{\lsem\pi_0!\rsem(a)}+\text{0.5}\delta_{\lsem\pi_1!\rsem(a)}
\]
The denotation of $\tt dup$ is given on singleton histories by
\[
\tt \lsem dup \rsem(\{(a_0,\ldots,a_n)\})=\delta_{\{(a_0,a_0,\ldots,a_n)\}}
\]
i.e. $\tt dup$ shifts the history to the right and duplicates the first entry. Again, this is extended to sets of histories by taking direct images. The sequential composition operator $;$ is interpreted by Kleisli composition. 

The interpretation of the Kleene star is more involved, and we here describe it categorically. To avoid any confusion we will not use Kleisli arrows in this construction, i.e.\@ all kernels will be explicitly typed as kernels. Note first that the infinite product $(2^H)^\omega$ can be defined as the limit of the ccd given by the maps $q_{n+1,n}:(2^H)^{n+1}\to (2^H)^n$ dropping the last component. By Bochner's theorem  (\cite{machine})  this also holds of $\Giry((2^H)^\omega)$. Next, consider any program $\tt r$. We turn $2^H$ into a cone for the diagram with limit $\Giry((2^H)^\omega)$ via the inductively defined maps:
\begin{align}
a_1&=\eta\otimes \lsem \mathtt{r}\rsem\klcirc \Delta_1: 2^H\to  \Giry(2^H\times 2^H)\label{eq:inftyDef1}\\
a_n&= a_{n-1}\otimes \lsem \mathtt{r}\rsem\klcirc \Delta_n: (2^H)^n\to  \Giry((2^H)^{n}\times 2^H)\label{eq:inftyDef2}
\end{align}
where $\Delta_n: (2^H)^n\to (2^H)^{n}\times 2^H$ is the map copying the last entry.
It is easy to check $q_{n+1,n}\klcirc a_n\klcirc a_{n-1}\klcirc\ldots\klcirc a_1=a_{n-1}\klcirc\ldots\klcirc a_1$, and the diagram described by the morphisms $ b_n:=a_{n}\klcirc\ldots\klcirc a_1: 2^H\to \Giry (2^H)^n$ makes $2^H$ a cone for $\limP\Giry (2^H)^n$. There must therefore exist a unique morphism
\[
\lsem  \mathtt{r}\rsem_\infty: 2^H\to \Giry\left((2^H)^\infty\right).
\]
For each input, this kernel builds a distribution on the sample paths of the discrete-time stochastic processes associated with $\tt r$ and this input. We now define
\[
\lsem \mathtt{r}^*\rsem :=\Giry\left( \bigcup\right)\circ \lsem \tt r\rsem_\infty
\]
where $\bigcup: (2^H)^\infty\to 2^H$ is the map taking infinitary unions. Since the definition above makes sense for any kernel $f$ on $2^H$, we will overload the Kleene star and put $f^*:=\Giry\bigcup \circ f_\infty$. Given the input $\tt(0)$, a sample path of $\cantor$ will draw uniformly a history of size 1, then a history of size 2 whose suffix matches the size 1 history drawn at the previous step, and so on for every integer. The distribution $\lsem \cantor\rsem\mathtt{(0)}$ associates to a measurable collection of sets of histories $A$ the probability that the union of a sample path from $\tt(0)$ belongs to $A$. For example $\lsem \cantor\rsem \mathtt{(0)}\{A\mid (\mathtt{01})\in A\}=\nicefrac{1}{4}$, since there's a $\nicefrac{1}{4}$ chance that a sample path will have drawn  $(\mathtt{01})$ amongst the histories of size 2. 

%ProbNetKAT programs mixing $\tt dup$ and $^\ast$ have presented a major hurdle from the earliest development of the language. Note for example that despite its simplicity, $\lsem \cantor\rsem\mathtt{(0)}$ is a continuous distribution: every singleton has mass zero. We now show how these formulas can be approximated. 

We start by turning $2^H$ into a $\Krn$-object.
Consider the countable directed diagram given by all injections $i_{mn}: H_m\to H_n, n>m$, then $H=\varinjlim H_n$, and it follows that $2^H=\limP 2^{H_n}$ since $2^-$ turns colimits into limits. We know from Bochner's theorem that $\Giry 2^H=\limP \Giry 2^{H_n}$, and we use this fact to place a canonical measure on $2^H$ as follows: since each $2^{H_n}$ is finite with cardinality $c_n:=2^{\sum_i^n 2^i}=2^{2^{n+1}-1}$, and can thus be equipped with the uniform measure $\nicefrac{1}{c_n}$, we can find a limit measure $\mu$ on $2^H$ with the pleasing property that for all history truncating maps $p_n: 2^H\to 2^{H_n}$, the pushforward $\mu_n:=(p_n)_*\mu$ is the uniform measure on $H_n$. It is clear that these maps define a discretization scheme on $2^H$ which satisfies the condition of Theorem \ref{thm:ApproxConv}. We will now show that if $f^n\SOT f$, then $(f^n)^\ast\SOT f^\ast$. To prove this we need the following lemma which is interesting in its own right.

\begin{lemma}\label{lem:tensorCont}
The monoidal structure of $\Krn$ is continuous for the SOT, i.e.\@ $f^n\SOT f$ and $g^n\SOT g$ implies $f^n\otimes g^n\SOT f\otimes g$.
\end{lemma}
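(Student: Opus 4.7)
\textbf{Proof plan for Lemma \ref{lem:tensorCont}.} Let $f,f^n: (X,\mu)\klar(Y,\nu)$ and $g,g^n: (X',\mu')\klar(Y',\nu')$ with $f^n\SOT f$ and $g^n\SOT g$. I need to show that for every $\Phi\in\Lps[1](Y\times Y',\nu\otimes \nu')$,
$\Norm{\Lpfop[1](f^n\otimes g^n)(\Phi)-\Lpfop[1](f\otimes g)(\Phi)}_1\to 0$. My plan is to (i) handle simple tensors $\phi\otimes\psi$ first, where the computation factorizes, (ii) bootstrap via density and uniform boundedness to all of $\Lps[1](\nu\otimes\nu')$.

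For step (i), I would use Fubini to compute, for $\phi\in\Lps[1](Y,\nu)$ and $\psi\in\Lps[1](Y',\nu')$, that
\[
\Lpfop[1](f\otimes g)(\phi\otimes\psi)(x,x')
= \int_{Y\times Y'}\phi\otimes\psi~d(f(x)\otimes g(x'))
= \Lpfop[1](f)(\phi)(x)\cdot \Lpfop[1](g)(\psi)(x').
\]
Writing $u_n=\Lpfop[1](f^n)(\phi)$, $u=\Lpfop[1](f)(\phi)$, $v_n=\Lpfop[1](g^n)(\psi)$, $v=\Lpfop[1](g)(\psi)$, the telescoping identity $u_n\otimes v_n - u\otimes v = (u_n-u)\otimes v_n + u\otimes(v_n-v)$ together with $\Norm{r\otimes s}_1 = \Norm{r}_1\Norm{s}_1$ on the product measure yields
\[
\Norm{u_n\otimes v_n - u\otimes v}_1 \leq \Norm{u_n-u}_1\,\Norm{v_n}_1 + \Norm{u}_1\,\Norm{v_n-v}_1.
\]
Since $\Lpfop[1](h)$ is an $\Lps[1]$-contraction for any kernel $h$ (indeed $\Norm{\Lpfop[1](h)(\phi)}_1\leq \int_X(\Absval{\phi}\klcirc h)d\mu = \Norm{\phi}_1$ by Theorem~\ref{thm:chgVarKrn}), the factor $\Norm{v_n}_1$ is bounded by $\Norm{\psi}_1$; combined with $\Norm{u_n-u}_1,\Norm{v_n-v}_1\to 0$ this gives convergence on simple tensors and hence, by linearity, on every finite linear combination of simple tensors.

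For step (ii), I would invoke density: finite linear combinations of simple tensors (equivalently, simple functions built from measurable rectangles) are dense in $\Lps[1](\nu\otimes \nu')$, a standard consequence of the definition of the product $\sigma$-algebra and the Dynkin/monotone class argument. Given $\Phi\in\Lps[1](\nu\otimes\nu')$ and $\epsilon>0$, pick $\tilde\Phi$ a finite sum of simple tensors with $\Norm{\Phi-\tilde\Phi}_1<\epsilon$. Using the uniform bound $\Norm{\Lpfop[1](f^n\otimes g^n)}\leq 1$ (and similarly for the limit), a standard $3\epsilon$-argument gives
\[
\Norm{\Lpfop[1](f^n\otimes g^n)(\Phi)-\Lpfop[1](f\otimes g)(\Phi)}_1
\leq 2\epsilon + \Norm{\Lpfop[1](f^n\otimes g^n)(\tilde\Phi)-\Lpfop[1](f\otimes g)(\tilde\Phi)}_1,
\]
and step (i) sends the second term to $0$ as $n\to\infty$, finishing the proof.

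The only non-routine step is the factorization in (i); it is the place where the specific definition of $\otimes$ on kernels interacts with the operator $\Lpfop[1]$, and it requires invoking Fubini on the product measure $f(x)\otimes g(x')$. Once that identity is in hand, everything else reduces to textbook $\Lps[1]$-approximation.
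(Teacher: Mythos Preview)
Your proof is correct, and it takes a different (and in several respects cleaner) route than the paper.

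The paper argues as follows: it first establishes \emph{pointwise $\mu_1\otimes\mu_2$-a.e.} convergence of $\int \phi\,d(f^n\otimes g^n)(x_1,x_2)$ to $\int \phi\,d(f\otimes g)(x_1,x_2)$, beginning with indicator functions $\one_A$ for arbitrary measurable $A\subseteq Y_1\times Y_2$, not just rectangles. For this step it invokes Theorem~\ref{thm:ApproxConv} (pointwise a.e.\ convergence of discretization approximants) together with dominated convergence. It then bounds $\sup_n\Norm{\Lpfop(f^n\otimes g^n)(\phi)}_1$ via Fubini and Lemma~\ref{lem:nonExpansive}, and concludes by asserting that pointwise a.e.\ convergence plus this $\Lps[1]$-bound yields $\Lps[1]$-convergence.

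Your approach bypasses pointwise considerations entirely: you factorize $\Lpfop[1](f\otimes g)(\phi\otimes\psi)$ as a simple tensor of $\Lps[1]$-functions, apply the algebraic telescoping $u_n\otimes v_n-u\otimes v=(u_n-u)\otimes v_n+u\otimes(v_n-v)$, and close with a density/$3\epsilon$ argument. Two things this buys you. First, it proves the lemma under the \emph{stated} hypotheses: the paper's proof leans on Theorem~\ref{thm:ApproxConv} and Lemma~\ref{lem:nonExpansive}, both of which are specific to approximants arising from discretization schemes, so as written it only covers that case (which is, admittedly, the only one needed downstream). Second, you avoid the step ``pointwise a.e.\ $+$ $\Lps[1]$-boundedness $\Rightarrow$ $\Lps[1]$-convergence'', which is not valid in general without uniform integrability; the paper does not verify this explicitly (it presumably holds via the underlying martingale structure, but that is not spelled out). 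What the paper's route buys is that it treats general measurable $A$ from the outset, but since you recover those by density this is no real gain.
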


\begin{theorem}\label{thm:KleeneStarConv}
Under the set-up described above, for any kernel $f: (2^H,\mu)\klar (2^H,\nu)$ we have $
(f^{n})^\ast \SOT f^\ast$
\end{theorem}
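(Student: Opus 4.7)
The plan is to decompose the Kleene-star construction into three layers---the finite-stage kernels $b_m$, the projective-limit kernel $f_\infty$, and the deterministic pushforward by $\bigcup$---and to propagate SOT-convergence through each. Throughout I would rely on two facts: $\Lpfop[1]$ of any Markov kernel is an $L^1$-contraction (by the change-of-variables formula, Theorem \ref{thm:chgVarKrn}), and tensor products preserve SOT-convergence (Lemma \ref{lem:tensorCont}).

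For the finite stage I would show by induction on $m$ that $b_m^{(n)} \SOT b_m$, where $b_m^{(n)}$ is the kernel built from $f^n$ via the recurrences \eqref{eq:inftyDef1}--\eqref{eq:inftyDef2}. The base case reduces to $\eta \otimes f^n \SOT \eta \otimes f$, immediate from Lemma \ref{lem:tensorCont}; postcomposition with the deterministic diagonal $\Delta_1$ translates on the operator side to precomposition with a fixed bounded linear map, which preserves SOT. The induction step combines Lemma \ref{lem:tensorCont} with the standard fact that SOT-convergence is stable under Kleisli composition of uniformly norm-bounded operators---and the bound $1$ is automatic by contractivity.

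Lifting to the infinite level, for every cylinder test function $\chi \circ \pi_m$ with $\chi \in L^1((2^H)^m)$ the universal property of $f_\infty$ gives $\Lpfop[1] f_\infty(\chi \circ \pi_m) = \Lpfop[1] b_m(\chi)$, and likewise for $f^n$; the finite stage therefore delivers SOT-convergence on cylinders. Norm-density of cylinder functions in $L^1((2^H)^\infty)$ combined with uniform $L^1$-contractivity of all the $\Lpfop[1] (f^n)_\infty$ then extends convergence to the whole space by a three-$\varepsilon$ argument, yielding $(f^n)_\infty \SOT f_\infty$. Finally, since $\bigcup$ is deterministic one has
\[
\Lpfop[1] f^\ast(\phi) \;=\; \Lpfop[1] f_\infty(\phi \circ \bigcup),
\]
and $\phi \circ \bigcup$ lies in the appropriate $L^1$-space by Theorem \ref{thm:chgVarKrn}; specializing the previous step to this test function delivers $(f^n)^\ast \SOT f^\ast$.

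The main technical obstacle is matching the $L^1$-spaces that host these operators. Although $f^n$ and $f$ produce the same output measure on $2^H$---the Bayesian-inverse identity $p_n\dg \klcirc (p_n)_\ast \mu = \mu$ forces this---the couplings $\gamma_{f^n}$ and $\gamma_f$ already disagree off $\sigma(p_n)$-rectangles, so the pushforwards onto $(2^H)^\infty$ genuinely move with $n$. I would handle this by working relative to a common dominating reference measure on $(2^H)^\infty$ (for instance the infinite product built from the discretization scheme), so that all relevant $L^1$-spaces embed into one ambient space and both the cylinder-density statement and the three-$\varepsilon$ estimates are uniform in $n$.
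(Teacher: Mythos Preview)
Your proposal is correct and follows essentially the same route as the paper: both argue $b_m^{(n)}\SOT b_m$ by induction using Lemma~\ref{lem:tensorCont} together with the joint SOT-continuity of composition on uniformly bounded (contractive) operators, then lift to $f_\infty$ by reducing test functions to ones depending on finitely many coordinates, and finally push through the deterministic $\bigcup$.

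Two minor differences are worth noting. First, the paper handles the passage from finite to infinite by asserting that any simple function on $(2^H)^\omega$ only depends non-trivially on finitely many coordinates; your density-of-cylinders plus three-$\varepsilon$ argument is cleaner and avoids the implicit identification of ``simple'' with ``cylinder'' that the paper's formulation relies on. Second, the paper silently works in $\Lps[1]((2^H)^\omega,\mu^\omega)$ with the fixed product measure and does not discuss the moving pushforward measures you flag; your remedy via a common dominating reference measure is exactly the right way to make that step honest, and in fact the paper's choice of $\mu^\omega$ already plays this role (all the pushforwards under $b_m^{(n)}$ and $b_m$ are absolutely continuous with respect to the relevant product measures by Proposition~\ref{prop:MFunctor}).
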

The advantage of working over finite spaces is that $(f^n)^\ast$ can, in principle at least, be computed for kernels defined in ProbNetKAT. Let us examine this in the case of $\cantor$ and of the discretization scheme $p_n: 2^H\to 2^{H_n}$.

In the case $n=3$ the underlying Markov chain has $2^7$ states, but has an interesting property which means we need not consider them all: when we compute $\tt \lsem p\rsem  ; \lsem (dup ; p)\rsem_\infty^3$, the process necessarily lands in an ergodic component of the chain consisting of the singletons of histories of length exactly 3. The reason is that once the process reaches histories of length 3 it starts randomly re-writing the histories, and with probability 1 any two histories will eventually get re-written to the same thing. Once a set of histories has decreased in cardinality by one, it can never go back, thus eventually \emph{any} set of histories gets re-written to a single length 3 history, and then loops among length 3 singletons indefinitely.
The situation is represented from the initial state $\tt(0)$ in Figure \ref{fig:MC} where, for clarity's sake, the ergodic component is symbolized by common double-sided arrows to a new state.

\begin{figure}[h]
\begin{tikzpicture}
font=\scriptsize,
    baseline=1ex,shorten >=.1pt,node distance=6mm,on grid,
    semithick,auto,
every state/.style={fill=white,draw=black,circular drop shadow,inner sep=0mm,text=black}
        % Add the states
        \node[state]             (s) {$\tt(0)$};
        
        \node[state, above right = 15mm and 6mm of s] (0) {$\tt(0)$};
        \node[state, below right = 15mm and 6mm of s] (1) {$\tt(1)$};
        
         \node[state, above right = 5mm and 10mm of 0] (00) {$\tt(00)$};
         \node[state, below right = 5mm and 10mm of 0] (10) {$\tt(10)$};
         \node[state, above right = 5mm and 10mm of 1] (01) {$\tt(01)$};
         \node[state, below right = 5mm and 10mm of 1] (11) {$\tt(11)$};
         
         \node[state, above right = 0mm and 20mm of  00] (000) {$\tt(000)$};
         \node[state, below right  = 0mm and 20mm of  00] (100) {$\tt(100)$};
         \node[state, above right = 0mm and 20mm of 10] (010) {$\tt(010)$};
         \node[state, below right = 0mm and 20mm of 10] (110) {$\tt(110)$};
         \node[state, above right = 0mm and 20mm of 01] (001) {$\tt(001)$};
         \node[state, below right = 0mm and 20mm of 01] (101) {$\tt(101)$};
         \node[state, above right = 0mm and 20mm of 11] (011) {$\tt(011)$};
         \node[state, below right = 0mm and 20mm of 11] (111) {$\tt(111)$};
         
          \node[state, below right = 0mm and 20mm of 110] (t) {};

        % Connect the states with arrows
        \draw[every loop]
            (s) edge[bend left, auto=left]  node {0.5} (0)
            (s) edge[bend right, auto=right]  node {0.5} (1)
            (0) edge[bend left, auto=left]  node {0.5} (00)
            (0) edge[bend right, auto=right]  node {0.5} (10)
            (1) edge[bend left, auto=left]  node {0.5} (01)
            (1) edge[bend right, auto=right]  node {0.5} (11)
            (00) edge[bend left=20, auto=left]  node {0.5} (000)
            (00) edge[bend right=20, auto=right]  node {0.5} (100)
            (10) edge[bend left=20, auto=left]  node {0.5} (010)
            (10) edge[bend right=20, auto=right]  node {0.5} (110)
            (01) edge[bend left=20, auto=left]  node {0.5} (001)
            (01) edge[bend right=20, auto=right]  node {0.5} (101)
            (11) edge[bend left=20, auto=left]  node {0.5} (011)
            (11) edge[bend right=20, auto=right]  node {0.5} (111)
            (000) edge[bend left, auto=left,<->]  node {} (t)
            (001) edge[bend right, auto=left,<->]  node {} (t)
            (010) edge[bend left, auto=left,<->]  node {} (t)
            (011) edge[bend right, auto=left,<->]  node {} (t)
            (100) edge[bend left, auto=left,<->]  node {} (t)
            (101) edge[bend right, auto=left,<->]  node {} (t)
            (110) edge[bend left, auto=left,<->]  node {} (t)
            (111) edge[bend right, auto=left,<->]  node {} (t);
\end{tikzpicture}
\caption{ $\tt \lsem p\rsem  ; \lsem (dup ; p)\rsem_\infty^3)$ starting at $\tt(0)$}\label{fig:MC}
\end{figure}
\noindent By post-composing with $\Giry\bigcup$ we have
\[
 \lsem \mathtt{p}\rsem  ; (\lsem \mathtt{(dup ; p)}\rsem^3)^\ast(a)(A)=0
 \] 
if $A$ does not contain all histories of length 3. We have meaningful answers to questions about histories up to length 2:
\begin{align*}
&\lsem \mathtt{p}\rsem  ; (\lsem \mathtt{(dup ; p)}\rsem^3)^\ast(\mathtt{(0)})(\{A\mid \mathtt{(1)}\in A\})=0.5\\
&\lsem \mathtt{p}\rsem  ; (\lsem \mathtt{(dup ; p)}\rsem^3)^\ast(\mathtt{(0)})(\{A\mid \mathtt{(10)}\in A\})=0.25
\end{align*}
In other words, at $n=3$ we have the first two steps in the construction of the Cantor distribution towards which $\cantor$ converges.
\vspace{-2ex}
\section{Conclusion}
We have presented a framework for the exact \emph{and} approximate
semantics of first-order probabilistic programming. 
%and other similar probabilistic systems. 
The semantics can be read off 
either in terms of kernels
between measured spaces, or in terms of operators between $\Lps$
spaces. Either forms come with related involutive structures:
Bayesian inversion for (measured) kernels between Standard Borel
spaces, and K\"othe duality for positive linear and $\sigma$-continuous
operators between Banach lattices. Functorial relations between both
forms can themselves be related by way of natural isomorphisms.
Our main result is the
convergence of general systems of finite approximants in terms of the
strong operator topology (the SOT theorem). Thus, in principle, one
can compute arbitrarily good approximations of the semantics of a
probabilistic program of interest for any given (measurable) query. 
%
%%Weillustrate the process with a simple Gaussian-based Bayesian model
%%where remarkable accuracy is obtained with coarse approximants.
%
Future work may allow one to derive stronger notions of convergences
given additional Lipschitz control on kernels, or to
develop approximation schemes that are adapted to
the measured kernel of interest. More
ambitiously perhaps, one could investigate whether MCMC sampling schemes
commonly used to perform approximate Bayesian inference in the
context of probabilistic programming could be seen as randomized
approximations of the type considered in this paper.

%% Acknowledgments
%\begin{acks}                            %% acks environment is optional
%                                        %% contents suppressed with 'anonymous'
%  %% Commands \grantsponsor{<sponsorID>}{<name>}{<url>} and
%  %% \grantnum[<url>]{<sponsorID>}{<number>} should be used to
%  %% acknowledge financial support and will be used by metadata
%  %% extraction tools.
%  This material is based upon work supported by the
%  \grantsponsor{GS100000001}{National Science
%    Foundation}{http://dx.doi.org/10.13039/100000001} under Grant
%  No.~\grantnum{GS100000001}{nnnnnnn} and Grant
%  No.~\grantnum{GS100000001}{mmmmmmm}.  Any opinions, findings, and
%  conclusions or recommendations expressed in this material are those
%  of the author and do not necessarily reflect the views of the
%  National Science Foundation.
%\end{acks}

\vspace{-2ex}

%% Bibliography
\bibliography{RN}

%% Appendix
\appendix
\section{Appendix}

\begin{myProof}{Theorem \ref{thm:ccdRepresentation}}
This is a consequence of the Isomorphism Theorem (Theorem 15.6 of \cite{kechris}): two SB spaces are isomorphic iff they have the same cardinality. Uncountable SB spaces are thus all isomorphic to the Cantor space $\Ct$ which is the limit of the countable co-directed diagram $(2^n)_{n\in\N}$ with the connecting morphisms $p_{n+1,n}: 2^{n+1}\to 2^n$ truncating binary words of length $n+1$ at length $n$. Similarly all SB-spaces of cardinality $\aleph_0$ are isomorphic to the one-point compactification of $\N$, which is the limit of the countable co-directed diagram $(n)_{n\in\N}$ with the connecting morphisms $p_{n+1,n}: n+1\to n, i\mapsto \min(i,n)$. The case of finite SB spaces is trivial.
\end{myProof}

\begin{myProof}{Lemma \ref{lem:NggMeas}}
  By Dinkyn's $\pi$-$\lambda$ theorem, two finite measures are equal if and only if they agree
  on a $\pi$-system generating the $\sigma$-algebra. Any standard Borel space admits such a countable $\pi$-system (any countable basis for a Polish topology generating the $\sigma$-algebra). Let $\{B_n\}_{n \in \N}$ be such a $\pi$-system. Then, for all $x \in X$, $g(x) \neq g'(x) \Leftrightarrow \exists n. g(x)(B_n) \neq g'(x)(B_n)$. Hence,
\[
  \begin{array}{lll}
    N(g,g') & = &  \cup_n \{x \in X \mid g(x)(B_n) \neq g'(x)(B_n)\} \\
            & = & \cup_n \{ x \in X \mid ev_{B_n}(g(x)) \neq ev_{B_n}(g'(x)) \} \\
            & = & \cup_n (ev_{B_n} \circ g - ev_{B_n} \circ g')\inv(\R \setminus \{0\}) \\
  \end{array}
\]
  By definition of the measurable structure of $\Giry(Y)$, $ev_{B_n} \circ g - ev_{B_n} \circ g'$ is
  measurable, hence $N(g,g')$ is also measurable.
\end{myProof}

\begin{myProof}{Proposition \ref{prop:compatibility}}
	We first show that if $g \sim g'$, then $h \klcirc g \sim h \klcirc g'$. Clearly, for   any space $V$ and any deterministic function $u : Y \to V$,
  $N(u \circ g, u \circ g') \subseteq N(g, g')$. 
  By definition of the Kleisli category, $h \klcirc g = m_Z \circ \Giry(h) \circ g$
  and similarly for $h \klcirc g'$. Taking $u = m_Z \circ \Giry(h)$,
  we obtain that $\mu(N(h \klcirc g, h \klcirc g')) \le \mu(N(g, g'))$.

  It is now enough to show that $\lambda(N(g \klcirc f, g' \klcirc f)) = 0$.
  Let us reason contrapositively. We have:
   \[
  \begin{array}{llll}
          & \lambda(N(g \klcirc f, g' \klcirc f))                        & > 0 \\
     \IFF & \int_{w \in W} \one_{N(g \klcirc f, g' \klcirc f)}(w) \: d\lambda & > 0 \\
     \to & \int_{w \in W} \sum_{n \in \N} \one_{(g \klcirc f)(w)(B_n) \neq (g' \klcirc f)(w)(B_n)} \: d\lambda & > 0 \\
     \stackrel{\exists n}{\to} & \int_{w \in W} \one_{(g \klcirc f)(w)(B_n) \neq (g' \klcirc f)(w)(B_n)} \: d\lambda & > 0 \\
     \to & \int_{w \in W} \Absval{(g \klcirc f)(w)(B_n) - (g' \klcirc f)(w)(B_n)} \: d\lambda & > 0   \\
     \IFF & \int_{w \in W} \int_{x \in X} \Absval{g(x)(B_n) - g'(x)(B_n)} ~df(w) \: d\lambda & > 0   \\
     \IFF & \int_{x \in X} \Absval{g(x)(B_n) - g'(x)(B_n)} \: d\mu & > 0  \\ 
     \stackrel{\exists X^+ \subseteq X}{\hspace{-13pt}\to} & \int_{x \in X^+} g(x)(B_n) - g'(x)(B_n) \: d\mu & > 0  \\
     \to & \int_{x \in X^+} \one_{g(-)(B_n) > g'(-)(B_n)}(x) \: d\mu & > 0  \\
     \to & \int_{x \in X^+} \one_{N(g,g')}(x) \: d\mu & > 0  \\    
  \end{array}
  \]
  The last line implies $\mu(N(g,g')) > 0$, a contradiction.
\end{myProof}

\begin{myProof}{Theorem \ref{thm:chgVarKrn}}
If $\phi$ is $\nu$-integrable, there exists a monotone sequence $\{\phi_n\}$ of simple functions such that $\phi_n\uparrow \phi$ and $\int_Y \phi_n d\nu\to \int_Y \phi d\nu<\infty$. By definition each $\phi_n=\sum_{i=0}^k\alpha_i \one_{B_i}$, and by unravelling the definition we have
\begin{align*}
\int_Y \one_{B_i} d\nu& =\nu(B_i)\\
&=\int_X f(x)(B_i)d\mu\\
&=\int_X \int_Y \one_{B_i}df(x) d\mu\\
&=\int_X (\one_{B_i}\klcirc f) d\mu
\end{align*}
From which it follows that
\[
\int_Y \phi_nd\nu=\int_X \int_Y \sum_{i=0}^k\alpha_i\one_{B_i}df(x)d\mu=\int_X(\phi_n\klcirc f)d\mu  
\]
and the result follows from the Monotone Convergence Theorem (MCT).
\end{myProof}

\begin{myProof}{Theorem \ref{thm:bayesianinversion}}
  It follows by definition of $f\dg$ and from the disintegration theorem that
  \begin{equation}
    \label{eq:thm:bayesianinversion}
    \int_{y\in Y} f\dg(y)(A) \cdot \one_B(y) ~ d\nu = \gamma_f(A \times B),
  \end{equation}
  from which Eq.~\ref{eq:disintegrationEqu} follows easily.
  It remains to prove that this uniquely characterizes $f\dg$. Let us reason contrapositively.
  Assume there exists $g : (Y,\nu) \klar (X,\mu)$
  verifying for all $A,B$ measurable $\int_{y \in Y} g(y)(A) \cdot \one_B(y) = \gamma_f(A \times B)$ as in Eq.~\ref{eq:thm:bayesianinversion}
  and such that $\nu(N(f\dg,g)) > 0$ (assuming we take some representative of $f\dg$). Let $\{ A_n \}_{n \in \N}$
  be a countable $\pi$-system generating the $\sigma$-algebra of $X$. It is enough to test equality of
  measures on $X$ on this $\pi$-system. Therefore, $N(f\dg,g) = \cup_n \{y \mid f\dg(y)(A_n) \neq g(y)(A_n) \}$.
  Since $\nu(N(f\dg,g)) > 0$, there must exist a $k \in \N$ such that $\nu(\{y \mid f\dg(y)(A_k) \neq g(y)(A_k) \}) > 0$.
  Therefore, $N_k^+ =  \{y \mid f\dg(y)(A_k) > g(y)(A_k) \}$ must also have positive measure for $\nu$.
  But then, $\int_{y \in Y} g(y)(A_k) \cdot \one_{N_k^+}(y) \neq \gamma_f(A_k \times N_k^+)$, a contradiction.
\end{myProof}

\begin{myProof}{Theorem \ref{thm:KrnDagger}}
  Let us first show that $(-)\dg$ is a functor $\Krn\to\Krn\op$, i.e.\@ that $\IdMor_{(X,\mu)}\dg=\IdMor_{(X,\mu)}$ and that for any $f: (X,\mu)\klar (Y,\nu)$ and $g:(Y,\nu)\klar(Z,\rho)$ we have $(g\klcirc f)\dg=f\dg\circ g\dg$.

  Let $(X, \mu)$ be an object of $\Krn$ and $\IdMor_{X,\mu}$ the corresponding
  identity. By Th.~\ref{thm:bayesianinversion}, it is enough to prove, for all $A, A'$ measurable
  subsets of $X$, that
  \[
  \int_{x \in X} \one_{A}(x) \cdot id_{X,\mu}(x)(A') ~d\mu = \int_{x \in X} id_{X,\mu}(x)(A) \cdot \one_{A'}(x) ~d\mu.
  \]
  We have:
  \[
    \int_{x \in X} \one_{A}(x) \cdot id_{X,\mu}(x)(A') ~d\mu = \int_{x \in X} \one_{A}(x) \cdot \one_{A'} ~d\mu = \mu(A \cap A')
  \]
  The same calculation on the right hand side of the first equation yields trivially the same result.
  Hence the equality is verified.

  Now, on to compatibility w.r.t. composition.
  In sight of Th.~\ref{thm:bayesianinversion}, it is enough to show that for all
  $A \subseteq X$, $C \subseteq Z$,
  \[
  \int_{x \in X} (g \klcirc f)(x)(C) \cdot \one_{A}(x) ~d\mu = \int_{z \in Z} \one_{C}(z) \cdot (f\dg \klcirc g\dg)(z)(A) ~d\rho
  \]
  In the following, for $X$ a measurable space, we denote by $SF(X)$ the set of simple functions over $X$
  (finite linear combinations of indicator functions of measurable sets). We will use repeatedly the monotone
  convergence theorem (MCT). The left hand side of the above equation can be re-written as:
  \[
  \begin{array}{ll}
    & \int_{x \in X} \pars{\int_{y \in Y} g(y)(C) ~df(x)} \cdot \one_{A}(x) ~d\mu \\
    \stackrel{(1)}{=} & \int_{x \in X} \pars{\int_{y \in Y} \lim_{n \to \infty}  g_n(y) ~df(x)} \cdot \one_{A}(x) ~d\mu  \\
    \stackrel{(2)}{=} & \int_{x \in X} \lim_{n \to \infty} \pars{\int_{y \in Y} g_n(y) ~df(x)} \cdot \one_{A}(x) ~d\mu \\
  \end{array}
  \]
  where $(1)$ is because $g_n \uparrow g(-)(C), g_n \in SF(Y)$ and $(2)$ by monotone convergence.
  Note that the $n$-indexed family $x \mapsto \int_{y \in Y} g_n(y) ~df(x)$ is pointwise increasing. Therefore,
  \[
  \begin{array}{ll}
   ^{(\ast)}& \lim_{n \to \infty}  \int_{x \in X} \pars{\int_{y \in Y} g_n(y) ~df(x)} \cdot \one_{A}(x) ~d\mu  \\
    \stackrel{(1)}{=} &  \lim_{n \to \infty} \int_{x \in X} \pars{\sum_{i=1}^{k_n} \alpha_i^n f(x)(C_i^n)} \cdot \one_{A}(x) ~d\mu \\
    = &  \lim_{n \to \infty} \sum_{i=1}^{k_n} \alpha_i^n \int_{x \in X}  f(x)(C_i^n) \cdot \one_{A}(x) ~d\mu \\
     \stackrel{(2)}{=} &  \lim_{n \to \infty} \sum_{i=1}^{k_n} \alpha_i^n \int_{y \in Y}  \one_{C_i^n}(y) \cdot f\dg(y)(A) ~d\nu  \\
    \stackrel{(\ast)}{=} &  \int_{y \in Y}  g(y)(C) \cdot f\dg(y)(A) ~d\nu \\
     \stackrel{(3)}{=} &  \int_{y \in Y}  g(y)(C) \cdot \lim_n f_n(y) ~d\nu \\
    \stackrel{(\ast)}{=} & \lim_n \sum_{i = 1}^{k_n} \beta_i^n \int_{y \in Y}  g(y)(C) \cdot \one_{D_i^n}(y)  ~d\nu \\ 
    \stackrel{(2)}{=} & \lim_n \sum_{i = 1}^{k_n} \beta_i^n \int_{z \in Z} \one_{C}(z) \cdot g\dg(z)(D_i^n) ~d\rho  \\
    = & \lim_n \int_{z \in Z} \one_{C}(z) \cdot \int_{y \in Y} \sum_{i = 1}^{k_n} \beta_i^n \one_{D_i^n}(y) ~dg\dg(z) ~d\rho \\
    \stackrel{(\ast)}{=} & \int_{z \in Z} \one_{C}(z) \cdot \int_{y \in Y} f\dg(y)(C) ~dg\dg(z) ~d\rho  \\
    = & \int_{z \in Z} \one_{C}(z) \cdot (f\dg \klcirc g\dg)(z)(A) ~d\rho
  \end{array}
  \]
  where $(\ast)$ is by monotone convergence, $(1)$ is because $g_n \in SF(Y)$, $(2)$ is by Th.~\ref{thm:bayesianinversion} and $(3)$ is because $f_n \uparrow f\dg(-)(A), f_n \in SF(Y)$.  We have proved the sought identity.

  Finally let us show that $(-)\dg$ is involutive, i.e.\@ that for any $f: (X,\mu)\klar (Y,\nu)$, $(f\dg)\dg=f$. 
  This follows easily by two applications of Th.~\ref{thm:bayesianinversion}):
  we have
  \[
  \begin{array}{lll}
    \int_{x \in X} \one_{A}(x) \cdot (f\dg)\dg(x)(B) ~d\mu & = &
    \int_{y \in Y} f\dg(y)(A) \cdot \one_{B}(y) ~d\nu \\
 & = &
    \int_{x \in X} \one_{A}(x) \cdot f(x)(B) ~d\mu;
  \end{array}
  \]
  and since adjoints are unique, $f = (f\dg)\dg$.
  
  The fact that $(f\otimes g)\dg=f\dg\otimes g\dg$ follows immediately from the definitions and the property of disintegrations given by Th.~\ref{thm:bayesianinversion}. The fact that the associator, unitors and braiding transformations are unitary follows immediately from the fact that they are deterministic isomorphisms and Th.~\ref{thm:Disintegration}.
\end{myProof}

\begin{myProof}{Proposition \ref{prop:MFunctor}}
  Let $B \subseteq Y$ be a measurable set. By definition,
  we have $(f \klcirc \rho)(B) = \int_X ev_B \circ f ~d\rho$
  where we recall that $ev_B : \Giry(X) \to \R_+$ is the evaluation
  morphism. Let $\{ f^B_n \}_{n \in \N}$ be an increasing chain of
  simple functions converging pointwise to $ev_B \circ f$ such that
  for each $n$,  $f^B_n = \sum_{i = 1}^{k_n} \alpha_i^n \one_{A_i^n}$
  with $\alpha_i^n \ge 0$.
  By the MCT,
  \[
  (f \klcirc \rho)(B) = \lim_n \int_X f^B_n ~d\rho = \lim_n \sum_{i = 1}^{k_n} \alpha_i^n \rho(A_i^n).
  \]
  Similarly,
  \[
  \nu(B) = (f \klcirc \mu)(B) = \lim_n \int_X f^B_n ~d\mu = \lim_n \sum_{i = 1}^{k_n} \alpha_i^n \mu(A_i^n).
  \]
  Notice that since the integral is linear and the sequence
  $\{ f_n^B \}_n$ is increasing, the sequences
  $\{\int_X f^B_n ~d\rho\}_n$  and $\{\int_X f^B_n ~d\mu\}_n$
  are also increasing.
  Assume $\nu(B) = 0$. Then for all $n$, $\int_X f^B_n ~d\mu = 0$.
  We deduce that for all $n$, for all $1 \le i \le k_n$, either
  $\alpha_i^n = 0$ or $\mu(A_i^n) = 0$.
  Using that $\rho \ll \mu$, we deduce that
  for all $1 \le i \le k_n$, either $\alpha_i^n = 0$ or $\rho(A_i^n) = 0$,
  from which we conclude that for all $n$, $\int_X f^B_n ~d\rho = 0$
  and finally, $(f \klcirc \rho)(B) = 0$.
  Hence, $f \klcirc \rho \ll \nu$.
\end{myProof}

\begin{myProof}{Theorem \ref{thm:RNnatural}}
We start by proving the following Lemma
\begin{lemma}\label{lem:rnHelp}
For any $f:(X,\mu)\klar(Y,\nu)$, $\phi\in \Lps[1](Y,\nu)$, and $B_X\subseteq X$ measurable 
\[
\int_{B_X} \left(\int_Y \phi df(x)\right)~d\mu = \int_Y \phi(y) f\dg(y)(B_X)  ~d\nu
\]
\end{lemma}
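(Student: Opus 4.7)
The plan is to prove the identity by the standard measure-theoretic bootstrapping argument: verify it on indicators, extend by linearity to simple functions, pass to non-negative measurable functions by monotone convergence, and finally handle signed integrable functions by splitting into positive and negative parts.

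First I would handle the base case where $\phi = \one_{B_Y}$ for some measurable $B_Y\subseteq Y$. In this case, the left-hand side rewrites as
\[
\int_{B_X}\left(\int_Y \one_{B_Y}\,df(x)\right)d\mu = \int_X \one_{B_X}(x)\cdot f(x)(B_Y)\,d\mu,
\]
while the right-hand side is $\int_Y \one_{B_Y}(y)\cdot f\dg(y)(B_X)\,d\nu$. These are equal by the defining equation of Bayesian inversion, Theorem~\ref{thm:bayesianinversion} (Eq.~\ref{eq:disintegrationEqu}), applied with $A=B_X$ and $B=B_Y$. Linearity of both sides in $\phi$ then extends the identity to all simple functions.

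Next I would treat an arbitrary non-negative measurable $\phi: Y\to\R$ with $\phi\in \Lps[1](Y,\nu)$. Pick an increasing sequence $\{\phi_n\}$ of non-negative simple functions with $\phi_n\uparrow \phi$ pointwise. Then, by monotone convergence applied to $\nu$, $\int_Y \phi_n(y)\,f\dg(y)(B_X)\,d\nu \uparrow \int_Y \phi(y)\,f\dg(y)(B_X)\,d\nu$. For the other side I would apply monotone convergence twice: first for the inner integrals $\int_Y \phi_n\,df(x)\uparrow \int_Y\phi\,df(x)$ pointwise in $x$, and then for the outer integral against $\one_{B_X}\,d\mu$. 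The integrability of $\phi\klcirc f$ with respect to $\mu$ needed to justify this is exactly the content of Theorem~\ref{thm:chgVarKrn}, so the inner integrals are finite $\mu$-a.e. and the monotonicity is preserved; this identifies the two limits and establishes the identity in the non-negative case.

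Finally, for a general $\phi\in \Lps[1](Y,\nu)$, write $\phi=\phi^+-\phi^-$ with both parts non-negative and $\nu$-integrable, apply the previous step to each, and subtract. Since $f\dg(y)(B_X)\le 1$ the right-hand integrals are finite, and by Theorem~\ref{thm:chgVarKrn} so are the left-hand ones, so the subtraction is legitimate. I do not anticipate a serious obstacle here; the only mildly delicate point is tracking $\mu$-a.e.\ finiteness of $\int_Y \phi\,df(x)$ so that Fubini/MCT manipulations are justified, and this is exactly what Theorem~\ref{thm:chgVarKrn} provides.
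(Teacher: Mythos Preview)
Your proposal is correct and follows essentially the same route as the paper: verify the identity on indicators via Eq.~\eqref{eq:disintegrationEqu}, extend by linearity to simple functions, and pass to the limit using the MCT. The paper's proof is terser (it does not spell out the positive/negative decomposition or invoke Theorem~\ref{thm:chgVarKrn} explicitly), but the underlying argument is identical.
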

\begin{proof}
We start by showing the equation on characteristic functions. If $B_Y$ is measurable in $Y$, we have
\begin{align*}
\int_{B_X} \left(\int_Y \one_{B_Y} df(x)\right)~d\mu & =\int_{B_X} f(x)(B_Y)~d\mu \\
& = \int_Y \one_{B_Y}(y) f\dg(y)(B_X)~d\nu & \text{Eq. (\ref{eq:disintegrationEqu})}
\end{align*}
Since $\phi$ is measurable and integrable, there exists a sequence $\phi_n \uparrow \phi$ of simple functions such that $\lim_n \int_Y\phi_n~d\nu<\infty$, and the results follows by the linearity of integration and the MCT. 
\end{proof}

\noindent We can now prove the naturality of $\rn$. Let $f:(X,\mu)\klar(Y,\nu)$ be a morphism in $\Krn$; we have on the one hand
\begin{align*}
\rn_{(Y,\mu)}\circ \abscont(f)(\rho)(y)&=\rn_{(Y,\nu)}(\int_X f(x)(-)~d\rho)(y)\\
&=\frac{d \int_X f(x)(-)d\rho}{d\nu}(y) & (*)
\end{align*}
and on the other
\begin{align*}
\Lpf[1](f\dg)\circ \rn_{(X,\mu)}(\rho)(y)&=\Lpf(f\dg)\left(\frac{d\rho}{d\mu}\right)(y)\\
&=\int_X \frac{d\rho}{d\mu}~df\dg(y) & (**)
\end{align*}
To show the equality of these two maps in $\Lps[1](Y,\nu)$ it is enough to show that they are equal $\nu$-a.e. To see this, we show that $(**)$ satisfies the condition to be the Radon-Nikodym derivative $(*)$. Let $B_Y$ be a measurable subset of $Y$. We have from the well-known property of Radon-Nikodym derivatives:
\[
\int_{B_Y} \frac{d \int_X f(x)(-)~d\rho}{d\nu} d\nu=\int_{x\in X} f(x)(B_Y)~d\rho
\]
Moreover, we have
\begin{align*}
\int_{B_Y}\int_X \frac{d\rho}{d\mu}~df\dg(y)~d\nu & \stackrel{(1)}{=}\int_{x\in X} \frac{d\rho}{d\mu}(x) f(x)(B_Y)~d\mu \\
&\stackrel{(2)}{=}\int_{x\in X} f(x)(B)~d\rho
\end{align*}
where $(1)$ is by Lemma \ref{lem:rnHelp} and $(2)$ is a well-known property of Radon-Nikodym derivatives.
\end{myProof}

\begin{myProof}{Theorem \ref{thm:measRepNat}}
We start with the following elementary lemma.
\begin{lemma}\label{lem:drHelp}
If $\psi,\phi\in \Lps[1](X,\mu)$ then
\[
\int_X \psi\phi~d\mu=\int_X \psi~d(\dr_{(M,\mu)}\phi)
\]
\end{lemma}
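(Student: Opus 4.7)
The plan is to prove the identity by the standard measure-theoretic bootstrapping argument: verify it on indicators, extend by linearity to simple functions, extend to non-negative measurables by monotone convergence, and finally split an arbitrary $\psi$ into positive and negative parts.

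First I would treat the base case $\psi = \one_{B_X}$ for $B_X\subseteq X$ measurable. By the very definition of $\dr_{(X,\mu)}\phi$ we have
\[
\int_X \one_{B_X}\, d(\dr_{(X,\mu)}\phi) = (\dr_{(X,\mu)}\phi)(B_X) = \int_{B_X}\phi\,d\mu = \int_X \one_{B_X}\phi\, d\mu,
\]
which is exactly the identity on indicators. Linearity of the integral (both in its integrand and with respect to a measure) then immediately upgrades this to all simple functions $\psi = \sum_i \alpha_i \one_{B_i}$.

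Next I would handle non-negative $\psi\in \Lps[1](X,\mu)$ by choosing a monotone sequence $\psi_n\uparrow\psi$ of non-negative simple functions. The previous step gives
\[
\int_X \psi_n \phi^+\,d\mu = \int_X \psi_n\, d(\dr_{(X,\mu)}\phi^+) \quad\text{and}\quad \int_X \psi_n \phi^-\,d\mu = \int_X \psi_n\, d(\dr_{(X,\mu)}\phi^-),
\]
where we first do the argument separately for $\phi^+$ and $\phi^-$ (which both yield genuine finite positive measures via $\dr$, so that the MCT applies on the right-hand side). Passing to the limit via the MCT on each side yields the identity for all non-negative $\psi$. Finally, writing $\psi = \psi^+ - \psi^-$ and combining with the already-established decomposition for $\phi$, the identity follows in full generality by bilinearity in $(\psi,\phi)$.

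I do not expect any serious obstacle here: the only mild subtlety is justifying the MCT step on the right-hand side, which requires noting that $\dr_{(X,\mu)}\phi^\pm$ are genuine (positive) finite measures on $X$ because $\phi\in\Lps[1](X,\mu)$, and confirming that the pairing $\psi\phi$ is $\mu$-integrable under the hypothesis (equivalently, that $\psi$ is $\dr_{(X,\mu)}|\phi|$-integrable). Both are routine consequences of the definition of $\dr$ and the linearity/positivity properties of the Lebesgue integral already invoked in this paper.
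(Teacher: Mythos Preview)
Your proposal is correct and follows essentially the same approach as the paper: verify the identity on indicators $\psi=\one_{B_X}$ directly from the definition of $\dr$, then extend by linearity and the MCT. You are in fact more careful than the paper, which glosses over the need to split $\phi$ into positive and negative parts before invoking the MCT.
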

\begin{proof}
\noindent The proof of naturality now follows easily: it is enough to show the equality in the case where $\psi=\one_{B_X}$ for a measurable subset $B_X$ of $X$, and the result then extends to all measurable functions by linearity of integrals and the MCT. We have
\begin{align*}
\int_X \one_{B_X}\phi~d\mu=\int_{B_X}\phi d\mu&:= \dr_{(M,\mu)}(\phi)(B_X)\\
&=\int_X \one_{B_X}~d(\dr_{(M,\mu)}(\phi))
\end{align*}
\end{proof}

\noindent To show naturality we now let $f:(X,\mu)\klar (Y,\nu)$ be a $\Krn$-morphism, $\phi\in \Lps[1](X,\mu)$ and $B_Y$ measurable in $Y$
\begin{align*}
&\dr_{(Y,\nu)} \Lpf[1](f\dg)(\phi)(B_X)\\
&=\dr_{(Y,\nu)}( \phi\klcirc f\dg)(B_Y)\\
&=\int_{B_Y} \phi\klcirc f\dg~d\nu\\
&=\int_{B_Y} \int_X \phi df\dg(y)~d\nu\\
&= \int_X  f(x)(B_Y) \phi(x)~d\mu &\text{Lemma \ref{lem:rnHelp}} \\
&=\int_X f(x)(B_Y)~d(\d(\dr_{(X,\mu)}(\phi))&\text{Lemma \ref{lem:drHelp}}\\
&=\abscont f\circ  \dr_{(X,\mu)}(\phi)(B_Y)
\end{align*}
\end{myProof}

\begin{myProof}{Theorem \ref{thm:RieszRepNatural}}
Again, we start with a simple but helpful Lemma.
\begin{lemma}\label{lem:mrHelp}
Let $F\in (\Lpfop[\infty](X,\mu)\sod$ and $\phi\in \Lpfop[\infty](X,\mu)$, then
\[
F(\phi)=\int_X \phi~d(\mr_{(X,\mu)}(F))
\]
\end{lemma}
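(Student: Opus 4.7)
The plan is to prove the identity by the standard measure-theoretic ``indicator functions, then simple functions, then bounded functions'' chain of approximations, with the crucial ingredient at the final step being the $\sigma$-order-continuity of $F$ (which allows us to pass to the limit on the left-hand side) matched with the monotone convergence theorem on the right-hand side.

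First, for $\phi = \one_B$ with $B$ measurable, the identity is immediate from the definition of $\mr_{(X,\mu)}$:
\[
F(\one_B) = \mr_{(X,\mu)}(F)(B) = \int_X \one_B~d(\mr_{(X,\mu)}(F)).
\]
By linearity of $F$ and of the Lebesgue integral, the identity extends at once to all simple functions $\phi = \sum_{i=1}^k \alpha_i \one_{B_i}$.

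Next, reduce to the positive case by splitting $\phi = \phi^+ - \phi^-$, where both $\phi^\pm$ lie in $\Lps[\infty](X,\mu)$ since $\phi$ does. For a positive $\phi \in \Lps[\infty](X,\mu)$, choose an increasing sequence of nonnegative simple functions $\phi_n \uparrow \phi$ pointwise (and hence in $\Lps[\infty](X,\mu)$, since the sequence is dominated by $\phi$, so it converges in order in the sense of Section~\ref{sec:BL}). Applying $\sigma$-order continuity of $F$ to $\phi_n \uparrow \phi$ gives $F(\phi_n) \uparrow F(\phi)$. On the right-hand side, the monotone convergence theorem yields
\[
\int_X \phi_n~d(\mr_{(X,\mu)}(F)) \;\uparrow\; \int_X \phi~d(\mr_{(X,\mu)}(F)).
\]
Since the identity is already established for each simple $\phi_n$, passing to the limit on both sides concludes the positive case, and linearity finishes the proof for general $\phi$.

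The only substantive step is the approximation by simple functions for bounded $\phi$; once that is in place, the two sides of the equation are matched by two different ``continuity'' properties — $\sigma$-order continuity of $F$ on the left, and the MCT on the right — both of which apply precisely because the approximation is monotone. No obstacle is expected beyond keeping track of positive/negative parts.
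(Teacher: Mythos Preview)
Your proof is correct and follows the same route as the paper's: indicator functions by definition of $\mr_{(X,\mu)}$, simple functions by linearity, then general $\phi$ via monotone approximation. You are in fact more explicit than the paper, which simply invokes ``the MCT'' for the last step; your observation that one needs $\sigma$-order continuity of $F$ on the left-hand side (to match the MCT on the right) is exactly the point, and the splitting into positive and negative parts is a sensible precaution the paper leaves implicit.
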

\begin{proof}
Starting with characteristic functions, let $\phi=\one_B$ for some measurable subset $B$ of $X$. We then have
\[
F(\one_B):=\mr_{(X,\mu)}(F)(B)=\int_X \one_B~d(\mr_{(X,\mu)}(F))
\]
We can then extend the result to simple functions by linearity and then to all functions in $\Lps[\infty](X,\mu)$ by the MCT.
\end{proof}

\noindent To show naturality we now let $f:(X,\mu)\klar  (Y,\nu)$ be a $\Krn$-morphism, $F\in (\Lpfop[\infty](X,\mu))\sod$ and $B_Y$ measurable in $Y$. We have
\begin{align*}
\abscont f\circ \mr_{(X,\mu)}(F)(B_Y)&= \int_X f(x)(B_Y) ~d(\mr_{(X,\mu)}(F))\\
&=F(f(\cdot)(B_Y)) \qquad \text{Lemma \ref{lem:mrHelp}}\\
&=F(\int_X \one_{B_Y} df(\cdot))\\
&=F(\one_{B_Y}\klcirc f)\\
&=\mr_{(Y,\nu)}(F(-\klcirc f))(B_Y)\\
&=\mr_{(Y,\nu)} \circ (\Lpf[1]f)\sod (F)(B_Y)
\end{align*}
\end{myProof}

\begin{myProof}{Theorem \ref{thm:frNatural}}
We start by showing that $\fr$ is well defined.
The linearity of $\fr_{(X,\mu)}$ is easily checked on simple functions and extended by the CMT. Positivity is also immediate. For the $\sigma$-order continuity, let $\mu_m\uparrow\mu$, $\phi\in \Lps[\infty](X,\mu)$, and $\phi_n\uparrow \phi$ be a monotone approximation of $\phi$ by simple functions. We need to show that 
\[
\lim_{m\to\infty}\int_X \phi ~d\mu_m =\int_X \phi ~d\mu
\]
For note first that the doubly indexed series $\int_X \phi_n d\mu_m$ is monotonically increasing in $m$, since the $\mu_m$ are monotonically increasing. Note also that the differences 
\[
d_{mn}:=\int_X \phi_n~d\mu_{m+1}-\int_X \phi_n~d\mu_{m}
\]
are monotonically increasing in $n$. Indeed we have
\begin{align*}
&\left(\ssint{X} \phi_{n+1}~d\mu_{m+1}-\ssint{X} \phi_{n+1}~d\mu_{m}\right)-\left(\ssint{X} \phi_{n}d\mu_{m+1}-\ssint{X} \phi_{n}~d\mu_{m}\right)\\
&=\int_X (\phi_{n+1}-\phi_n)~d\mu_{m+1}-\int_X(\phi_{n+1}-\phi_n)~d\mu_{m}>0
\end{align*}
since the sequences $\phi_n$ and $\mu_m$ are monotonically increasing. Since $d_{mn}$ is monotonically increasing in $n$ we can apply the CMT to $d_{mn}$ seen as a function of $m$ w.r.t. the counting measure, i.e.\@ 
\[
\lim_{n\to\infty} \sum_{m}^\infty~d_{mn}=\sum_{m}^\infty \lim_{n\to\infty}~d_{mn}
\]
which is to say, by taking partial sums
\begin{align*}
\lim_{n\to\infty} \lim_{m\to\infty}\sum_{k=1}^m d_{kn}&=\lim_{n\to\infty}\lim_{m\to\infty}\int_X\phi_n~d\mu_m \\
&=\lim_{m\to\infty} \sum_{k=1}^m \lim_{n\to\infty}d_{kn}\\
&=\lim_{m\to\infty} \lim_{n\to\infty}\int_X\phi_n~d\mu_m
\end{align*}
which concludes the proof that $\fr$ is well-defined.

We now prove naturality.
Let $f:(X,\mu)\klar (Y,\nu)$ be a $\Krn$-morphism, $\rho\in \abscont[\mu](X)$ and $\phi\in \Lpfop[\infty](Y,\nu)$ we then have
\begin{align*}
\fr_{(Y,\nu)}\circ \abscont f(\rho)(\phi)&=\int_Y \phi~d(\abscont f(\rho)(\phi))\\
&=\int_Y (\phi\klcirc f)~d\rho \qquad \text{Theorem }\ref{thm:chgVarKrn}\\
&= \Lpfop[\infty](f) \left(\int_Y (-)~d\rho\right)(\phi)\\
&= \Lpfop[\infty] (f) \circ \fr_{(X,\mu)}(\rho)(\phi)
\end{align*}
\end{myProof}

\begin{myProof}{Theorem \ref{thm:firstIso}}
The fact that $\rn_{(X,\mu)}$ and $\dr_{(X,\mu)}$ are inverse of each other is just a restatement of the two well-known equalities for Radon-Nikodym derivatives:
\[
\frac{d\int_- \phi ~d\mu}{d\mu}=\phi\hspace{1ex}\text{ and }\hspace{1ex}\int_{B_X} \frac{d\rho}{d\mu}~d\mu=\rho(B_X)
\]
\end{myProof}

\begin{myProof}{Theorem \ref{thm:secondIso}}
Let $(X,\mu)$ be a $\Krn$-object, let $F\in (\Lps[\infty](X,\mu))\sod$ and let $\phi\in \Lps[\infty](X,\mu)$. We have
\[
\fr_{(X,\mu)}\circ \mr_{(X,\mu)}(F)(\phi)=\int_X \phi~d(\mr_{(X,\mu)}(F))=F(\phi)
\]
where the last equality follows from Lemma \ref{lem:mrHelp}.
Similarly, we have
\[
\mr_{(X,\mu)}\circ \fr_{(X,\mu)}(\rho)(B_X)=\fr_{(X,\mu)}(\rho)(\one_{B_X)}=\int_X \one_{B_X}d\rho=\rho(B_X)
\]
\end{myProof}

\begin{myProof}{Theorem \ref{thm:LpLq1}}
The case $p=1$ has been treated already, for the case of $1<p<\infty$, see for example the proof of Theorem 4.4.1 of \cite{bogachev1}. Finally for the case of $p=\infty$, see Proposition 3.3 of \cite{ampba}. 
\end{myProof}

\begin{myProof}{Proposition \ref{prop:approxEndo}}
Note in \eqref{eq:approxEndo} that we disintegrate $p$ with respect to two different measures. For notational clarity let us define the endo-kernels
\begin{align*}
a&:= p\dg_\mu \klcirc p\\
b& :=p\dg_\nu \klcirc p
\end{align*}
The kernel $a$ associates to each $x\in X$ in a fibre $p\inv(\{i\})$ the measure $\pi_\mu\dg(i)$ supported by this fibre. In particular it is constant on each fibre, and similarly for $b$. We can now compute:
\begin{align*}
&f^p(x)(A)\\
&:=b\klcirc f\klcirc a(x)(A)\\
&=\int_{y\in X}b(y)(A)~d(f\klcirc a)(x)\\
&\stackrel{(1)}{=}\sum_{i\in X'} \int_{y\in p\inv(i)}b(y)(A)~d(f\klcirc a)(x) \\
&\stackrel{(2)}{=}\sum_{i\in X'} p_\nu\dg(i)(A\cap p\inv(i)) f\klcirc a(x)(p\inv(i))\\
&=\sum_{i\in X'} p_\nu\dg(i)(A\cap p\inv(i)) \int_{y\in X} f(y)(p\inv(i)) ~da(x)\\
&\stackrel{(3)}{=}\sum_{i\in X'} p_\nu\dg(i)(A\cap p\inv(i)) \int_{y\in p\inv(p(x))} f(y)(p\inv(i)) ~dp_\mu\dg(p(x))
\end{align*}
where $(1)$ follows by decomposing $X$ in fibres, $(2)$ is because $b(y)(A)$ is constant on fibres, and $(3)$ uses the fact that $a(x):=p_\mu\dg(p(x))$ is supported on the fibre of $p(x)$. We can considerably simplify the expression above. Note first that by definition of the disintegration
\begin{align}
\nu(A\cap p\inv(i))=p\dg_\nu(i)(A\cap p\inv(i))\nu(p\inv(i)) \label{eq:simplifying1}
\end{align}
Similarly, by definition of the disintegration, for $i,j\in X'$
\begin{align}
&\int_{p\inv(j)}f(x)(p\inv(i))~d\mu\nonumber  \\
& = \int_{X} 1_{p\inv(j)}(x)f(x)(p\inv(i))~d\mu\nonumber \\
&= \sum_{k\in X'} \mu(p\inv(k))\int_{p\inv(k)}1_{p\inv(j)}(x)f(x)(p\inv(i))~dp\dg_\mu(k) \nonumber  \\
&=\mu(p\inv(j))\int_{p\inv(j)}f(x)(p\inv(i))~dp\dg_\mu(j)  \label{eq:simplifying2}
\end{align}
where the last step uses the fact that $p\dg_\mu(k)$ is supported by the fibre over $k$.
By multiplying the LHS of \eqref{eq:simplifying1}, \eqref{eq:simplifying2} we get
\begin{align}
& \nu(A\cap p\inv(i))\int_{p\inv(j)}f(x)(p\inv(i))d\mu \nonumber \\
& = \int_X f(x)(A\cap p\inv(i)) d\mu \int_{p\inv(j)}f(x)(p\inv(i))~d\mu \nonumber \\
&= \int_{p\inv(j)} \hspace{-1em}f(x)(A\cap p\inv(i))~d\mu \int_X f(x)(p\inv(i))~d\mu \nonumber \\
&= \mu(p\inv(j))\int_{p\inv(j)} \hspace{-1em}f(x)(A\cap p\inv(i))~dp\dg_\mu(j) \ssint{X} f(x)(p\inv(i))~d\mu\nonumber\\
&=\mu(p\inv(j))\nu(p\inv(A))\int_{p\inv(j)} f(x)(A\cap p\inv(i))~dp\dg_\mu(j) 
 \label{eq:simplifying3}
\end{align}
It now follows from \eqref{eq:simplifying1}, \eqref{eq:simplifying2}, and \eqref{eq:simplifying3} that
\begin{align*}
&\mu(p\inv(j))\nu(p\inv(A))\int_{p\inv(j)}\hspace{-1em} f(x)(A\cap p\inv(i)) ~dp\dg_\mu(j)  \\
&=p\dg_\nu(i)(A\cap p\inv(i))\nu(p\inv(i)) \mu(p\inv(j))\int_{p\inv(j)}\hspace{-1em}f(x)(p\inv(i))~dp\dg_\mu(j) 
\end{align*}
Which simplifies to
\begin{align}
&\int_{p\inv(j)}\hspace{-1em} f(x)(A\cap p\inv(i))~dp\dg_\mu(j) \nonumber \\
&=p\dg_\nu(i)(A\cap p\inv(i))\int_{p\inv(j)}f(x)(p\inv(i))~dp\dg_\mu(j) \label{eq:simplified}
\end{align}
We can now use \eqref{eq:simplified} to get
\begin{align}
&b\klcirc f\klcirc a(x)(A)\nonumber \\
&=\sum_{i\in X'} \int_{p\inv(\pi(x))} f(x)(A\cap p\inv(i))~dp\dg_\mu(p(x)) \nonumber\\
&=\int_{p\inv(p(x))}f(x)(A)~dp\dg_\mu(p(x))=f\klcirc a(x)(A)
\end{align}
\end{myProof}

\begin{myProof}{Lemma \ref{lem:Levy}}
The map $f(-)(A): X\to \R$ defines a random variable, and the discretization scheme defines a filtration $\sigma(p_n)\subseteq \sigma(p_{n+1})$ whose union is $\mathcal{B}_X$. Following Lemma $\ref{lem:disintegrationCondExp}$ and Proposition \ref{prop:approxEndo} we have
\[
f^n(x)(A):=f(x)(A)\klcirc p_n\dg\klcirc p_n = \EXP{f(x)(A)\mid \sigma(p_n)}
\]
We thus have a sequence $f^n(-)(A)$ of random variables $X\to \R$ which is adapted to the filtration $\sigma(p_n), n\in\N$ by construction. We can now compute for any $m<n$
\begin{align*}
&\EXP{f^n(x)(A)\mid \sigma(p_m)}\\
=&f(x)(A)\klcirc p_n\dg\klcirc p_n\klcirc p_m \dg\klcirc p_m\\
\stackrel{(1)}{=}&f(x)(A)\klcirc p_n\dg\klcirc p_n\klcirc (p_{nm}\klcirc p_n)\dg\klcirc p_m\\
\stackrel{(2)}{=}&f(x)(A)\klcirc p_n\dg\klcirc p_n\klcirc p_n\dg \klcirc p_{nm}\dg\klcirc p_m\\
\stackrel{(3)}{=}&f(x)(A)\klcirc p_m\dg\klcirc p_m=f^m(x)(A)
\end{align*}
where $(1)$ is by definition \eqref{def:discretization}, $(2)$ is by Thm \eqref{thm:KrnDagger} and $(3)$ is by Theorem \eqref{thm:Disintegration}.
We have thus shown that $f^n(-)(A)$ is a martingale for the filtration generated by the discretization scheme, and the result now follows from L\'{e}vy's upward convergence Theorem (\cite[Th. 14.2]{williams1991probability}) since $f(x)(A)=\EXP{f(x)(A)\mid \sigma\left(\bigcup_n \sigma(p_n)\right)}$.
\end{myProof}

\begin{myProof}{Theorem \ref{thm:ApproxConv}}
Let $(B_n)_{n\in\N}$ be a countable basis for the Borel $\sigma$-algebra of $X$, which we assume w.l.o.g. is closed under finite unions and intersections. It follows from Lemma \ref{lem:Levy} that for each $B_n$,  $\lim_k f^{k}(x)(B_n)\sseq f(x)(B_n)$ for all $x\in X\setminus N_n$  where $\mu(N_n)=0$. It follows that for every $x\in X\setminus \bigcup_i N_i$
\[
\lim_{k\to\infty} f^k(x)(B_n)\sseq f(x)(B_n)
\]
for all basic Borel sets $B_n$, and $\mu(\bigcup_i N_i)=0$. Now we use the $\pi-\lambda$-lemma with $(B_n)_{n\in\N}$ as our $\pi$-system. We define
\[
\mathcal{L}:=\{C\mid f^n(x)(C)\to f(x)(C) \text{ for all }x\in X\setminus \cup_i N_i\}
\]
and show that it is a $\lambda$-system. Clearly each $B_n\in \mathcal{L}$. Suppose $C\in \mathcal{L}$, it is then immediate that $C^c\in \mathcal{L}$. Now consider a sequence $C_i\in \mathcal{L}$ with $C_i\subseteq C_{i+1}$, and let $C_\infty:=\cup_{i=1}^\infty C_i$. We want to show that
\begin{align*}
\lim_n f^n(x)(C_\infty)&=\lim_n\lim_m f^n(x)(C_m)\\
&\stackrel{(*)}{=}\lim_m \lim_n f^n(x)(C_m)\\
&=\lim_m f(x)(C_m)\\
&=f(x)(C_\infty)
\end{align*}
where $(*)$ is the only step we need to justify. To show that the iterated limits can be switched, note first that since 
\begin{align*}
&|f^n(x)(C_m)-f(x)(C_\infty)|\\
&=|f^n(x)(C_m)-f(x)(C_m)+f(x)(C_m)-f(x)(C_\infty)|\\
&\leq |f^n(x)(C_m)-f(x)(C_m)|+|f(x)(C_m)-f(x)(C_\infty)|
\end{align*}
since the two terms converge separately, for any $\epsilon>0$ we can find $N>0$ s. th. for all $m,n\geq N$, $|f^n(x)(C_m)-f(x)(C_\infty)|<\nicefrac{\epsilon}{2}+\nicefrac{\epsilon}{2}=\epsilon$. Thus $\lim_{(m,n)\to\infty}f^n(x)(C_m)=f(x)(C_\infty)$. 

Now note also that for all $m_0\in \N$ the sequence $f^n(x)(C_{m_0})$ converges to $f(x)(C_{m_0})$ (by definition of $\mathcal{L}$), and it is not hard to see that $f(x)(C_{m})$ converges to $f(x)(C_\infty)$. Conversely for all $n_0\in\N$, the sequence $f^{n_0}(x)(C_m)$ converges to $f^{n_0}(C_\infty)$ (by virtue of $f^{n_0}(x)$ being a measure). For $\nicefrac{\epsilon}{2}>0$ we can find $N>0$ such that for all $m,n>N$, $|f^n(x)(C_m)-f(x)(C_\infty)|<\nicefrac{\epsilon}{2}$. We can also find $M>0$ such that for all $m>M$, $|f^{n}(x)(C_m)-f^n(x)(C_\infty)|<\nicefrac{\epsilon}{2}$. By taking the maximum of $N$ and $M$ it is clear that for all $m,n$ above this maximum
\begin{align*}
&|f^n(x)(C_\infty)-f(x)(C_\infty)|\\
&\leq |f^n(x)(C_\infty)-f^n(x)(C_m)|+|f^n(x)(C_m)-f(x)(C_\infty)|<\epsilon
\end{align*}
We have thus shown that
\begin{align*}
\lim_{m\to\infty} \lim_{n\to\infty} f^n(x)(C_m)&=\lim_{(m,n)\to\infty}f^n(x)(C_m)\\
&=f(x)(C_\infty)=\lim_{n\to\infty} \lim_{m\to\infty} f^n(x)(C_m).
\end{align*}
Thus $\mathcal{L}$ is a $\lambda$-system, and it follows from the $\pi-\lambda$-lemma that $\sigma((B_n)_{n\in\N})\subseteq \mathcal{L}$ which concludes the proof of pointwise almost everywhere convergence.

For the proof of $\Lps[1]$-convergence we start by showing that
\begin{equation}\label{eq:thm:ApproxConv}
\lim_{n\to \infty}\Norm{\Lpfop f^n(\one_A) - \Lpfop (\one_A)}_1=0
\end{equation}
for any Borel subset $A$. For this we use exactly the same reasoning as above. The only difference is that we need to check that 
\[
\lim_{n\to\infty} \int_X \vert f^n(x)(C_\infty)-f(x)(C_\infty)\vert~d\mu =0
\]
For this we use the fact that we have just shown $f^n(x)(C_\infty)\to f(x)(C_\infty)$ pointwise almost everywhere, and that  $\vert f^n(x)(C_\infty)-f(x)(C_\infty)\vert\leq 1$ with $1$ $\mu$-integrable. It follows by dominated convergence that 
\begin{align*}
&\lim_{n\to\infty} \int_X \vert f^n(x)(C_\infty)-f(x)(C_\infty)\vert~d\mu \\
&=\int_X \lim_{n\to\infty} \vert f^n(x)(C_\infty)-f(x)(C_\infty)\vert~d\mu =0
\end{align*}
which concludes the proof of \eqref{eq:thm:ApproxConv}. To extend the result to simple functions and then to arbitrary functions $\phi\in \Lps[1](X,\nu)$ is routine. 
\end{myProof}

%\begin{myProof}{Corollary \ref{cor:ApproxConv}}
%This is a simple consequence of the fact that $\Lpfop[1]$ builds continuous linear operators: we have
%\begin{align*}
%&\Norm{\Lpfop(q_n\dg\klcirc q_n)(\Lpfop f^n (\phi))-\Lpfop(q_n\dg\klcirc q_n)(\Lpfop f (\phi))}_1\\
%=&\Norm{\Lpfop(q_n\dg\klcirc q_n)(\Lpfop f^n (\phi)-\Lpfop f (\phi))}_1
%\end{align*}
%which clearly converges to 0 and $n\to\infty$.
%\end{myProof}
%

\begin{myProof}{Lemma \ref{lem:tensorCont}}
Let $f^n,f: (X_1,\mu_1)\klar (Y_1,\nu_1), g^n,g: (X_2,\mu_2)\klar (Y_2,\nu_2)$, we need to show that for any $\phi \in\Lps[1](Y_1\times Y_2, \nu_1\otimes \nu_2) $
\begin{align}
\int_{X_1\times X_2} \big\vert & \ssint{Y_1\times Y_2}  \phi(y_1,y_2) d(f^n\otimes g^n)(x_1,x_2)- \label{eq:lem:tensorCont:step1}\\
 &\ssint{Y_1\times Y_2}  \phi(y_1,y_2) d(f\otimes g)(x_1,x_2)\big\vert ~d\mu_1\otimes d\mu_2\to 0\nonumber
\end{align}
To show this it is enough to show that 
\[
\ssint{Y_1\times Y_2} \hspace{-3ex} \phi(y_1,y_2)~d(f^n\otimes g^n)(x_1,x_2)\to  \hspace{-3pt} \ssint{Y_1\times Y_2} \hspace{-3ex} \phi(y_1,y_2)~d(f\otimes g)(x_1,x_2)
\] 
pointwise almost everywhere and that
\begin{align}
\sup_n \Norm{\int_{Y_1\times Y_2}  \phi(y_1,y_2) d(f^n\otimes g^n)(x_1,x_2)}_1<\infty\label{eq:lem:tensorCont:step2}
\end{align}
Since in these circumstances pointwise convergence almost everywhere implies $\Lps[1]$-convergence. 

To show \eqref{eq:lem:tensorCont:step1} we proceed as usual: we start with simple functions and use monotone convergence. Let us first consider any measurable $A\subseteq Y_1\times Y_2$, then
\begin{align*}
&\ssint{(y_1,y_2)\in Y_1\times Y_2}\one_A(y_1,y_2)~d(f^n\otimes g^n)(x_1,x_2)\\
=&\ssint{y_1\in Y_1}g^n(x_2)(\{y_2\mid(y_1,y_2)\in A\})~df^n(x_1)\\
\stackrel{(1)}{\to}&\ssint{y_1\in Y_1}g(x_2)(\{y_2\mid(y_1,y_2)\in A\}) ~df^n(x_1)\\
=&\ssint{(y_1,y_2)\in Y_1\times Y_2}\one_A(y_1,y_2)~d(f\otimes g)(x_1,x_2)
\end{align*}
where $(1)$ is by assumption on $f^n,g^n$, Theorem \ref{thm:ApproxConv} and dominated convergence. The result extends completely straightforwardly to all simple functions. Finally for an arbitrary $\phi \in\Lps[1](Y_1\times Y_2, \nu_1\otimes \nu_2) $ we construct a monotone approximating sequence of simple functions $s_i\uparrow \phi$ and use the usual $3\epsilon$ argument to conclude.

To show \eqref{eq:lem:tensorCont:step2} we suppose w.l.o.g. that $\phi(x,y)\geq 0$ and compute using Fubini and Lemma \ref{lem:nonExpansive}
\begin{align*}
&\Norm{\int_{Y_1\times Y_2}  \phi(y_1,y_2)~d(f^n\otimes g^n)(x_1,x_2)}_1\\
=&\ssint{X_1}\ssint{X_2}\ssint{Y_1}\ssint{Y_2}\phi(y_1,y_2)~dg^n(x_2)~df^n(x_1) ~d\mu_2~d\mu_1\\
=&\ssint{X_1}\ssint{Y_1}\ssint{X_2}\ssint{Y_2}\phi(y_1,y_2)~dg^n(x_2)~d\mu_2 ~df^n(x_1)~d\mu_1\\
\leq &\ssint{X_1}\ssint{Y_1}\ssint{X_2}\ssint{Y_2}\phi(y_1,y_2)~dg(x_2)~d\mu_2 ~df^n(x_1)~d\mu_1\\
=&\ssint{X_2}\ssint{Y_2}\ssint{X_1}\ssint{Y_1}\phi(y_1,y_2)~df^n(x_1)~d\mu_1 ~dg(x_2)~d\mu_2\\
\leq & \ssint{X_2}\ssint{Y_2}\ssint{X_1}\ssint{Y_1}\phi(y_1,y_2)~dg(x_2)~df(x_1) ~d\mu_1~d\mu_2\\
=&\Norm{\int_{Y_1\times Y_2}  \phi(y_1,y_2) ~d(f\otimes g)(x_1,x_2)}_1<\infty
\end{align*}
The last step is by definition of $\nu_1,\nu_2$. 
\end{myProof}

\begin{myProof}{Theorem \ref{thm:KleeneStarConv}}
By definition of $(-)^*$ we need to show that $f^n_\infty\SOT f_\infty$ and that post-composing with $\Giry\bigcup$ is continuous for the SOT. In fact, we can shown both by proving that composition is jointly continuous for the SOT  and operators in the image of $\Lpfop$. Note that in general composition is \emph{not} jointly continuous for the SOT, but stochastic operators form a bounded set of operators and on these composition \emph{is} jointly continuous in the SOT. Assume $g^n\SOT g$ for kernels $g^n,g: (X,\mu)\klar (Y,\nu)$ and $h^n\SOT h$ for kernels $h^n,h: (Y,\nu)\klar (Z,\rho)$. We then have for any $\phi\in \Lps[1](Z,\rho)$:
\begin{align*}
&\Norm{\Lpfop[1] g^n(\Lpfop[1] h^n(\phi))-\Lpfop[1] g(\Lpfop[1] h(\phi))}\\
&=\Norm{(\Lpfop[1] g^n-\Lpfop[1] g)(\Lpfop[1] h(\phi))+(\Lpfop[1] g^n(\Lpfop[1] h^n(\phi)-\Lpfop[1] h(\phi)))}\\
&\leq \Norm{(\Lpfop[1] g^n-\Lpfop[1] g)(\Lpfop[1] h(\phi))}+\Norm{(\Lpfop[1] g^n(\Lpfop[1] h^n(\phi)-\Lpfop[1] h(\phi)))}\\
&\leq \Norm{\Lpfop[1] g^n-\Lpfop[1] g)(\Lpfop[1](h(\phi))}+\Norm{\Lpfop[1] g^n}\Norm{\Lpfop[1] h^n(\phi)-\Lpfop[1] h(\phi)} 
\end{align*}
where the last step follows from H\"{o}lder's inequality and the fact that $\Lpfop[1] g^n$, being a stochastic operator, is bounded.

Continuity under post-composition by $\Giry\bigcup$ now follows easily. For the continuity of the $(-)_\infty$ construction, we first work inductively on the construction of the maps $a_k$ defined in \eqref{eq:inftyDef1} and \eqref{eq:inftyDef2}. We denote by $a_k^n$ the kernels generated by $f^n$ and $a_k$ those generated by $f$, $k\in \N$. By definition \eqref{eq:inftyDef1} $a_1^n=\eta\otimes f^n\klcirc \Delta_1$ and it follows from Lemma \ref{lem:tensorCont} and SOT continuity of composition that $a_1^n\SOT a_1$. Now assuming that $a_{k-1}^n\SOT a_{k-1}$, the exact same argument show that $a_{k}^n\SOT a_{k}$. Finally, the joint continuity of composition gives
\begin{align}\label{eq:thm:KleeneStarConv}
b_k^n:=a_k^n\klcirc \ldots\klcirc a_1^n\SOT b_k:=a_k\klcirc\ldots\klcirc a_1
\end{align}
for every $k\in \N$.

Having shown that the maps $b_k^n$ defining $f^n_\infty$ converge to the maps $b_k$ defining $f_\infty$ we now show that $\Norm{\Lpfop f^n_\infty(\phi)-\Lpfop f_\infty(\phi)}_1$ converges to 0 for any $\phi\in \Lps[1]((2^H)^\omega,\mu^\omega)$. As usual we start with simple functions: let $\phi=\sum_i \alpha_i \one_{B_i}$ be a simple function. For any $B_i$ we have
\[
B_i\subseteq \prod_{k}^\infty \pi_k[B_i]\Rightarrow\mu^\infty(B_i)\leq \prod_k^\infty \mu(\pi_k)
\]
and thus there can only exist finitely many indices for which the corresponding projection of $B_i$ has not got full measure. Since the simple function is a finite sum, this means that $\phi$ is defined by measurable sets with non-trivial measure on only a finite set of coordinates. Let $N$ be the largest of these coordinates, we then have:
\begin{align*}
&\Norm{\Lpfop f^n_\infty (\phi)-Lpfop f_\infty (\phi)}_1\\
=&\ssint{2^H} \big\vert \ssint{(2^H)^\omega} \phi~ df^n_\infty(x) - \ssint{(2^H)^\omega} \phi ~df_\infty(x) \big\vert d\mu\\
=&\ssint{2^H} \big\vert \ssint{(2^H)^N} \phi ~d b_N^n(x) - \ssint{(2^H)^N} \phi ~db(x) \big\vert ~d\mu
\end{align*}
and the result follows from \eqref{eq:thm:KleeneStarConv}. Extending to arbitrary maps is completely straightforward.
\end{myProof}

\end{document}